\newif\ifarxiv
\newcommand{\KwTo}{\textbf{to}\xspace}
\newcommand{\sys}{Quartz\xspace}
\newcommand{\regen}{\textproc{RepGen}\xspace}
\newcommand{\qiskit}{Qiskit\xspace}
\newcommand{\quilc}{Quilc\xspace}
\newcommand{\nam}{Nam et al.\xspace}
\newcommand{\tket}{t$|$ket$\rangle$\xspace}
\newcommand{\voqc}{\textproc{voqc}\xspace}
\newcommand{\ecc}{ECC\xspace}
\newcommand{\eccs}{ECCs\xspace}
\newcommand{\eccfy}{\Call{Eccify}}
\newcommand{\ccs}{ECC set\xspace}
\newcommand{\ccss}{ECC sets\xspace}
\newcommand{\ngates}{n\xspace}
\newcommand{\er}[1]{\mbox{\rm\em #1}}
\newcommand{\m}[1]{\mathcal{#1}}
\newcommand{\sem}[1]{\llbracket {#1} \rrbracket}
\newcommand{\precede}{\prec}
\newcommand{\nq}{$(n,q)$}
\newcommand{\jq}{$(j,q)$}
\newcommand{\jminusoneq}{$(j{-}1,q)$}
\newcommand{\zeroq}{$(0,q)$}
\newcommand{\oneq}{$(1,q)$}
\newcommand{\cnqm}{\m{C}^{(n,q,m)}}
\newcommand{\cnq}{\m{C}^{(n,q)}}
\newcommand{\coneq}{\m{C}^{(1,q)}}
\newcommand{\emptyseq}{()}
\newcommand{\dropfirst}[1]{\Call{DropFirst}{#1}}
\newcommand{\droplast}[1]{\Call{DropLast}{#1}}
\newcommand{\tablehl}[1]{\hl{\textbf{#1}}}
\newcommand{\uscore}{\mbox{\tt\char`\_}}
\newcommand{\ZJ}[1]{{\textcolor{red}{ZJ: #1}}}
\newcommand{\TODO}[1]{{\textcolor{red}{TODO: #1}}}
\newcommand{\MX}[1]{{\textcolor{cyan}{MX: #1}}}
\newcommand{\ur}[1]{{\textcolor{purple}{UA: #1}}}
\newcommand{\Alex}[1]{{\textcolor{blue}{AA: #1}}}
\newcommand{\oded}[1]{{\textcolor{olive}{OP: #1}}}
\newcommand{\captionvspace}{0em}
\newcommand{\pptscale}{0.4} %
\newcommand{\evalfigfrac}{0.99} %
\newcommand{\introfigfrac}{1} %
\newcommand{\commentout}[1]{\ifdim\lastskip>0pt\ignorespaces\fi}
  \renewcommand{\ZJ}[1]{\ifdim\lastskip>0pt\ignorespaces\fi}
 \renewcommand{\TODO}[1]{\ifdim\lastskip>0pt\ignorespaces\fi}
 \renewcommand{\MX}[1]{\ifdim\lastskip>0pt\ignorespaces\fi}
 \renewcommand{\ur}[1]{\ifdim\lastskip>0pt\ignorespaces\fi}
 \renewcommand{\Alex}[1]{\ifdim\lastskip>0pt\ignorespaces\fi}
 \renewcommand{\oded}[1]{\ifdim\lastskip>0pt\ignorespaces\fi}
\newtheorem{definition}{Definition}
\newtheorem{theorem}{Theorem}
\newcommand{\uone}{\texttt{$U_1$}\xspace}
\newcommand{\utwo}{\texttt{$U_2$}\xspace}
\newcommand{\uthree}{\texttt{$U_3$}\xspace}
\newcommand{\cnot}{\texttt{$CNOT$}\xspace}
\newcommand{\rz}{\texttt{$R_z(\lambda)$}\xspace}
\newcommand{\cz}{\texttt{$CZ$}\xspace}
\begin{document}

\ifarxiv
\title{\sys: Superoptimization of Quantum Circuits (Extended Version)} %
\titlenote{This is the extended version of a paper presented in PLDI 2022 \cite{pldi2022-quartz}.
This version includes an additional appendix with detailed results.} %
\else
\title{\sys: Superoptimization of Quantum Circuits} %
\fi

\author{Mingkuan Xu}
\affiliation{\institution{Carnegie Mellon University}
\city{Pittsburgh}
\state{PA}
\country{USA}}
\email{mingkuan@cmu.edu}

\author{Zikun Li}
\affiliation{\institution{University of California, Los Angeles}
\city{Los Angeles}
\state{CA}
\country{USA}}
\email{lizikun_@outlook.com}

\author{Oded Padon}
\affiliation{\institution{VMware Research}
\city{Palo Alto}
\state{CA}
\country{USA}}
\email{oded.padon@gmail.com}

\author{Sina Lin}
\affiliation{\institution{Microsoft}
\city{Mountain View}
\state{CA}
\country{USA}}
\email{silin@microsoft.com}

\author{Jessica Pointing}
\affiliation{\institution{University of Oxford}
\city{Oxford}
\country{United Kingdom}}
\email{jessica.pointing@physics.ox.ac.uk}

\author{Auguste Hirth}
\affiliation{\institution{University of California, Los Angeles}
\city{Los Angeles}
\state{CA}
\country{USA}}
\email{ahirth@g.ucla.edu}

\author{Henry Ma}
\affiliation{\institution{University of California, Los Angeles}
\city{Los Angeles}
\state{CA}
\country{USA}}
\email{hetm@g.ucla.edu}

\author{Jens Palsberg}
\affiliation{\institution{University of California, Los Angeles}
\city{Los Angeles}
\state{CA}
\country{USA}}
\email{palsberg@cs.ucla.edu}

\author{Alex Aiken}
\affiliation{\institution{Stanford University}
\city{Stanford}
\state{CA}
\country{USA}}
\email{aiken@cs.stanford.edu}

\author{Umut A. Acar}
\affiliation{\institution{Carnegie Mellon University}
\city{Pittsburgh}
\state{PA}
\country{USA}}
\email{umut@cs.cmu.edu}

\author{Zhihao Jia}
\affiliation{\institution{Carnegie Mellon University}
\city{Pittsburgh}
\state{PA}
\country{USA}}
\email{zhihao@cmu.edu}

\renewcommand{\shortauthors}{M. Xu et al.}
\begin{abstract}
Existing quantum compilers optimize quantum circuits by applying circuit transformations designed by experts.
This approach requires significant manual effort to design and implement circuit transformations for different quantum devices, which use different gate sets, and can miss  optimizations that are hard to find manually.
We propose \sys, a quantum circuit superoptimizer that {\em automatically} generates and verifies circuit transformations for arbitrary quantum gate sets.
For a given gate set, \sys generates candidate circuit transformations by systematically exploring small circuits and verifies the discovered transformations using an automated theorem prover.
To optimize a quantum circuit, \sys uses a cost-based backtracking search that applies the verified transformations to the circuit. %
Our evaluation on three popular gate sets shows that \sys can effectively generate and verify transformations for different gate sets. The generated transformations cover manually designed transformations used by existing optimizers and also include new transformations.  
\sys is therefore able to optimize a broad range of circuits for diverse gate sets, outperforming or matching the performance of hand-tuned circuit optimizers.

\end{abstract}

\begin{CCSXML}
<ccs2012>
<concept>
<concept_id>10011007.10011006.10011041</concept_id>
<concept_desc>Software and its engineering~Compilers</concept_desc>
<concept_significance>500</concept_significance>
</concept>
<concept>
<concept_id>10010583.10010786.10010813.10011726</concept_id>
<concept_desc>Hardware~Quantum computation</concept_desc>
<concept_significance>500</concept_significance>
</concept>
</ccs2012>
\end{CCSXML}

\ccsdesc[500]{Software and its engineering~Compilers}
\ccsdesc[500]{Hardware~Quantum computation}

\keywords{quantum computing, superoptimization}  %

\maketitle

\ifarxiv
\newpage %
\fi

\section{Introduction}
\label{sec:intro}

Quantum computing comes in many shapes and forms.
There are over a dozen proposals for realizing quantum computing in practice, and nearly all these proposals support different kinds of quantum operations, i.e., instruction set architectures (ISAs).
The increasing diversity in quantum processors makes it challenging to design optimizing compilers for quantum programs, since the compilers must consider a variety of ISAs and carry optimizations specific to different ISAs.

To reduce the execution cost of a quantum circuit, the most common form of optimization is {\em circuit transformations} that substitute a subcircuit matching a specific pattern with a functionally equivalent new subcircuit with improved performance (e.g., using fewer quantum gates).
Existing quantum compilers generally rely on circuit transformations manually designed by experts and applied greedily.
For example, Qiskit~\cite{qiskit} and \tket~\cite{tket} use greedy rule-based strategies to optimize a quantum circuit and perform circuit transformations whenever applicable.
\voqc~\cite{VOQC} formally verifies circuit transformations but still requires users manually specify them.
Although rule-based transformations can reduce the cost of a quantum
circuit, they have two key limitations.

First, because existing optimizers rely on domain experts to design
transformations, they require significant human effort and may also miss subtle optimizations that are
hard to discover manually, resulting in sub-optimal performance.

Second, circuit transformations designed for one quantum device do not directly apply to other devices with different ISAs,
which is problematic in the emerging diverse quantum computing landscape.
For example, IBMQX5~\cite{dumitrescu2018cloud} supports the \uone, \utwo, \uthree and \cnot gates, while Rigetti Agave~\cite{rigetti_agave} supports the
$R_x(\pm\frac{\pi}{2})$, $R_x(\pi)$, \rz, and \cz gates.
As a result, circuit transformations tailored for IBMQX5 cannot optimize circuits on Rigetti Agave, and vice versa.

Recently, Quanto~\cite{quanto} proposed to automatically discover transformations by computing concrete matrix representations of circuits.
Its main restriction is that it does not discover symbolic transformations, which are needed to deal with common \emph{parametric} quantum gates in a general way.

This paper presents \sys, a quantum circuit superoptimizer that automatically generates and verifies symbolic circuit transformations for arbitrary gate sets, including parametric gates.
\sys provides two key advantages over existing quantum circuit optimizers. 
First, for a given set of gates, \sys generates symbolic
circuit transformations and formally verifies their correctness in a
fully {\em automated} way, without any manual effort to design or implement transformations.
Second, \sys explores a more comprehensive set
of circuit transformations by discovering {\em all} possible
transformations up to a certain size, outperforming existing
optimizers with manually designed transformations.

\paragraph{ECC sets.}
We introduce {\em equivalent circuit classes} (\eccs) as a compact way to represent circuit transformations. Each \ecc is a set of functionally equivalent circuits, and two circuits from an \ecc form a valid transformation.
We say that a transformation is \emph{subsumed} by an {\em \ccs} (a set of \eccs) if the transformation can be decomposed into a sequence of transformations, each of which is 
a pair of circuits from the same \ecc in the \ccs.
We use {\em \nq-completeness} to assess the comprehensiveness of an \ccs---an \ccs is \nq-complete if it subsumes {\em all} valid transformations between circuits with at most $n$ gates and $q$ qubits.

\begin{figure}
    \centering
    \includegraphics[width=\introfigfrac\linewidth]{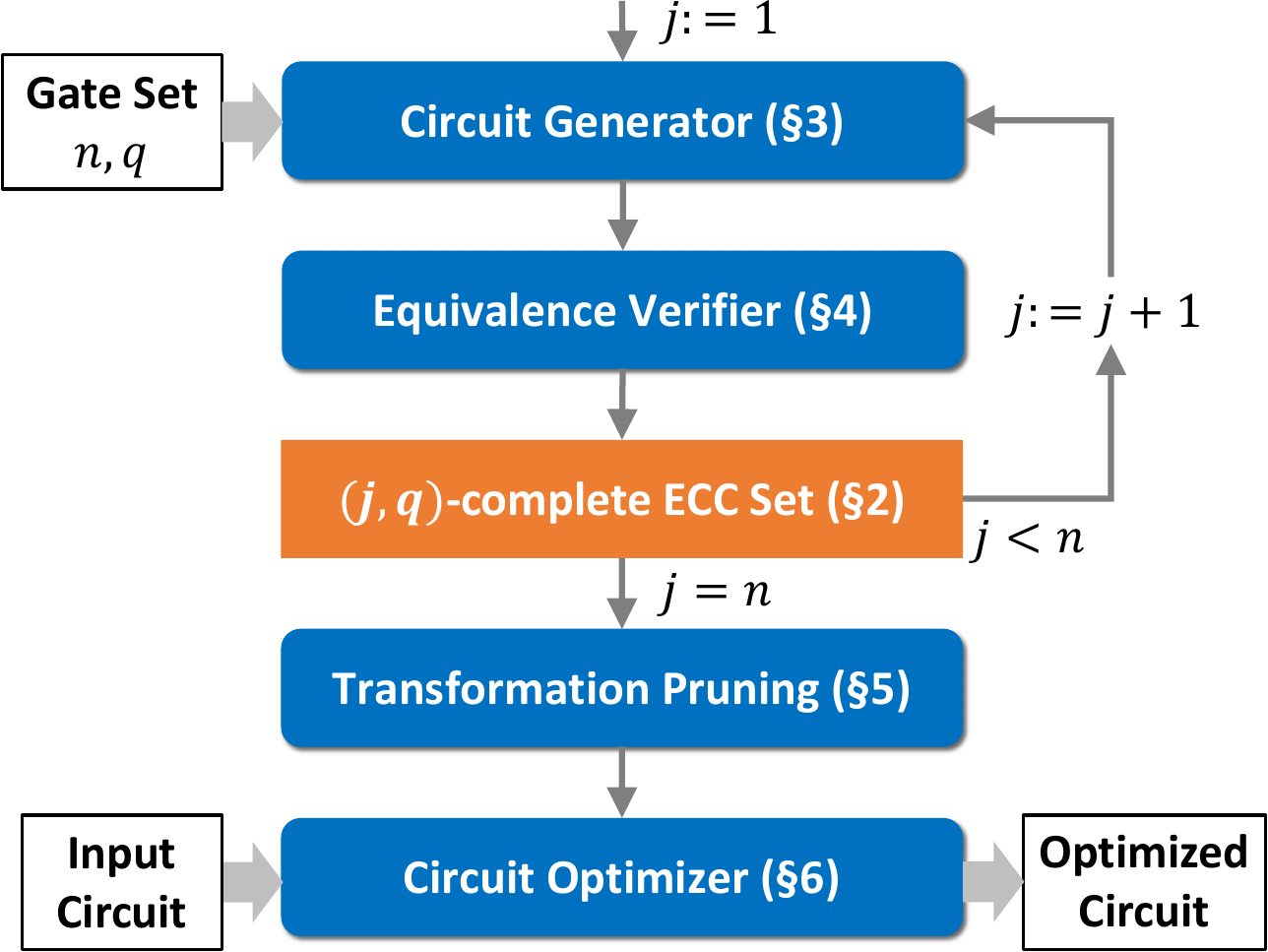}
    \vspace{\captionvspace}
    \caption{\sys overview.
    }
    \label{fig:overview}
\end{figure}

\begin{sloppypar}
\paragraph{Overview.}
\Cref{fig:overview} shows an overview of \sys, which uses an {\em interleaving} approach: it iteratively generates candidate circuits, eliminates redundancy, and verifies equivalences.
In the $j$-th iteration, \sys generates a \jq-complete \ccs based on the \jminusoneq-complete \ccs  from the previous iteration.
The generated \ccs may contain redundant transformations. %
We introduce \regen{}, a representative-based circuit generation algorithm 
that uses a \jminusoneq-complete \ccs
to generate circuits for a \jq-complete \ccs with fewer redundancies.
The circuits are sent to the {\em circuit equivalence verifier},
which formally verifies equivalence between circuits
and produces a \jq-complete \ccs.
After generating an \nq-complete \ccs,
\sys employs several pruning techniques to further eliminate redundancies. %
Finally, \sys's {\em circuit optimizer} applies the discovered transformations to optimize an input circuit.
\end{sloppypar}

\paragraph{Circuit Generator}
Given a gate set and a circuit size $n$,
\sys's {\em circuit generator} generates candidate circuits of size at most $n$ using the \regen{} algorithm, which
avoids generating all possible circuits (of which there are exponentially many)
while ensuring \nq-completeness.
To this end,
\regen{} iteratively constructs \ccss, from
smaller to larger. %
For each \ecc, \regen{} selects a \emph{representative circuit} and
constructs larger circuits by extending these representatives.

To discover equivalences between circuits, \regen{} uses random inputs to assign a {\em fingerprint} (i.e., a hash) to each circuit and checks only the circuits with the same
fingerprint.
We prove an upper bound on the running time of \regen{}
in terms of the number of representatives generated.
For the gate sets considered in our evaluation, \regen{} reduces the number of circuits in an \ccs by one to three orders of magnitudes while maintaining \nq-completeness.

\begin{sloppypar}
\paragraph{Circuit Equivalence Verifier}
\sys's {\em circuit equivalence verifier} checks if two potentially equivalent circuits are indeed functionally equivalent.
A major challenge is dealing with gates that take one or multiple parameters (e.g., \uone, \utwo, and \uthree in IBMQX5, and $R_z$ in Rigetti Agave).
For candidate equivalent circuits, \sys checks whether they are functionally equivalent for %
arbitrary combinations of parameter assignments and quantum states.
To this end, \sys computes symbolic matrix representations of the circuits.
The resulting verification problem involves trigonometric functions and, in the general case, a quantifier alternation; 
\sys soundly eliminates both and reduces circuit equivalence checking to SMT solving
for quantifier-free formulas over the theory of nonlinear real arithmetic.
The resulting SMT queries are efficiently solved by the Z3~\cite{z3} SMT solver.
\end{sloppypar}

\paragraph{Circuit Pruning}
Having generated an $(n, q)$-complete \ccs, \sys optimizes
circuits by applying the transformations specified by the \ccs.
To improve the efficiency of this optimization step, described next,
\sys applies several pruning techniques to eliminate redundant transformations.

\paragraph{Circuit Optimizer}
\sys's {\em circuit optimizer} uses a cost-based backtracking search algorithm adapted from TASO~\cite{jia2019taso} to apply the verified transformations. 
The search is guided by a cost model that compares the performance of different candidate circuits (in our experiments the cost is given by number of gates).
\sys targets the \emph{logical optimization} stage in quantum circuit compilation. That is, \sys operates before {\em qubit mapping} where logical qubits are mapped to physical qubits while respecting hardware constraints~\cite{dhj+magic-2018,wdd+tilt-2021}.

\paragraph{Evaluation}
Our evaluation on three gate sets derived from existing quantum processors shows that \sys can generate and verify circuit transformations for different gate sets in under 30 minutes (using 128 cores).
For logical circuit optimization,
\sys matches and often outperforms existing optimizers.
On a benchmark of 26 circuits, \sys obtains average gate count reductions of 29\%,
30\%, and 49\% for the Nam, IBM, and Rigetti gate sets; 
the corresponding reductions by existing optimizers are 27\%, 23\%, and 39\%.

\section{Symbolic Quantum Circuits}
\label{sec:ecc}

To support parametric gates, \sys introduces \emph{symbolic} quantum circuits and circuit transformations. The latter are represented compactly using \emph{equivalent circuit classes} (\eccs).
This section introduces these concepts.

\paragraph{Quantum circuits.}

Quantum programs are represented as {\em quantum circuits}~\cite{Nielsen00}, as shown in \Cref{fig:conventional_circuit},
where each horizontal wire represents a {\em qubit},
and boxes on these wires represent {\em quantum gates}.
The semantics of a quantum circuit over $q$ qubits is given by a $2^q \times 2^q$ unitary complex matrix.
This matrix can be computed from matrices of individual gates
in a compositional manner, using matrix multiplications (for sequential composition of subcircuits that operate on the same qubits)
and tensor products (for parallel composition of subcircuits that operate on different qubits).
For example, the matrix for the circuit of \Cref{fig:conventional_circuit}
is
$(CNOT \otimes I) \cdot (U_2(\frac{\pi}{2},\pi) \otimes CNOT) \cdot (U_1(-\pi) \otimes H \otimes H)$.

A circuit $C'$ is a \emph{subcircuit} of $C$ if, for some qubit permutation,
the matrix computation for $C$ can be structured as
$\uscore \cdot (M_{C'} \otimes I \otimes \cdots \otimes I) \cdot \uscore$,
where $M_{C'}$ is the matrix for $C'$.
For example, the green box in \Cref{fig:conventional_circuit} highlights a subcircuit,
while the red dashed area is not a subcircuit.
The subcircuit notion is invariant under qubit permutation; e.g.,
the $X$ and $U_1$ gates in \Cref{fig:conventional_circuit} also form a subcircuit.
A circuit's matrix is invariant under 
replacing one subcircuit with another that has the same matrix (but possibly different gates),
which underpins peephole optimization for quantum circuits.

Many gates supported by modern quantum devices take real-valued parameters.
For example, the IBM quantum device supports the $U_1$ gate which takes one parameter and \nolinebreak rotates a qubit about the $x$-axis (on the Bloch sphere), and the $U_2$ gate which takes two parameters for rotating about the $x$- and $z$-axes.
The matrix representations of $U_1$ and $U_2$ are:
\begin{equation}
\label{eqn:u1}
U_1(\theta) = 
\begin{pmatrix}
1 & 0\\
0 & e^{i\theta}
\end{pmatrix}
\quad
U_2(\phi,\lambda) = \frac{1}{\sqrt{2}}
\begin{pmatrix}
1 & -e^{i\lambda}\\
e^{i\phi} & e^{i(\phi+\lambda)}
\end{pmatrix}
\end{equation}

\begin{figure}
    \centering
    \begin{subfigure}{0.48\linewidth} 
    \centering
    \includegraphics[scale=\pptscale]{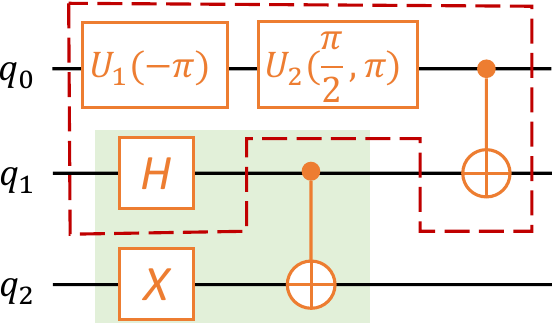}
    \caption{Quantum circuit.}
    \label{fig:conventional_circuit}
    \end{subfigure}
    \hfill
    \begin{subfigure}{0.48\linewidth}
    \centering
    \vspace{4pt} %
    \includegraphics[scale=\pptscale]{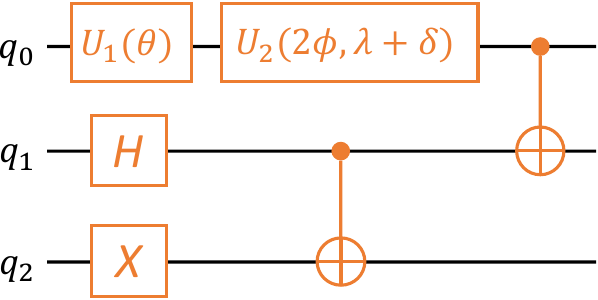}
    \caption{Symbolic Quantum circuit.}
    \label{fig:our_circuit}
    \end{subfigure}
    \\
    \vspace{1em}
    \begin{subfigure}{\linewidth}
    \centering
    \includegraphics[scale=\pptscale]{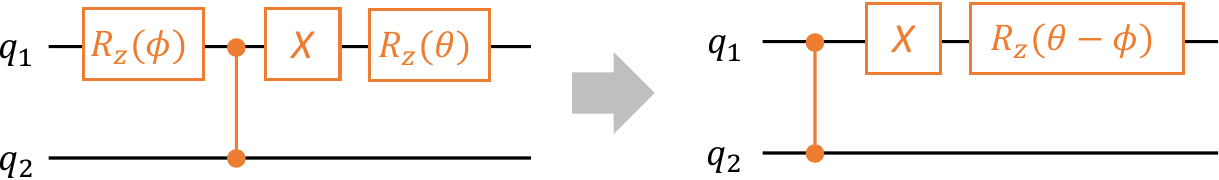}
    \caption{Transformation fusing two $R_z$ gates across $CZ$ and $X$ gates.} 
    \label{fig:transformation_example}
    \end{subfigure}
    \caption{Quantum circuits and transformations.
    The green highlighted box in (a) is a subcircuit, while the red dashed area is not;
    the $U_1(-\pi)$ and $X$ gates also form a subcircuit.
    }
    \label{fig:circuit}
\end{figure}

\paragraph{Symbolic circuits.}
To support superoptimization of circuits with parametric gates,
\sys discovers transformations between \emph{symbolic quantum circuits}, as shown in 
\Cref{fig:our_circuit}, which include (symbolic) parameters ($\theta$, $\phi$, $\lambda$, $\delta$, etc.) and arithmetic operations on these parameters, and are formalized below.
Using such circuits,
\sys can represent transformations such as the one illustrated in \Cref{fig:transformation_example}.

The semantics of a \emph{symbolic quantum circuit}, 
denoted $\sem{\cdot}$, has type 
$\sem{C} \colon \mathbb{R}^{m} \rightarrow \mathbb{C}^{2^q\times 2^q}$
where $C$ is a circuit over $m$ (symbolic) parameters and $q$ qubits.
For a vector of parameter values  $\vec{p}\in\mathbb{R}^m$, 
$\sem{C}(\vec{p})$ is a
$2^q \times 2^q$ unitary complex matrix representing a (concrete) quantum circuit over $q$ qubits.
For example, \cref{eqn:u1} can be seen as defining the semantics of $U_1$ and $U_2$ as single-gate symbolic quantum circuits.
The semantics of a multi-gate symbolic circuit (e.g., \Cref{fig:our_circuit}) 
is derived from that of single-gate circuits using matrix multiplications and tensor products
exactly as for concrete circuits.
Henceforth, we use \emph{circuits} to mean symbolic quantum circuits.

\paragraph{Circuit equivalence and transformations.}
In quantum computing, 
the states $\ket{\psi}$ and $e^{i\beta}\ket{\psi}$
($\beta\in\mathbb{R}$)
are \emph{equivalent up to a global phase},
and from an observational point of view they are identical~\cite{Nielsen00}.
This leads to the following circuit-equivalence definition that underlies \sys's optimization.

\begin{definition}[Circuit Equivalence]
Two symbolic quantum circuits $C_1$ and $C_2$ are \emph{equivalent} if:
\begin{equation}
\label{eqn:equiv}
\forall \vec{p} \in \mathbb{R}^m. \; \exists \beta \in \mathbb{R}. \; 
\sem{C_1}(\vec{p}) = e^{i\beta} \sem{C_2}(\vec{p}).
\end{equation}
\end{definition}
That is, two circuits are equivalent if for every valuation of the parameters they differ only by a phase factor. The phase factor may in some cases be constant, but generally it may be different for different parameter values.
For example, the equivalence between $U_1$ and $R_z$ gates ($U_1(\theta)=e^{i \theta / 2} R_z(\theta)$) requires a parameter-dependent phase factor. %
Crucially for peephole optimization, circuit equivalence is invariant under replacing a subcircuit with an equivalent subcircuit.

A {\em circuit transformation} $(C_T, C_R)$ is a pair of distinct equivalent circuits, where $C_T$ is a {\em target circuit} to be matched with a subcircuit of the circuit being optimized, and $C_R$ is a {\em rewrite circuit} that can replace the target circuit while maintaining equivalence of the optimized circuit and the input circuit.
\Cref{fig:transformation_example} illustrates a circuit transformation. 

\begin{figure}
\centering
\begin{subfigure}{\linewidth}
\centering
\includegraphics[scale=\pptscale]{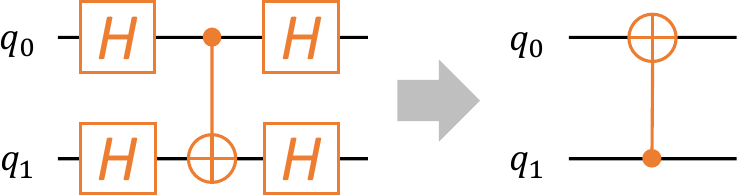}
\caption{Common optimization removing four Hadamard gates.}
\label{fig:hadamard_removing}
\end{subfigure}
\\
\vspace{1em}
\begin{subfigure}{\linewidth}
\centering
\includegraphics[scale=\pptscale]{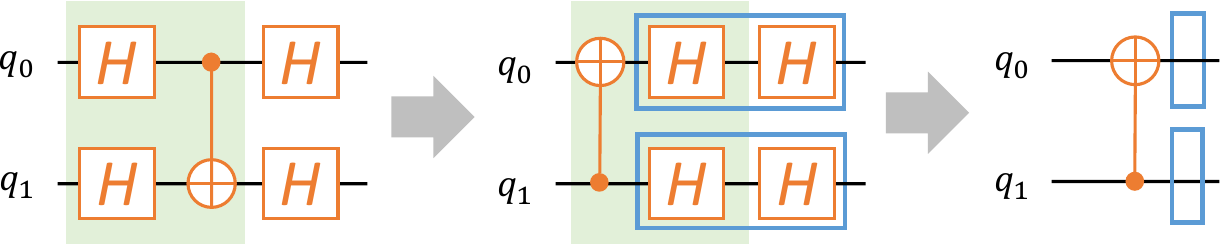}
\caption{Above optimization as a circuit rewriting using transformations.}
\label{fig:hadamard_removing_sequence}
\end{subfigure}
\\
\vspace{1em}
\begin{subfigure}{\linewidth}
\centering
\includegraphics[scale=\pptscale]{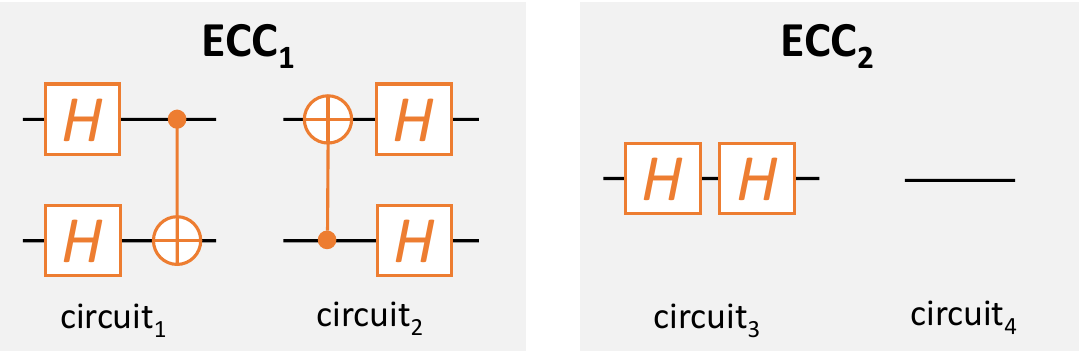}
\caption{An \ccs that includes the transformations used above.}
\label{fig:hadamard_removing_eccs}
\end{subfigure}
\caption{\label{fig:hadamard}%
Illustrating optimization via transformations.
}
\end{figure}

\paragraph{Equivalent Circuit Classes}
\sys uses {\em equivalent circuit classes} (ECCs), to represent many circuit transformations compactly.
An \ecc is a set of equivalent circuits.
A transformation is \emph{included} in an \ecc if both its target and rewrite circuits are in the \ecc. 
\eccs provide a compact representation of circuit transformations: an \ecc with $x$ circuits includes $x(x-1)$ transformations.

A \emph{circuit rewriting} is a sequence of applications of circuit
transformations, which \sys uses for optimization as illustrated in \Cref{fig:hadamard}.
\Cref{fig:hadamard_removing} shows a
common optimization that removes four Hadamard gates (i.e.,
$H$) by flipping a \cnot gate.
\Cref{fig:hadamard_removing_sequence} shows how to perform this
optimization as a circuit rewriting consisting of three (more basic) circuit transformations,
which are instances of the two transformations
specified by the \ccs in \Cref{fig:hadamard_removing_eccs}.

\paragraph{Completeness for \ccss.}
For any number of gates $n$, qubits $q$, and parameters $m$,
we assume a \emph{finite} set $\cnqm$ of %
circuits that can be constructed with at most $n$ gates over $q$ qubits and $m$ parameters. The collection $\{\cnqm\}_{n,q,m\in\mathbb{N}}$ is determined
by the \emph{gate set} $\m{G}$ (finite set of possibly parametric gates) as well as a specification $\Sigma$ of how parameter expressions may be constructed;
e.g., a finite set of arithmetic operations and some bounds on the depth of expressions or the number of times each parameter can be used,
ensuring finiteness of every $\cnqm$.
Henceforth, we fix the gate set $\m{G}$,
the parameter-expression specification $\Sigma$,
and the number of parameters $m$;
and elide $m$ by writing $\cnq$.

A transformation $(C_T, C_R)$ is \emph{subsumed} by an \ccs
if there is a circuit rewriting from $C_T$ to $C_R$ that only uses transformations included in the \ccs.

\begin{definition}[\nq-Completeness]
\label{def:nq}
Given $\m{G}$, $\Sigma$, and $m$ as above,
an \ccs is \emph{\nq-complete} (for $\m{G}$, $\Sigma$, and $m$)
if it subsumes all circuit transformations over $\cnq$.
\end{definition}
An \nq-complete \ccs can be used to rewrite
any two equivalent circuits with at most $n$
gates over $q$ qubits to each other. 
Any \ccs is by default \zeroq-complete because any transformation
involves at least one gate in the target or rewrite circuits.  
An \ccs is \oneq-complete if it subsumes all possible transformations
between single gates.  
\Cref{sec:generator,sec:verifier,sec:pruning} describe our approach for constructing a verified \nq-complete \ccs.

\section{Circuit Generator}
\label{sec:generator}

\sys builds an \nq-complete \ccs using the \regen{} algorithm, developed in this section,
which interleaves circuit generation and equivalence verification (see \Cref{fig:overview}).

\subsection{The \regen{} Algorithm}
\label{sec:repgen}

A straightforward way to generate an $(n, q)$-complete \ccs is to examine
all circuits in $\mathcal{C}^{(n,q)}$, %
but there are exponentially many such circuits.
To tackle this challenge, \regen{} uses {\em representative-based
circuit generation},
which significantly reduces the number of circuits considered.
The key idea is to %
extend a $(j,q)$-complete \ccs to a $(j+1,q)$-complete one by selecting a
representative circuit for each \ecc and using these representatives to build larger circuits.

\begin{algorithm}[t]
\caption{\regen.
}%
\label{alg1}
{
\small
\begin{algorithmic}[1]
\State {\bf Input:} Number of gates $n$ and qubits $q$.
\State {\bf Output:} \nq-complete \ccs. %
\State $\m{D} = \emptyset$ {\em // Hash table of circuit sets indexed by fingerprint} 
\State $\m{D}[\Call{FingerPrint}{\emptyseq}] = \{\emptyseq\}$ {\em// Initialize with empty circuit} \label{alg1:fingerprint}
\State $\m{R}_0 = \{\emptyseq\}$ {\em // Representatives with 0 gates} \label{alg1:initr} %
\State $\m{S}_0 = \emptyset$ {\em // \zeroq-complete \ccs}
\For{$j = 1$ \KwTo $\ngates$}
\State {\em // Step 1: construct circuits (sequences) with $j$ gates}
\For{each circuit $L\in\m{R}_{j-1}$ such that $|L|=j-1$} \label{alg1:enum_circuit}
\For{each gate $g \in \m{G}$} \label{alg1:enum_gate}
\For{each $\iota$ such that $L' = L.(g\ \iota)$ satisfies $\Sigma$}\label{alg1:enum_input}
\If{$\dropfirst{L'} \in \m{R}_{j-1}$} \label{alg1:slice}
\State add $L'$ to $\m{D}[\Call{FingerPrint}{L'}]$ \label{alg1:add_circuit}
\EndIf
\EndFor
\EndFor
\EndFor
\State {\em // Step 2: examine circuits with equal fingerprints} %
\State $\m{S}_j = \bigcup\{\eccfy{\gamma} : \gamma = \m{D}[f] \text{ for some }f\}$ \label{alg1:eccfy}
\State $\m{R}_j = \{\er{min}(\m{E}): \m{E} \in \m{S}_j\}$ // Under $\precede$ (\Cref{def:order}) \label{alg1:rn}
\EndFor
\State \Return{$\{\m{E} \in \m{S}_n : |\m{E}| \ge 2\}$} {\em // Ignore singleton \eccs} \label{alg1:return}
\State
\State {\em // Partition circuits into verified \eccs}
\Function{\eccfy{}}{$\gamma$}
\State $\widehat{\m{S}} = \emptyset$ {\em // Set of \eccs for circuits in $\gamma$}
\For{each circuit $C \in \gamma$}
\If{$C$ is equivalent to a circuit in \ecc $\m{E} \in \widehat{\m{S}}$}
\State $\widehat{\m{S}} = \widehat{\m{S}} \setminus \{\m{E}\} \cup \{\m{E} \cup \{C\}\}$ {\em // Add $C$ to $\m{E}$}
\Else
\State $\widehat{\m{S}} = \widehat{\m{S}} \cup \{\{C\}\}$ {\em // Create a new \ecc for $C$}
\EndIf
\EndFor
\State \Return $\widehat{\m{S}}$
\EndFunction

\end{algorithmic}}
\end{algorithm}

\begin{sloppypar}
\paragraph{Sequence representation for circuits.}
\regen{} represents a circuit as a sequence of instructions that
reflects a topological ordering of its gates (i.e., respecting dependencies).
E.g., a possible sequence for the circuit in \Cref{fig:our_circuit} is:
{\tt%
U1 \nolinebreak $\theta$ \nolinebreak 0;  \nolinebreak
U2 \nolinebreak $2\phi$ \nolinebreak $\lambda{+}\delta$ \nolinebreak 0;  \nolinebreak
H \nolinebreak 1;  \nolinebreak
X \nolinebreak 2;  \nolinebreak
CNOT \nolinebreak 1 \nolinebreak 2;  \nolinebreak
CNOT \nolinebreak 0 \nolinebreak 1%
}.\linebreak
We write $\emptyseq$ for the empty sequence and $L.(g\ \iota)$ for appending gate  $g\in\m{G}$
with arguments $\iota$ (parameter expressions and qubit indices) to sequence $L$; e.g., for the second instruction above $g=\texttt{U2}$ and $\iota = (2\phi, \lambda{+}\delta, 0)$.
Different sequences may represent the same circuit;
e.g. the \texttt{U2} and \texttt{X} instructions above can be swapped.
\regen eliminates some of this representation redundancy by the same mechanism it uses for avoiding redundancy due to circuit equivalence.
\end{sloppypar}

We use $|L|$, $\dropfirst{L}$, and $\droplast{L}$ to denote
the number of gates in a circuit $L$,
its suffix with $|L|-1$ gates, and
its prefix with $|L|-1$ gates.
Note that each of the latter two represents a subcircuit.

We fix an arbitrary total order over single-gate circuits (i.e., $\coneq$),
and lift it to a total order of circuits (i.e., sequences).

\begin{definition}[Circuit Precedence]
We say $L_1$ \emph{precedes} $L_2$, written $L_1 \precede L_2$, if
$|L_1| < |L_2|$, or if $|L_1| = |L_2|$ and $L_1$ is lexicographically smaller than $L_2$.
\label{def:order}
\end{definition}

\Cref{alg1} lists the \regen{} algorithm, which proceeds in rounds
and maintains a database $\m{D}$ of circuits grouped by fingerprints (defined below),
a \jq-complete \ccs $\m{S}_j$,
and a \emph{representative set} (set of representatives) $\m{R}_j$ 
for ECCs in $\m{S}_j$.
The $j$-th round produces a $(j, q)$-complete \ccs from the $(j-1, q)$-complete
\ccs generated in the previous round.
Each round proceeds in two steps.

\paragraph{Step 1: Constructing circuits.}
Before the first round, the initial \ccs is $\m{S}_0 =
\emptyset$ and the representative set is $\m{R}_0 = \{\emptyseq\}$, i.e., a singleton set
consisting of the empty circuit (over $q$ qubits). 
In the $j$-th round, \regen{} uses the \jminusoneq-complete \ccs $\m{S}_{j-1}$
and its representative set $\m{R}_{j-1}$
computed previously, and constructs possible size-$j$ circuits by appending a single
gate to each circuit in $\m{R}_{j-1}$ with size $j-1$.
\regen{} enumerates all possible gates $g$ and arguments $\iota$ according to $\m{G}$ and $\Sigma$. %
For each generated circuit $L'$, 
\regen checks if $\dropfirst{L'}$ is a representative from the previous round.
If so, \regen concludes that $L'$ extends existing
representatives and must be considered in generating $\m{S}_{j}$.
Otherwise, circuit $L'$ is considered redundant and ignored. %
We prove the correctness of \regen{} in \Cref{thm_representative}.

To identify potentially equivalent circuits, 
\regen computes circuit \emph{fingerprints}, and uses them as keys for storing circuits in the hash table $\m{D}$.
The fingerprint is computed using fixed, randomly selected parameter values and quantum states.
Recall that for a circuit $C$ over $q$ qubits and $m$ parameters,
and parameter values $\vec{p}\in\mathbb{R}^m$,
$\sem{C}(\vec{p})$ is a (concrete) $2^q \times 2^q$ complex matrix.
The fingerprint of a circuit $C$ is
\begin{equation}
\label{eqn:fingerprint}
\Call{FingerPrint}{C} = \left| \bra{\psi_0} \sem{C}(\vec{p_0}) \ket{\psi_1} \right|,
\end{equation}
where the parameter values $\vec{p_0}$ and quantum states $\ket{\psi_0}$ and $\ket{\psi_1}$ are fixed and randomly selected,
and $|\cdot|$ denotes modulus of a complex number.
With infinite precision, \cref{eqn:equiv,eqn:fingerprint} ensure that  equivalent circuits
have identical fingerprints.
This section presents and analyzes \regen{} assuming infinite precision, while \Cref{sec:impldetails} presents an adaptation for finite-precision floating-point arithmetic.

\paragraph{Step 2: Examining circuits with equal fingerprints.}
In Step 1, \regen generates circuits and stores them in the hash table $\m{D}$
grouped by their fingerprints.
In Step 2, \regen{} partitions each set
$\gamma = \m{D}[f]$ of potentially equivalent circuits into a verified \ccs 
using the function \eccfy{}. %
\eccfy{} considers each circuit in $\gamma$ and checks if it is equivalent to some existing \ecc in $\gamma$ by querying the verifier (Section~\ref{sec:verifier});
the circuit is then either added to the matching \ecc, or becomes a new singleton \ecc. \regen{} then combines the \ccss for each $\gamma$ to get the \ccs $\m{S}_j$.

Having constructed an \ccs $\m{S}_j$, \regen{} computes
$\m{S}_j$'s representative set $\m{R}_j$, which is the set of
representatives of the \eccs in $\m{S}_j$ (\Cref{alg1} line~\ref{alg1:rn}).
The representative of an \ecc is its
$\precede$-minimum circuit (\Cref{def:order}).
During the operation of \regen{}, singleton \eccs are important: their representatives must be considered when generating circuits in the next round,
and they may grow to non-singleton \eccs as more circuits are generated.
However, singleton \eccs in $\m{S}_n$ ultimately yield no transformations, so we remove them from the result of the \regen{} algorithm (line~\ref{alg1:return}).

\subsection{Correctness of \regen{}}
\label{subsec:representative_pruning}

When constructing circuits of size $j$ (in round $j$),
\regen only considers circuits $L'$ that extend previously constructed representatives,
i.e., $\droplast{L'}\in\m{R}_{j-1}$,
and only when the extension leads to $\dropfirst{L'}\in\m{R}_{j-1}$.
In this section we prove that in spite of that, \regen{} always generates an \nq-complete \ccs.
Below we use $\m{D}_j$ to denote the value of $\m{D}$ after Step 1 in \regen{}'s $j$-th round (i.e., at \Cref{alg1} line~\ref{alg1:eccfy}) and $\m{D}_0$ for the initial value of $\m{D}$ (line~\ref{alg1:fingerprint}).

\begin{lemma}
\label{lem:alg1inv}
\Cref{alg1} maintains the following invariants (for any $1 \leq j \leq n$, and writing $\sqsubseteq$ for Hoare ordering, i.e., $\m{X \sqsubseteq Y} \equiv \forall X \in \m{X}.\, \exists Y\in\m{Y}.\, X \subseteq Y$):
\begin{enumerate}
\item\label{lem:alg1inv:monotone}
$\m{D}_{j-1} \sqsubseteq \m{D}_{j}$,
$\m{S}_{j-1} \sqsubseteq \m{S}_{j}$, and
$\m{R}_{j-1} \subseteq \m{R}_{j}$;
\item\label{lem:alg1inv:d}  for any $L\in\cnq$,
$L\in\bigcup\m{D}_j$ iff $|L| \leq j$ and 
either $L=\emptyseq$ or $\dropfirst{L},\droplast{L}\in\m{R}_{j-1}$; and
\item\label{lem:alg1inv:r} for any $L\in\cnq$, $L\in\m{R}_{j}$ iff $|L| \leq j$ and $L$ does not have a $\precede$-smaller equivalent circuit in $\cnq$.
\end{enumerate}
\end{lemma}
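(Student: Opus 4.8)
The plan is to prove the three invariants simultaneously by strong induction on $j$, since they are mutually dependent: within round $j$ the $\m{D}_j$-characterization feeds the $\m{R}_j$-characterization, and both rely on the $\m{R}$-characterization from earlier rounds. I would first record one global preliminary fact. Because equivalent circuits share a fingerprint and \textsc{Eccify} separates inequivalent circuits within a bucket using the verifier, the \ccs $\m{S}_j$ is exactly the partition of $\bigcup\m{D}_j$ into circuit-equivalence classes (write $\equiv$ for the relation of \cref{eqn:equiv}), so $\m{R}_j$ is the set of $\prec$-minima of those classes. It is then convenient to restate the third invariant as $\m{R}_j = \m{R}^\ast \cap \{L : |L|\le j\}$, where $\m{R}^\ast = \{L\in\cnq : L \text{ has no } \prec\text{-smaller equivalent in } \cnq\}$ is the fixed set of canonical (class-minimum) circuits. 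The base case is $j=0$: $\bigcup\m{D}_0 = \{\emptyseq\}$, $\m{R}_0 = \{\emptyseq\}$, $\m{S}_0=\emptyset$, and since $\emptyseq$ is $\prec$-minimal these match the claimed characterizations.

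For the $\m{D}_j$-characterization, I would note that Step 1 only adds circuits, so $\bigcup\m{D}_j = \bigcup\m{D}_{j-1}\cup N_j$ where, reading off the loops, $N_j = \{L'\in\cnq : |L'|=j,\ \droplast{L'}\in\m{R}_{j-1},\ \dropfirst{L'}\in\m{R}_{j-1}\}$ (the outer loop supplies the size-$(j{-}1)$ prefix $\droplast{L'}\in\m{R}_{j-1}$, and the inner guard supplies the suffix $\dropfirst{L'}\in\m{R}_{j-1}$). For $|L|=j$ this is the claim directly. For $|L|\le j-1$ nothing of that size is added, so $L\in\bigcup\m{D}_j$ iff $L\in\bigcup\m{D}_{j-1}$, and I invoke the induction hypothesis; the only gap is that it phrases the condition with $\m{R}_{j-2}$ rather than $\m{R}_{j-1}$. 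This is closed using the $\m{R}^\ast$ form: for a circuit of size $|L|-1\le j-2$, membership in $\m{R}_{j-2}$ and in $\m{R}_{j-1}$ both reduce to membership in $\m{R}^\ast$, so the two phrasings agree.

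The $\m{R}_j$-characterization is the crux, and its heart is a heredity claim: \emph{if $L\in\cnq$ is canonical, then so are $\droplast{L}$ and $\dropfirst{L}$}. I would prove the contrapositive for the prefix. Given an equivalent $M\prec \droplast{L}$ in $\cnq$, let $(g\ \iota)$ be the last instruction of $L$ and set $M' = M.(g\ \iota)$. Appending the same gate-layer post-composes both semantics by the same parameter-dependent matrix, so $M\equiv\droplast{L}$ upgrades to $M'\equiv L$; and $M\prec\droplast{L}$ upgrades to $M'\prec L$ (if $M$ is strictly shorter then so is $M'$ relative to $L$; if of equal length, the appended identical instruction preserves the lexicographic comparison). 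Thus $M'$ witnesses non-canonicity of $L$. The prepend-the-first-gate argument is symmetric and handles $\dropfirst{L}$. Granting heredity I finish in two directions. ($\Leftarrow$) If $|L|\le j$ and $L$ is canonical, then by heredity $\dropfirst{L},\droplast{L}$ are canonical of size $|L|-1\le j-1$, hence lie in $\m{R}_{j-1}$ by the induction hypothesis; the $\m{D}_j$-characterization then gives $L\in\bigcup\m{D}_j$, and since $L$ is the global class-minimum it is the $\prec$-minimum of its \ecc, so $L\in\m{R}_j$. ($\Rightarrow$) If $L\in\m{R}_j$ then $L\in\bigcup\m{D}_j$ forces $|L|\le j$; were there a $\prec$-smaller equivalent, I may take it to be the canonical $M$, which by the ($\Leftarrow$) direction lies in $\m{R}_j\subseteq\bigcup\m{D}_j$ and hence in $L$'s \ecc, contradicting that $L$ is its $\prec$-minimum.

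Finally, monotonicity follows cheaply: $\m{R}_{j-1}\subseteq\m{R}_j$ is immediate from $\m{R}_j=\m{R}^\ast\cap\{L:|L|\le j\}$; $\m{D}_{j-1}\sqsubseteq\m{D}_j$ holds because Step 1 never deletes, so each bucket only grows; and $\m{S}_{j-1}\sqsubseteq\m{S}_j$ holds because each class of $\bigcup\m{D}_{j-1}$ is contained in the class of the same circuits within the larger set $\bigcup\m{D}_j$. The step I expect to be the main obstacle is the heredity claim—specifically verifying that the witness $M'$ genuinely lies in $\cnq$. The equivalence and precedence upgrades are routine, but membership requires that $\cnq$ be closed under re-attaching the dropped instruction to a differently shaped equivalent circuit; this is clear for per-gate or expression-depth constraints in $\Sigma$, whereas global constraints such as bounding the number of uses of a parameter need extra care, so I would either assume the relevant closure property of $\Sigma$ or constrain the construction of $M'$ to respect it.
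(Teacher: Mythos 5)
Your proof is correct and takes essentially the same route as the paper's: the same induction on $j$, the same reading of Step~1's loops for the $\m{D}_j$-characterization, and the same key step for the $\m{R}_j$-characterization --- your explicit witness $M' = M.(g\ \iota)$ is precisely what the paper compresses into its parenthetical remark that a $\precede$-smaller equivalent of $\dropfirst{L}$ or $\droplast{L}$ can be used to construct a $\precede$-smaller equivalent of $L$. The $\Sigma$-closure caveat you flag (that $M'$ must itself lie in $\cnq$) is a genuine subtlety, but the paper's argument relies on it just as implicitly, so it does not distinguish your proof from the original.
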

\begin{proof}
For \cref{lem:alg1inv:monotone}, 
$\m{D}_{j-1} \sqsubseteq \m{D}_{j}$ and
$\m{S}_{j-1} \sqsubseteq \m{S}_{j}$ follow from the monotonic updating of $\m{D}$ (i.e., circuits are only added), and
from monotonicity (w.r.t. Hoare ordering) of \eccfy{}.
To see that $\m{R}_{j-1} \subseteq \m{R}_{j}$, observe that in the $j$-th round,
all circuits constructed are of size $j$, so any circuit added to an existing \ecc has more gates than its representative in $\m{R}_{j-1}$, which will therefore remain its representative in $\m{R}_{j}$ (recall that if $|L|<|L'|$ then $L \precede L'$).

We prove \cref{lem:alg1inv:d} by induction on $j$.
Both the base case ($j=1$) and the induction step follow from \Cref{alg1} lines~\ref{alg1:enum_circuit}--\ref{alg1:add_circuit}, combined with 
\cref{lem:alg1inv:monotone} and either the definition of $\m{D}_0$ or the induction hypothesis.

We prove \cref{lem:alg1inv:r} by induction on $j$.
Slightly generalizing from the statement above, we take the base case to be $j=0$, which follows from line~\ref{alg1:initr}.
In the induction step, for sufficiency, consider a circuit $L$, $1 \leq |L| \leq j$, with no $\precede$-smaller equivalent circuit.
(The $|L|=0$ case follows from line~\ref{alg1:initr} and \cref{lem:alg1inv:monotone}.)
Both $\dropfirst{L}$ and $\droplast{L}$ are of size $\leq j-1$
with no $\precede$-smaller equivalent circuits (if either had a $\precede$-smaller equivalent circuit, we could use it to construct a $\precede$-smaller equivalent circuit for $L$). By the induction hypothesis, $\dropfirst{L},\droplast{L}\in\m{R}_{j-1}$,
so by \cref{lem:alg1inv:d}, $L \in \bigcup\m{D}_j$. By lines~\ref{alg1:eccfy}--\ref{alg1:rn},
$\m{R}_j$ includes the $\precede$-minimal element of each class of equivalent circuits
in $\bigcup\m{D}$, so it must include $L$.
Necessity follows from sufficiency, combined with the fact that two circuits in $\m{R}_j$ cannot be equivalent and that $\m{R}_j$ does not contain circuits of size greater than $j$.
\end{proof}

\begin{theorem}[\regen]
\label{thm_representative}
In \Cref{alg1}, every $\m{S}_j$ ($0 \leq j \leq n$) is a \jq-complete \ccs, and the algorithm returns an \nq-complete \ccs.
\end{theorem}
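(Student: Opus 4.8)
The plan is to reduce \jq-completeness to a single \emph{canonicalization} property and then read off the theorem from it. For a circuit $C\in\cnq$, write $\er{rep}(C)$ for the $\precede$-minimal circuit equivalent to $C$; by \cref{lem:alg1inv:r}, if $|C|\le j$ then $\er{rep}(C)\in\m{R}_j$. The property I want is: for every $C\in\cjq$ there is a circuit rewriting from $C$ to $\er{rep}(C)$ that uses only transformations included in $\m{S}_j$. Granting this, \jq-completeness follows immediately: given any transformation $(C_T,C_R)$ over $\cjq$, the two circuits are equivalent, so $\er{rep}(C_T)=\er{rep}(C_R)=:C^\ast$; canonicalization supplies rewritings from $C_T$ to $C^\ast$ and from $C_R$ to $C^\ast$, and since a transformation $(A,B)$ included in an \ecc has its reverse $(B,A)$ included as well, the second rewriting can be run backwards to give a rewriting from $C^\ast$ to $C_R$. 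Concatenating yields a rewriting from $C_T$ to $C_R$, so $\m{S}_j$ subsumes $(C_T,C_R)$. Taking $j=n$ shows $\m{S}_n$ is \nq-complete; and because a singleton \ecc includes no transformation, dropping singletons (\Cref{alg1} line~\ref{alg1:return}) does not change the set of subsumed transformations, so the returned \ccs is still \nq-complete. The case $j=0$ is immediate, since $\mathcal{C}^{(0,q)}=\{\emptyseq\}$ admits no transformation and $\m{S}_0=\emptyset$ is \zeroq-complete vacuously.

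To establish canonicalization for a fixed $j\ge 1$, I would use strong induction over $\precede$ (well-founded since $\cnq$ is finite and totally ordered), distinguishing two cases after disposing of the trivial $C=\er{rep}(C)$. If $C\in\bigcup\m{D}_j$, I first argue that $\er{rep}(C)\in\bigcup\m{D}_j$ too: both $\dropfirst{\er{rep}(C)}$ and $\droplast{\er{rep}(C)}$ are $\precede$-minimal in their own classes---otherwise replacing that subcircuit by a $\precede$-smaller equivalent would produce a circuit equivalent to $\er{rep}(C)$ but $\precede$-smaller, contradicting minimality---so by \cref{lem:alg1inv:r} they lie in $\m{R}_{j-1}$, and \cref{lem:alg1inv:d} then places $\er{rep}(C)$ in $\bigcup\m{D}_j$. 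Since $C$ and $\er{rep}(C)$ are equivalent they share a fingerprint, fall into the same bucket $\m{D}[f]$, and are merged by \eccfy{} into one \ecc of $\m{S}_j$ (assuming the verifier of \Cref{sec:verifier} decides equivalence correctly); thus $(C,\er{rep}(C))$ is already a one-step rewriting. If instead $C\notin\bigcup\m{D}_j$, then by \cref{lem:alg1inv:d} at least one of $\droplast{C},\dropfirst{C}$ is not in $\m{R}_{j-1}$, hence (by \cref{lem:alg1inv:r}) has a $\precede$-smaller equivalent. When $\droplast{C}\notin\m{R}_{j-1}$ I apply the induction hypothesis to the prefix $\droplast{C}\precede C$, obtaining a rewriting to $\er{rep}(\droplast{C})$, and \emph{lift} it to $C$ by keeping the final gate $(g\ \iota)$ fixed: each step rewrites a subcircuit of $\droplast{C}$, which is also a subcircuit of $C$. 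The result $C'=\er{rep}(\droplast{C}).(g\ \iota)$ is equivalent to $C$ with $C'\precede C$, so a second appeal to the induction hypothesis rewrites $C'$ to $\er{rep}(C')=\er{rep}(C)$. The case $\dropfirst{C}\notin\m{R}_{j-1}$ is symmetric, lifting a rewriting of the suffix $\dropfirst{C}$ through the leading gate.

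The heart of the argument---and the step I expect to be the main obstacle---is this canonicalization claim, because it is exactly where \regen's aggressive pruning must be shown harmless. Two points carry the weight. First, the subcircuit-minimality observation guarantees that a representative is always reconstructible inside $\m{D}_j$, so nothing needed to reach a canonical form is ever pruned away. Second, the lifting step depends on circuit equivalence being invariant under replacing a subcircuit by an equivalent one, so that an $\m{S}_j$-rewriting of a prefix or suffix transfers verbatim to the enclosing circuit; I would state this invariance explicitly and check that the subcircuit matched at each step of the lifted rewriting genuinely occurs in the larger circuit. The remaining pieces are routine: reversibility of rewritings (from symmetry of \eccs), the vacuity of singleton \eccs, and the observation that intermediate circuits of a rewriting may temporarily exceed $j$ gates, which is harmless because subsumption constrains only the transformations used, not the circuits passed through.
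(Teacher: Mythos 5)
Your proposal is correct, and it takes a recognizably different route from the paper's proof, though both rest on \Cref{lem:alg1inv}. The paper argues by contradiction: it takes the smallest $j$ for which $\m{S}_j$ is not \jq-complete and a minimal pair $(L,L')$ (under a lexicographic lifting of $\precede$) of equivalent circuits that $\m{S}_j$ cannot rewrite into each other; pair-minimality forces $\dropfirst{L'}$ and $\droplast{L'}$ to have no $\precede$-smaller equivalents, so $L'\in\bigcup\m{D}_j$ by \Cref{lem:alg1inv}, and a contradiction follows either way. You instead fix $j$ and prove, with no induction on $j$, the stronger single-circuit canonicalization claim---every $C\in\cjq$ rewrites inside $\m{S}_j$ to its $\precede$-minimal equivalent---by strong induction on $\precede$, and then join two canonicalizations at the common representative. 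The two arguments are dual (minimal counterexample versus strong induction) and turn on the same two facts: a $\precede$-minimal circuit has $\precede$-minimal drops and therefore lands in $\bigcup\m{D}_j$, in the same \ecc as any equivalent circuit of $\bigcup\m{D}_j$; while a circuit with a non-minimal drop can be rewritten $\precede$-smaller by lifting a rewriting of the drop through the remaining gate. Your route buys explicitness where the paper is terse: the lifting step is stated as a proof obligation rather than hidden in the phrase ``$\m{S}_{j}$ can rewrite $L'$ to a $\precede$-smaller equivalent circuit,'' the ordering-of-pairs subtleties disappear, and you explicitly check that discarding singleton \eccs on line~\ref{alg1:return} preserves \nq-completeness, a step the paper's proof does not address. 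The paper's route buys brevity: one contradiction dispatches everything without naming the canonicalization property. Both proofs share the same implicit assumptions (a correct verifier, infinite-precision fingerprints so that equivalent circuits share a bucket of $\m{D}$, and closure of $\cnq$ under replacing a prefix or suffix subcircuit by an equivalent one).
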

\begin{proof}
We proceed using proof by contradiction.
Let $j$ be the smallest such that $\m{S}_j$ is not \jq-complete.
We must have $j>0$, with $\m{S}_{j-1}$ \jminusoneq-complete,
and by \Cref{lem:alg1inv} \cref{lem:alg1inv:monotone} $\m{S}_{j}$ is also \jminusoneq-complete.
(As $\m{S}_{j-1} \sqsubseteq \m{S}_{j}$, $\m{S}_{j}$ only includes more transformations.)
Let $(L,L')$ be the minimal (under the pairwise lexicographic lifting of $\precede$)
pair of equivalent circuits of size $\leq j$
that cannot be rewritten to each other using transformations included in $\m{S}_j$.
We must have $|L'|=j$, since otherwise $|L|,|L'|\leq j-1$, but $\m{S}_{j}$ is \jminusoneq-complete.

If $\dropfirst{L'}$ has a $\precede$-smaller equivalent circuit,
then $\m{S}_{j}$ can rewrite $L'$ to a $\precede$-smaller equivalent circuit,
which it must also not be able to rewrite to $L$, contradicting the minimality of $L'$.
Therefore, $\dropfirst{L'}$ does not have a $\precede$-smaller equivalent circuit;
The same argument works for $\droplast{L'}$.
Therefore, by using \Cref{lem:alg1inv} \cref{lem:alg1inv:r}
we get $\dropfirst{L'},\droplast{L'}\in\m{R}_{j-1}$,
and by \Cref{lem:alg1inv} \cref{lem:alg1inv:d}, $L'\in\m{D}_j$.
But if $L'\in\m{D}_j$ then either $\m{S}_j$ includes a transformation that rewrites
$L'$ to a smaller equivalent circuits, that it cannot rewrite to $L$, contradicting the minimality of $L'$; or $L'$ does not have a $\precede$-smaller equivalent circuit,
contradicting the definition of the pair $(L,L')$.
\end{proof}

\subsection{Complexity of \regen}
\label{subsec:complexity}

We analyze the time complexity of \regen (its space complexity is the same).
First, observe that the number of single-gate circuits $|\coneq|-1$, which is 
determined by the gate set $\m{G}$, parameter-expression specification $\Sigma$, number of qubits $q$, and parameters $m$,
provides an upper bound for the number of single-gate extensions of any existing circuit.
(The $-1$ is due to $\emptyseq\in\coneq.)$
This \emph{characteristic} of $\m{G}$, $\Sigma$, $q$, and $m$, denoted  $\mathrm{ch}(\m{G}, \Sigma, q, m) = |\coneq|-1$,
bounds the number of iterations of the loops in \Cref{alg1} lines~\ref{alg1:enum_gate} and~\ref{alg1:enum_input} in each round.
(The bound may not be tight as $\Sigma$ may impose more restrictions, e.g., single use of parameters.)

While $\sum_{j=0}^{n}\mathrm{ch}(\m{G}, \Sigma, q, m)^j$ provides a trivial upper bound on the complexity of \regen{}, the following theorem shows that \regen{}'s running time can be bounded using the number of \emph{resulting representatives} $|\m{R}_n|$.
In practice, this number is significantly smaller than $\mathrm{ch}(\m{G}, \Sigma, q, m)^n$ (see \Cref{tab:complexity}).

\begin{theorem}[Complexity of \regen{}]
\label{thm:complexity}
The time complexity of \Cref{alg1}, excluding the verification part (line~\ref{alg1:eccfy}), is \begin{equation*}
O\big(|\m{R}_{\ngates}| \cdot \mathrm{ch}(\m{G}, \Sigma, q, m) \cdot \ngates\big).
\end{equation*}
\end{theorem}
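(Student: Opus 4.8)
The plan is to bound two kinds of work separately: the circuit-construction work in Step~1 (lines~\ref{alg1:enum_circuit}--\ref{alg1:add_circuit}) and the representative-selection work in Step~2 (line~\ref{alg1:rn}), while ignoring the call to \eccfy{} on line~\ref{alg1:eccfy}. The dominant term comes from Step~1, and the strategy is to count the executions of the innermost loop body and multiply by the per-iteration cost.

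First I would count iterations. In round $j$ the outer loop (line~\ref{alg1:enum_circuit}) ranges over representatives $L\in\m{R}_{j-1}$ with $|L|=j-1$, and for each such $L$ the two inner loops (lines~\ref{alg1:enum_gate}--\ref{alg1:enum_input}) enumerate gate-and-argument choices $(g,\iota)$. By definition of the characteristic, the number of single-gate extensions of any fixed circuit is at most $\mathrm{ch}(\m{G},\Sigma,q,m)$, so round $j$ executes the body at most $|\{L\in\m{R}_{j-1}:|L|=j-1\}|\cdot\mathrm{ch}(\m{G},\Sigma,q,m)$ times. The key step is to sum this over $j=1,\dots,\ngates$ and collapse it into $|\m{R}_{\ngates}|$. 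By the monotonicity $\m{R}_{j-1}\subseteq\m{R}_j$ (\Cref{lem:alg1inv}, part~\ref{lem:alg1inv:monotone}) together with the size characterization (part~\ref{lem:alg1inv:r}), the representatives of any fixed size $k$ are identical in every $\m{R}_j$ with $j\ge k$; hence $\{L\in\m{R}_{j-1}:|L|=j-1\}=\{L\in\m{R}_{\ngates}:|L|=j-1\}$, and since representatives of distinct sizes are distinct, $\sum_{j=1}^{\ngates}|\{L\in\m{R}_{\ngates}:|L|=j-1\}|\le|\m{R}_{\ngates}|$. Thus the total number of loop-body executions across all rounds is $O(|\m{R}_{\ngates}|\cdot\mathrm{ch}(\m{G},\Sigma,q,m))$.

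Next I would bound the per-iteration cost by $O(\ngates)$. Each execution of the body builds $L'=L.(g\ \iota)$, computes $\dropfirst{L'}$ and tests its membership in $\m{R}_{j-1}$, and (when the test succeeds) evaluates $\Call{FingerPrint}{L'}$ and inserts $L'$ into $\m{D}$. Since $q$ and $m$ are fixed, each of these touches a circuit of at most $\ngates$ gates and costs $O(\ngates)$: copying and appending a gate, hashing or lexicographically comparing a sequence of length $\le\ngates$, and computing the fingerprint via $\le\ngates$ constant-size matrix operations. Multiplying the iteration count by this per-iteration cost yields the claimed $O(|\m{R}_{\ngates}|\cdot\mathrm{ch}(\m{G},\Sigma,q,m)\cdot\ngates)$ for Step~1; the final singleton-filtering pass (line~\ref{alg1:return}) is clearly dominated.

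Finally I would check that the representative selection on line~\ref{alg1:rn} does not exceed this bound, which I expect to be the main obstacle, since a naive recomputation of $\m{R}_j$ in every round would incur an extra factor of $\ngates$. The fix is to maintain the representatives incrementally: by parts~\ref{lem:alg1inv:monotone} and~\ref{lem:alg1inv:r}, every representative of size ${<}\,j$ is already fixed before round $j$, so round $j$ can only create or settle representatives of size exactly $j$, and each ECC whose representative has size ${<}\,j$ keeps it unchanged even if it gains a size-$j$ member. Consequently each circuit added in Step~1 is processed once, compared against the current minimum of its class in $O(\ngates)$ time, and never revisited. The total number of circuits ever inserted into $\m{D}$ is itself bounded by the Step~1 iteration count $O(|\m{R}_{\ngates}|\cdot\mathrm{ch}(\m{G},\Sigma,q,m))$ (\Cref{lem:alg1inv}, part~\ref{lem:alg1inv:d}), so the incremental maintenance of $\m{R}_j$ also costs $O(|\m{R}_{\ngates}|\cdot\mathrm{ch}(\m{G},\Sigma,q,m)\cdot\ngates)$, matching the Step~1 bound and giving the stated overall complexity.
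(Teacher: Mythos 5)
Your proof is correct and takes essentially the same approach as the paper's: bound the Step-1 loop-body executions in round $j$ by $|\{L \in \m{R}_{j-1} : |L| = j-1\}| \cdot \mathrm{ch}(\m{G},\Sigma,q,m)$, charge $O(\ngates)$ per execution, and collapse the sum over rounds into $|\m{R}_{\ngates}|$ using monotonicity of the representative sets together with the disjointness of the size classes. Your additional accounting for the representative-selection step (line~\ref{alg1:rn}) via incremental maintenance is a sound refinement of a cost that the paper's proof leaves implicit (a naive per-round recomputation of $\m{R}_j$ would indeed add a factor of $\ngates$), but it is an elaboration rather than a different route.
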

\begin{proof}
The $j$-th round of \Cref{alg1} considers circuits from $\m{R}_{j-1}$ with size $j-1$,
and for each one it considers at most $\mathrm{ch}(\m{G}, \Sigma, q, m)$ possible extensions.
It takes $O(\ngates)$ to construct a new circuit and add it to $\m{D}$.
(We assume $O(1)$ complexity for hash table insert and lookup, i.e., we use average and amortized complexity.)
Summing over all rounds of \Cref{alg1}, and recalling that $\m{R}_{j}\subseteq\m{R}_n$ (\Cref{lem:alg1inv} \cref{lem:alg1inv:monotone}): %
\begin{equation*}
\begin{aligned}
&\sum_{j=1}^{\ngates} |\{L \in \m{R}_{j-1} : |L| = j-1\}| \cdot \mathrm{ch}(\m{G}, \Sigma, q, m) \cdot \ngates 
\\ \leq\ & |\m{R}_{\ngates}| \cdot  \mathrm{ch}(\m{G}, \Sigma, q, m) \cdot  \ngates.
\end{aligned}
\end{equation*}
\end{proof}

Note that if $\m{G}$, $\Sigma$, $q$, and $m$ are considered as constant then the time complexity of \Cref{alg1} is $O(|\m{R}_{\ngates}| \cdot \ngates)$.

\Cref{tab:complexity} lists some empirical $\mathrm{ch}(\m{G}, \Sigma, q, m)$ and $|\m{R}_n|$ values.

\section{Circuit Equivalence Verifier}
\label{sec:verifier}

Given two circuits $C_1$ and $C_2$ over $q$ qubits and $m$ parameters,
the verifier checks if they are equivalent (i.e., up to a global phase).
Recalling \cref{eqn:equiv}, that means checking if 
$
\forall \vec{p} \in \mathbb{R}^m. \; \exists \beta \in \mathbb{R}. \; \sem{C_1}(\vec{p}) = e^{i\beta} \sem{C_2}(\vec{p})$.
Note that the equality here is between two $2^q \times 2^q$ complex matrices.

There are two challenges in automatically checking \cref{eqn:equiv}. One is the quantifier alternation,
which may be needed to account for global phase; the other is the use of trigonometric function, which is common in quantum gates' matrix representations.
For example, the $U_3$ gate supported by the IBM quantum processors has the following matrix representation:
\begin{equation}
\label{eqn:u3}
\sem{U_3}(\theta, \phi,\lambda) = 
\begin{pmatrix}
\cos(\frac{\theta}{2}) & -e^{i\lambda} \sin(\frac{\theta}{2})\\
e^{i\phi}\sin(\frac{\theta}{2}) & e^{i(\phi+\lambda)} \cos(\frac{\theta}{2})
\end{pmatrix}.
\end{equation}

While some SMT solvers support quantifiers and trigonometric functions~\cite{cvc4,DBLP:conf/cade/CimattiGIRS17},
our preliminary attempts showed they cannot directly prove \cref{eqn:equiv} for the circuit transformations generated by \sys.
Our verification approach is therefore to reduce \cref{eqn:equiv} to a quantifier-free formula over nonlinear real arithmetic by eliminating both the quantification over $\beta$ and the trigonometric functions.
The resulting verification conditions are then checked using the Z3~\cite{z3} SMT solver. 
This approach can efficiently verify all circuit transformations generated in our experiments (\Cref{sec:eval:generator}).

\paragraph{Phase factors}

To eliminate the existential quantification over the phase $\beta$,
we search over a finite space of linear combinations of the parameters $\vec{p}$ for a value
that can be used for $\beta$.
We consider $\beta(\vec{p}) = \vec{a} \cdot \vec{p} + b$,
where $\vec{a} \in A$ and $b \in B$
for some \emph{finite} sets $A \subseteq \mathbb{R}^m$ and $B \subseteq \mathbb{R}$.
(Our experimentation with various combinations of quantum gates suggested that $\vec{a}\neq\vec{0}$ is sometimes needed, so we develop the mechanism with this generality; however, in the experiments reported in \Cref{sec:eval}, constant phase factors, i.e. $\vec{a}=\vec{0}$, turned out to be sufficient for the three gate sets and the parameter specifications used.)
Given circuits $C_1$ and $C_2$, we find candidates for the coefficients $\vec{a}$ and $b$
using an approach similar to the one we use for generating candidate transformations.
We select random parameter values $\vec{p_0}$ and quantum states $\ket{\psi_0}$ and $\ket{\psi_1}$
and find all combinations of $\vec{a}$ and $b$ as above that satisfy
the following equation up to a small floating-point error (note that unlike \cref{eqn:fingerprint}, $|\cdot|$ is not used):
\begin{equation}
\label{eqn:phaseab}
\bra{\psi_0} \sem{C_1}(\vec{p_0}) \ket{\psi_1} = e^{i(\vec{a} \cdot \vec{p} + b)} \cdot \bra{\psi_0} \sem{C_2}(\vec{p_0}) \ket{\psi_1},
\end{equation}
For every such candidate coefficients $\vec{a}$ and $b$, we then attempt to verify following equation,
\begin{equation}
\label{eqn:verifier-qf}
\forall \vec{p} \in \mathbb{R}^m. \; \sem{C_1}(\vec{p}) = e^{i(\vec{a} \cdot \vec{p} + b)} \sem{C_2}(\vec{p}),
\end{equation}
which unlike \cref{eqn:equiv}, does not existentially quantify over $\beta$.
If \cref{eqn:verifier-qf} holds for some candidate coefficients, then $C_1$ and $C_2$ are verified to be equivalent. Otherwise, we consider the transformation given by $C_1$ and $C_2$
to fail verification, but that case did not occur in our experiments.

\paragraph{Trigonometric functions}

Matrices of parametric quantum gates we encountered only use their parameters inside arguments to $\sin$ or $\cos$ (after applying Euler's formula). Under this assumption, we reduce \cref{eqn:verifier-qf} to nonlinear real arithmetic in three steps.
First, we eliminate expressions such as $\frac{\theta}{2}$ 
that occur in some quantum gates (e.g., \cref{eqn:u3}) by introducing a fresh variable $\theta'=\frac{\theta}{2}$ and substituting $\theta'+\theta'$ for $\theta$.
After this step, all arguments to $\sin$ and $\cos$ are linear combinations of variables and constants (e.g., from phase factors) with integer coefficients.
Second, we exhaustively apply Euler's formula $e^{i\phi} = \cos\phi + i\sin\phi$, %
and trigonometric identities for parity and sum of angles:
$\sin(-x)=-\sin(x)$,
$\cos(-x)=\cos(x)$,
$\sin(x + y) = \sin(x) \cos(y) + \cos(x) \sin(y)$, and
$\cos(x + y) = \cos(x) \cos(y) - \sin(x) \sin(y)$.
After these steps, $\sin$ and $\cos$ are only applied to atomic terms (variables and constants).
For each constant $c$, we require precise symbolic expressions for $\sin(c)$ and $\cos(c)$ (e.g., $\sin(\frac{\pi}{4}) = \frac{\sqrt{2}}{2}$), and eliminate $\sin$ and $\cos$ over constants using these expressions.
Third, for every variable $t$ such that $\sin(t)$ or $\cos(t)$ is used we substitute $s_t$ for $\sin(t)$ and $c_t$ for $\cos(t)$, where 
$s_t$ and $c_t$ are fresh variables with a constraint $s_t^2 + c_t^2 = 1$,
which fully eliminates trigonometric functions.

Ultimately, Z3 can check the transformed version of \cref{eqn:verifier-qf} using the theory of quantifier-free nonlinear real arithmetic.

During the development of \sys we occasionally 
encountered verification failures,
but these were due to implementation bugs, and 
the counterexamples obtained from Z3 were useful in the debugging process.
Thus, verification is useful not only to ensure the ultimate correctness of the generated transformations,
but also in the development process. %

\section{Pruning Redundant Transformations}

\label{sec:pruning}
\begin{figure}
\centering
\begin{subfigure}{\linewidth}
    \centering
    \includegraphics[scale=\pptscale]{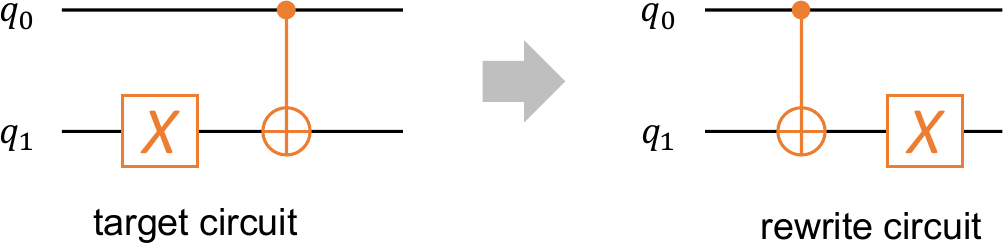}
    \caption{A circuit transformation.}
    \label{fig:subst_example}
\end{subfigure}
\\
\vspace{1em}
\begin{subfigure}{\linewidth}
    \centering
    \includegraphics[scale=\pptscale]{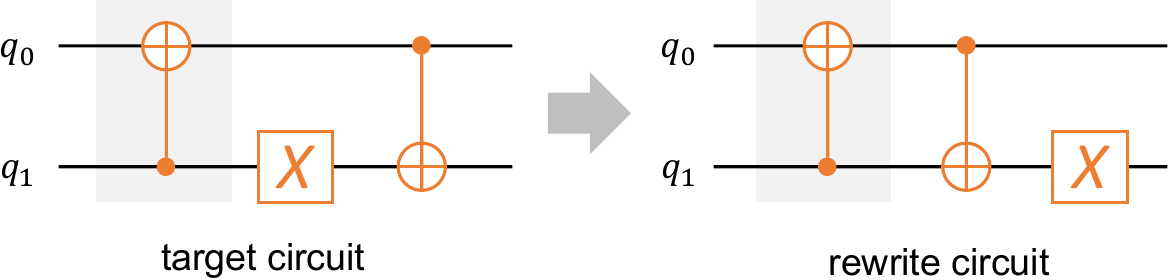}
    \caption{Redundant circuit transformation (common subcircuit).}
    \label{fig:redundant_subst-1}
\end{subfigure}
\caption{Illustrating a redundant circuit transformation.}
\label{fig:common_subcircuit}
\end{figure}
\sys applies two pruning steps after \regen{} generates an \nq-complete \ccs to further eliminate redundancy.
These steps maintain \nq-completeness while reducing the number of transformations the optimizer needs to consider.

\subsection{\ecc Simplification}
\label{subsec:circuit_simplification}
All \eccs generated by \regen{} have circuits with exactly $q$ qubits.
For each \ecc, a qubit (or a parameter) is {\em unused} if all circuits in the \ecc do not operate on the qubit (or the parameter). An {\em \ecc simplification} pass removes all unused qubits and parameters from each \ecc.
After this pass, some \eccs may become identical, among which only one is kept.

Because there is no specific order on parameters in a circuit, \sys also finds identical \eccs under a permutation of the parameters and maintains only one of them.

\subsection{Common Subcircuit Pruning}
\label{subsec:common_subcircuit_pruning}

\sys eliminates transformations whose target and rewrite circuits include a common subcircuit at the beginning or the end.
\Cref{fig:common_subcircuit}
illustrates this \emph{common subcircuit pruning}; the common subcircuit is highlighted in grey. %
\Cref{thm_common} explains why such transformations are always redundant.

\begin{definition}
A subset of gates $C'$ in a circuit $C$ is a \emph{subcircuit at the beginning} of $C$ if all gates in $C'$ are topologically before all gates in $C\setminus C'$.
Similarly, a subset of gates $C'$ in a circuit $C$ is a \emph{subcircuit at the end} of $C$ if all gates in $C'$ are topologically after all gates in $C\setminus C'$.
\end{definition}

\begin{theorem}
\label{thm_common}
For any two quantum circuits $C_1$ and $C_2$ with a common subcircuit at the beginning or the end, if $C_1$ and $C_2$ are equivalent, then eliminating the common subcircuit from $C_1$ and $C_2$ generates two equivalent circuits.
\end{theorem}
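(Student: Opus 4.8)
The plan is to reduce the claim to a single algebraic cancellation, using the compositional matrix semantics of circuits recalled in \cref{sec:ecc}. I treat the case of a common subcircuit at the beginning; the end case is symmetric (left- versus right-multiplication). Write $C_0$ for the common subcircuit, and let $C_1' = C_1 \setminus C_0$ and $C_2' = C_2 \setminus C_0$ be the circuits obtained by elimination, all viewed over the same $q$ qubits.

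\emph{Factoring out the common part.} Since $C_0$ is a subcircuit at the beginning of $C_1$, all its gates are topologically before the gates of $C_1'$, so there is a topological order (sequence representation) of $C_1$ in which every gate of $C_0$ precedes every gate of $C_1'$. The circuit matrix is the reverse-order product of the individual gate matrices, each tensored with identities to act on all $q$ qubits, and it is independent of the chosen topological order. Hence for every $\vec p \in \mathbb{R}^m$,
\[
\sem{C_1}(\vec p) = \sem{C_1'}(\vec p)\cdot \sem{C_0}(\vec p),
\qquad
\sem{C_2}(\vec p) = \sem{C_2'}(\vec p)\cdot \sem{C_0}(\vec p),
\]
where the factor $\sem{C_0}(\vec p)$ is \emph{literally the same} matrix in both equations, because $C_0$ is the common subcircuit (same gates, parameter expressions, and qubit placements).

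\emph{Cancellation.} Each gate matrix is unitary and tensoring with identity preserves unitarity, so $\sem{C_0}(\vec p)$ is a product of unitaries and therefore unitary, hence invertible with inverse $\sem{C_0}(\vec p)^{\dagger}$, for every $\vec p$. Fix any $\vec p$. By equivalence of $C_1$ and $C_2$ (\cref{eqn:equiv}) there is a $\beta \in \mathbb{R}$ with $\sem{C_1}(\vec p) = e^{i\beta}\sem{C_2}(\vec p)$. Substituting the factorizations and right-multiplying by $\sem{C_0}(\vec p)^{\dagger}$ cancels the common factor and yields $\sem{C_1'}(\vec p) = e^{i\beta}\sem{C_2'}(\vec p)$. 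As $\vec p$ was arbitrary and the witness $\beta$ is exactly the one supplied for that $\vec p$, this establishes $\forall \vec p.\,\exists\beta.\,\sem{C_1'}(\vec p) = e^{i\beta}\sem{C_2'}(\vec p)$, i.e., $C_1'$ and $C_2'$ are equivalent. For a common subcircuit at the end, the only change is that $\sem{C_0}(\vec p)$ appears as the left factor and is cancelled by left-multiplication.

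\emph{Main obstacle.} The one step requiring care is the factorization: translating the purely combinatorial hypothesis that $C_0$ is topologically before $C \setminus C_0$ into the algebraic identity $\sem{C}(\vec p) = \sem{C\setminus C_0}(\vec p)\cdot\sem{C_0}(\vec p)$, and arguing that the $\sem{C_0}(\vec p)$ factors extracted from $C_1$ and from $C_2$ coincide. This rests on the invariance of a circuit's matrix under the choice of topological order together with the compositional semantics. Once the factorization is in place, invertibility of $\sem{C_0}(\vec p)$ and the per-$\vec p$ handling of the phase $\beta$ make the remaining cancellation immediate.
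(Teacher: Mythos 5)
Your proposal is correct and follows essentially the same route as the paper's proof: factor each circuit's matrix into the common subcircuit's (unitary) matrix times the matrix of the remainder, then cancel the common unitary factor via its conjugate transpose, reusing the same phase $\beta$ for each parameter valuation. Your additional care in justifying the factorization (invariance under choice of topological order, and that both circuits yield literally the same factor $\sem{C_0}(\vec p)$) only makes explicit what the paper leaves implicit.
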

\begin{proof}
Recall that $\sem{C} (\vec{p})$ (for all $\vec{p}$---we elide $\vec{p}$ in this proof) denotes the matrix representation of circuit $C$.
Let $\sem{C}^\dagger$ be the conjugate transpose of $\sem{C}$,
and recall that as $\sem{C}$ is unitary, we have $\sem{C}^\dagger \sem{C} = \sem{C} \sem{C}^\dagger = I$.
Let $C_{s}$ denote the common subcircuit shared by $C_1$ and $C_2$.
Let $C_1'$ and $C_2'$ represent the new circuits obtained by removing $C_{s}$ from $C_1$ and $C_2$.
When $C_{s}$ is a common subcircuit at the beginning of $C_1$ and $C_2$, the matrix representations for the new circuits are
$\sem{C_i'} = \sem{C_{s}}^\dagger \sem{C_i}$, where $i=1,2$.
Equivalence between $C_1$ and $C_2$ implies the existence of $\beta$ such that $\sem{C_1} = e^{i\beta} \sem{C_2}$, therefore $\sem{C_1'} = e^{i\beta} \sem{C_2'}$.
The case where $C_s$ is a common subcircuit at the end is similar.
\end{proof}

\Cref{thm_common} shows that every transformation pruned in common subcircuit pruning must be subsumed by other transformations (assuming initial \nq-completeness).

Observe that if two circuits have a common subcircuit at the beginning (resp. the end), then they must have a common gate at the beginning (resp. the end). Therefore, to implement common subcircuit pruning, \sys only checks for a single common gate at the beginning or the end.

\section{Circuit Optimizer}
\label{sec:optimizer}
\begin{figure}
    \centering
    \includegraphics[scale=\pptscale]{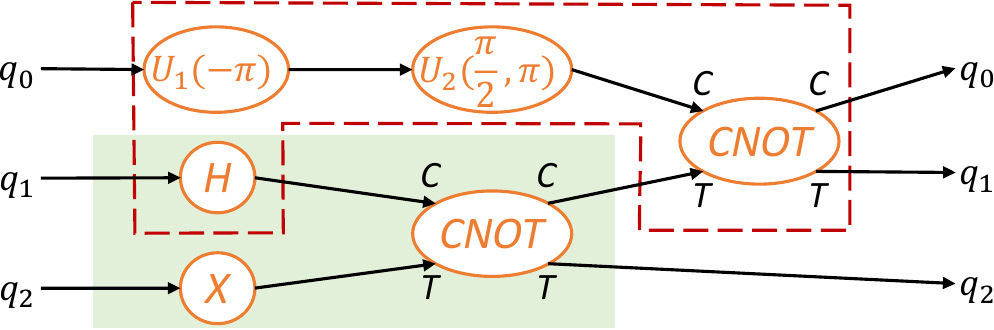}
    \caption{Graph representation for \Cref{fig:conventional_circuit}'s circuit. The green box (subcircuit, also convex subgraph) and red dashed area (not a subcircuit, also non-convex subgraph) match those of \Cref{fig:conventional_circuit}.}
    \label{fig:graph_representation}
\end{figure}

\begin{figure*}
\centering
\includegraphics[width=\linewidth]{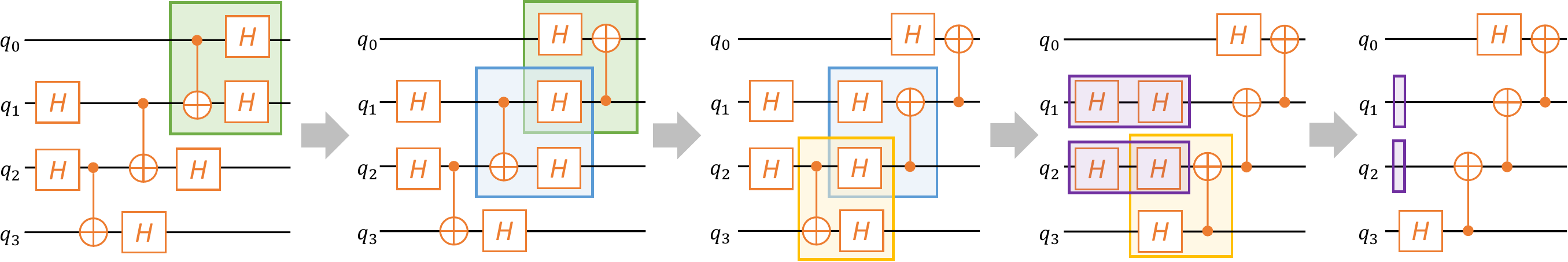}
\vspace{\captionvspace}
\caption{A transformation sequence applied by \sys that reduces the total gate count in the {\tt gf2\^{}4\_mult} circuit by swapping the control and target qubits of three $CNOT$ gates. Note that the first three transformations do not reduce gate count.
}
\label{fig:cnot_transformations}
\end{figure*}

\sys's {\em circuit optimizer} applies the verified transformations generated by the generator to find an optimized equivalent circuit for a given input circuit (see \Cref{fig:overview}).

A key step  is computing $\textproc{Apply}(C, T)$,
the set of circuits that can be obtained by applying  transformation %
$T=(C_T,C_R)$ to circuit $C$.
This involves finding all possible ways to match $C_T$ with a subcircuit of $C$.
\sys's optimizer uses a \emph{graph representation for circuits}, explained below, 
to implement this operation.
In the graph representation, subcircuits correspond to convex subgraphs,\footnote{For a graph $G$, $G'$ is a convex subgraph of $G$ if for any two vertices $u$ and $v$ in $G'$, every path in $G$ from $u$ to $v$ is also contained in $G'$.} 
and \sys adapts the graph-matching procedure from TASO~\cite{jia2019taso} 
to find all matches between $C_T$ and a convex subgraph of $C$.

In the graph representation, a circuit $C$ over $q$ qubits is represented as a directed graph $G$, where each gate over $d$ qubits is a vertex with in- and out-degree $d$.
Edges are labeled to distinguish between qubits in multi-qubit gates
(e.g., the control and target qubits of a $CNOT$ gate).
$G$ also includes $q$ sources and sinks, one for each qubit. %
\Cref{fig:graph_representation} illustrates the graph representation of \Cref{fig:conventional_circuit}'s circuit. As the figure also illustrates, subcircuits correspond to convex subgraphs.

The optimizer first converts an \nq-complete \ccs into a set of transformations (in the graph representation). For each \ecc with $x$ equivalent circuits $C_1,\ldots,C_x$, the optimizer considers a pair of transformations between the representative and each other circuit.
For example, if $C_1$ is the representative circuit in the \ecc, then the optimizer considers transformations $C_1\rightarrow C_i$ and $C_i\rightarrow C_1$ for $1<i\leq x$.
These $2(x - 1)$ transformations guarantee that any two circuits from the same \ecc are reachable from each other.

To optimize an input circuit using the above transformations, the optimizer uses a {\em cost-based backtracking search} algorithm adapted from TASO~\cite{jia2019metaflow, jia2019taso}.
The search is guided by a cost function $\Call{Cost}{\cdot}$ that maps circuits to real numbers.
In our evaluation, the cost is given by the number of gates in a circuit, but other cost functions are possible.

\begin{algorithm}[t]
\caption{Cost-Based Backtracking Search Algorithm.
}
\label{alg:search}
{
\small
\begin{algorithmic}[1]
\State {\bf Inputs:} Verified transformations $\m{T}$, a cost model $\Call{Cost}{\cdot}$, a hyper-parameter $\gamma$, and an input circuit $C_{in}$.
\State {\bf Output:} an optimized circuit $C_{best}$
\State {\em // $\m{Q}$ is a priority queue of circuits sorted by their $\Call{Cost}{\cdot}$.}
\State $\m{Q} = \{C_{in}\}$
\State $C_{best} = C_{in}$
\State $\m{D}_{seen} = \{C_{in}\}$
\While{$\m{Q} \neq \emptyset$ and the search has not timed out}
\State $C=\m{Q}$.dequeue()
\If{$\Call{Cost}{C} < \Call{Cost}{C_{best}}$}
\State $C_{best} = C$
\EndIf
\For{each transformation $T \in \m{T}$}
\For{each $C_{new} \in \Call{Apply}{C, T}\setminus \m{D}_{seen}$}
\If{$\Call{Cost}{C_{new}} < \gamma \cdot \Call{Cost}{C_{best}}$}
\State $\m{Q}$.enqueue($C_{new}$)
\State $\m{D}_{seen} = \m{D}_{seen} \cup \{C_{new}\}$
\EndIf
\EndFor
\EndFor
\EndWhile
\State \Return $C_{best}$
\end{algorithmic}}
\end{algorithm}

\Cref{alg:search} shows the pseudocode of our search algorithm. To find an optimized circuit, candidate circuits are maintained in a priority queue $\m{Q}$.
At each iteration, the lowest-cost circuit $C$ is dequeued, and \sys applies all transformations to get equivalent new circuits $C_{new}$, which are enqueued into $\m{Q}$ for further exploration. Circuits considered in the past are ignored using $\m{D}_{seen}$.

The search is controlled by a hyper-parameter $\gamma$. \sys  ignores candidate circuits whose cost
is greater than $\gamma$ times the cost of the current best circuit $C_{best}$.
The parameter $\gamma$ trades off between search time and the search's ability to avoid local minima.
For $\gamma=1$, \Cref{alg:search} becomes a greedy search that only accepts transformations that strictly improve cost.
On the other hand, a higher value for $\gamma$ 
enables application of
transformations that do not immediately improve the cost, which
may later lead to otherwise inaccessible optimization opportunities.
For example, \Cref{fig:cnot_transformations} depicts a sequence of five transformations that reduce the total gate count in the {\tt gf2\^{}4\_mult} (see \Cref{sec:eval:results}) circuit by four, via flipping three $CNOT$ gates;
note that the first three transformations do not reduce the gate count at all.

\sys's circuit optimizer is designed to optimize circuits before \emph{mapping}.
{\em Circuit mapping} converts a quantum circuit to an equivalent circuit that satisfies hardware constraints of a given quantum processor.
These constraints include connectivity restrictions between qubits and the directions to perform multi-qubit operations.
While transformations discovered by \sys are also applicable to circuits after mapping, applying them naively may break hardware constraints.
Therefore, we leave it as future work to build an optimizer for after-mapping circuits using \sys's transformations.

\section{Implementation and Evaluation}
\label{sec:eval}
We describe our implementation of \sys and evaluate the performance of the generator, the verifier, and the optimizer.
\sys is publicly available as an open-source project~\cite{quartz_github}
and also in the artifact supporting this paper~\cite{quartz_zenodo}.

\subsection{Implementation}
\label{sec:impldetails}

\paragraph{Floating-point arithmetic.}

The \regen{} algorithm as presented in \Cref{sec:generator} uses real-valued fingerprints,
where two equivalent circuits always have the same fingerprint.
Our implementation of \regen{} uses floating-point arithmetic, which introduces some imprecision that can potentially lead to different fingerprints for equivalent circuits.
To account for this imprecision, the implementation assumes there exists an {\em absolute error threshold} $E_{max}$, such that fingerprints of equivalent circuits differ by at most $E_{max}$ when computed with floating-point arithmetic.
The implementation therefore computes, using floating-point arithmetic,
the integer 
$\left\lfloor\frac{\Call{FingerPrint}{C}}{2 E_{max}}\right\rfloor$,
and uses it as the key for storing circuit $C$ in $\m{D}$.
Under our assumption, equivalent circuits may still have different integer hash keys, but they may differ only by $1$. 
Therefore, the implementation introduces an additional step after line~\ref{alg1:eccfy} of \Cref{alg1}, in which \eccs that correspond to circuits with hash keys $h$ and $h+1$ are checked for equivalence and merged if found equivalent.
In our experiments we set $E_{max}=10^{-15}$ based on preliminary exploration of the floating-point errors that occur in practice.

\begin{table}[t]
    \footnotesize
    \centering
    \caption{Gate sets used in our evaluation.}
    \label{tab:gate_set}
    \vspace{\captionvspace}
    \begin{tabular}{l|l}
        \hline
        {\bf Name} & {\bf Supported Gates} \\
        \hline
        Nam~\cite{nam2018automated,VOQC} & $H, \;\; X, \;\; R_z(\lambda),\;\; CNOT$ \\
        IBM~\cite{dumitrescu2018cloud} & $U_1(\theta), \;\;U_2(\phi, \lambda),\;\; U_3(\theta, \phi,\lambda), \;\; CNOT$  \\
        Rigetti~\cite{rigetti_agave} & $R_x(\frac{\pi}{2}), \;\; R_x(-\frac{\pi}{2}), \;\; R_x(\pi)=X, \;\; R_z(\lambda), \;\; CZ$ \\
        \hline
    \end{tabular}
\end{table}

\paragraph{Supported gate sets.}
\sys is a generic quantum circuit optimizer supporting arbitrary gate sets, and it accepts a gate set $\m{G}$ as part of its input.
In our experiments, input circuits are given over the ``Clifford + T'' gate set:
$H$, $T$, $T^\dagger$, $S$, $S^\dagger$, and $CNOT$;
and output (optimized) circuits are in one of the three gate sets listed in \Cref{tab:gate_set}:
Nam, IBM, and Rigetti. 
Nam is a gate set commonly used in prior work~\cite{nam2018automated,VOQC},
IBM is derived from the IBMQX5 quantum processor~\cite{dumitrescu2018cloud},
and Rigetti is derived from the Rigetti Agave quantum processor~\cite{rigetti_gates,rigetti_agave}.

To generate and verify circuit transformations for a new gate set, \sys only requires, for each gate,
a specification of its matrix representation as a function of its parameters, 
such as \cref{eqn:u1}. %
To optimize circuits, a translation procedure of input circuits to the new gate set is also required unless input circuits are provided in the new gate set.

\paragraph{Rotation merging and Toffoli decomposition.}

Before invoking \sys's optimizer, \sys preprocesses circuits
by applying two optimizations: \emph{rotation merging} and \emph{Toffoli decomposition}~\cite{nam2018automated}.
Our preliminary experiments showed that an approach solely based on local transformations and a cost-based search cannot reproduce these optimization passes for large circuits.
Rotation merging combines multiple $R_z$/$U_1$ gates that may be arbitrarily far apart (separated by $X$ or $CNOT$ gates), and appears difficult to be represented as local circuit transformations.
Toffoli decomposition decomposes a Toffoli gate into the Nam gate set, which involves simultaneous transformation of 15 quantum gates and interacts with rotation merging~\cite[p.~11]{nam2018automated}.
We therefore implement these two optimization passes from prior work~\cite{nam2018automated} as a preprocessing step.
Toffoli decomposition requires selecting a polarity for each Toffoli gate, which is computed heuristically by prior work~\cite{nam2018automated}.
Instead, we use a greedy approach: we process the Toffoli gates sequentially and for each gate we consider both polarities and greedily pick the one that results in fewer gates after rotation merging.

For the Nam and IBM gate sets, \sys directly applies rotation merging and Toffoli decomposition as a preprocessing step before the optimizer.
For the Rigetti gate set, which includes $CZ$ rather than $CNOT$, the algorithm from the prior work~\cite{nam2018automated} is not directly applicable;
therefore, \sys uses several additional preprocessing steps, as follows.
Rather than directly transpiling an input circuit to Rigetti,
\sys first transpiles it to Nam and applies Toffoli decomposition and rotation merging.
Next, \sys rewrites each $CNOT$ gate to a sequence of $H$, $CZ$, $H$ gates,
cancels out adjacent $H$ or $CZ$ pairs,
and then fully converts the circuit to Rigetti by transforming $X$ to 
$R_x(\pi)$ and $H$ to $R_x(\pi) \cdot R_z(\frac{\pi}{2}) \cdot R_x(\frac{\pi}{2}) \cdot R_z(-\frac{\pi}{2})$.
Ultimately, \sys invokes the optimizer, using a suitable $(n,q)$-complete ECC set for Rigetti.
Note that elimination of adjacent $H$ or $CZ$ pairs
during the translation from Nam to Rigetti leads to more optimized circuits:
a pair of adjacent $H$ gates (that are canceled out) would otherwise be transformed into a sequence of eight $R_x$ and $R_z$ gates that cannot be canceled out by \sys
since the cancellation is only correct for specific parameter values,
while \sys considers symbolic transformations valid for arbitrary parameter values.

\paragraph{Symbolic parameter expressions.}

As explained in \Cref{sec:ecc}, \sys assumes a fixed number of parameters $m$ and a specification $\Sigma$ for parameter expressions used in circuits.
\sys takes $m$ as input, and supports a flexible form for $\Sigma$ defined by a finite set of parameter expressions and either allowing or disallowing parameters to be used more than once in a circuit.

Our experiments use $m=2$ for the Nam and Rigetti gate sets,
and $m=4$ for the IBM gate set because it contains gates with up to three parameters.
For $\Sigma$, we consider the expressions
$p_{i}$, $2p_i$ and $p_{i}+p_{j}$ where $0 \leq i < m$ and $i < j < m$
(recall that $\vec{p}\in\mathbb{R}^m$ is the vector of parameters),
and restrict each parameter to be used at most once in a circuit. 
This restriction significantly reduces the number of circuits \regen{} considers, especially for the IBM gate set because the $U_3$ gate requires three parameter expressions.

As explained in \Cref{sec:verifier}, the verifier searches for phase factors of the form 
$\vec{a} \cdot \vec{p} + b$.
In our experiments we used
$a \in \{-2, -1, 0, 1, 2\}^m$
and
$b \in \{0, \frac{\pi}{4}, \frac{2\pi}{4}, \dots,\frac{7\pi}{4}\}$,
which proved to be useful and sufficient in our preliminary experimentation with various gates.
We later found that for the gate sets of \Cref{tab:gate_set}, $\vec{a}=\vec{0}$ is actually sufficient.
That is, these gate sets do not admit any transformations with parameter-dependent phase factors for the circuits we considered; they do however need various constant phase factors.

\subsection{Experiment Setup}

We compare \sys with existing quantum circuit optimizers on a
benchmark suite of 26 circuits developed by prior work \cite{nam2018automated, amy2014}. The benchmarks include arithmetic circuits (e.g., adding
integers), multiple controlled $X$ and $Z$ gates (e.g., $CCX$ and $CCZ$), the Galois field
multiplier circuits, and quantum Fourier transformations.
We use \sys to optimize the benchmark circuits to the three gate sets of \Cref{tab:gate_set}.

As in prior work~\cite{nam2018automated, VOQC}, we measure cost of a circuit in terms of the total gate count.
We therefore define the \textproc{Cost} function in \Cref{alg:search} as the number of gates in a circuit.\footnote{\sys can in principle be used to optimize for other metrics, e.g. number of $CNOT$ or $T$ gates, but here we focus on total gate count.}

Setting $n$ and $q$ for generating an \nq-complete \ccs determines the resulting transformations.
Our experiments use: $n=6,q=3$ for the Nam gate set;
$n=4,q=3$ for the IBM gate set; and
$n=3,q=3$ for the Rigetti gate set, which provided good results for our benchmarks.
\Cref{sec:eval:generator,subsec:scalability} discuss the impact of $n$ and $q$ on \sys's performance.

\sys's backtracking search (\Cref{alg:search}) is controlled by the hyper-parameter $\gamma$ and the timeout threshold.
Our experiments use $\gamma = 1.0001$, which yields good results for our benchmarks.
This value for $\gamma$ essentially means we consider cost-preserving transformations but not cost-increasing ones.
For the search timeout, we use 24 hours.
\Cref{subsec:scalability} discusses the timeout threshold and how it interacts with the settings for $n$ and $q$. 
To stop the search from consuming too much memory,
whenever the priority queue of \Cref{alg:search} contains more than 2,000 circuits we prune it and keep only the top 1,000 circuits.
Our preliminary experimentation with this pruning suggested that it does not affect \sys's results.

All experiments were performed on an \texttt{m6i.32xlarge} AWS EC2 instance
with a 128-core CPU and \SI{512}{GB} RAM.%

\subsection{Circuit Optimization Results}
\label{sec:eval:results}
\begin{table}
{
\centering
\caption{%
Gate count results for the Nam gate set.
The best result for each circuit is in bold.
``\sys Preprocess'' lists gate count after \sys's preprocessor (\Cref{sec:impldetails}).
}
\label{tab:nam}
\vspace{\captionvspace}
\footnotesize
\resizebox{\columnwidth}{!}
{%
\begin{threeparttable}[t]
\begin{tabular}{l|rrrr|rr}
\hline
{\bf Circuit} & {\bf Orig.} & {\bf Qiskit} & {\bf Nam} & {\bf \voqc} & {\bf \rotatebox[origin=c]{90}{\begin{tabular}{@{}c@{}}\sys\\Preprocess\end{tabular}}} & {\bf \rotatebox[origin=c]{90}{\begin{tabular}{@{}c@{}}\sys\\End-to-end\end{tabular}}} \\
\hline
{\tt adder\_8} & 900 & 869  & {\bf 606} & 682 & 732 & 724 \\
{\tt barenco\_tof\_3} & 58  & 56 & 40 & 50 & 46 & {\bf 38}\\
{\tt barenco\_tof\_4} & 114 & 109 & 72 & 95 & 86 & {\bf 68}\\
{\tt barenco\_tof\_5} & 170 & 162 & 104 & 140 & 126 & {\bf 98}\\
{\tt barenco\_tof\_10} & 450 & 427 & 264 & 365 & 326 & {\bf 262}\\
{\tt csla\_mux\_3} & 170 & 168 & 155 & 158 & 164 & \textbf{154}\\
{\tt csum\_mux\_9} & 420 & 420 & \textbf{266} & 308 & 308 & 272\\
{\tt gf2\^{}4\_mult} & 225 & 213 & 187 & 192 & 186 & {\bf 177}\\
{\tt gf2\^{}5\_mult} & 347 & 327 & 296 & 291 & 287 & {\bf 277}\\
{\tt gf2\^{}6\_mult} & 495 & 465 & 403 & 410 & 401 & {\bf 391}\\
{\tt gf2\^{}7\_mult} & 669 & 627 & 555 & 549 & 543 & {\bf 531}\\
{\tt gf2\^{}8\_mult} & 883 & 819 & 712 & 705 & {\bf 703} & {\bf 703}\\
{\tt gf2\^{}9\_mult} & 1095 & 1023 & 891 & 885 & 879 & {\bf 873}\\
{\tt gf2\^{}10\_mult} & 1347 & 1257 & 1070 & 1084 & 1062 & {\bf 1060} \\
{\tt mod5\_4} & 63 & 62 & 51 & 56 & 51 & {\bf 26}$^{\dag}$ \\
{\tt mod\_mult\_55} & 119 & 117 & 91 & {\bf 90} & 105 &  93 \\
{\tt mod\_red\_21} & 278 & 261 & {\bf 180} & 214 & 236 & 202 \\ 
{\tt qcla\_adder\_10} & 521 & 512 & {\bf 399} & 438 & 450 & 422 \\
{\tt qcla\_com\_7} & 443 & 428 & {\bf 284} & 314 & 349 & 292 \\
{\tt qcla\_mod\_7} & 884 & 853 & -$^{\dag\dag}$ & 723 & 727 & {\bf 719} \\
{\tt rc\_adder\_6} & 200 & 195 & {\bf 140} & 157 & 174 & 154 \\
{\tt tof\_3} & 45 & 44 & {\bf 35} & 40 & 39 & {\bf 35} \\
{\tt tof\_4} & 75 & 73 & {\bf 55} & 65 & 63 & {\bf 55} \\
{\tt tof\_5} & 105 & 102 & {\bf 75} & 90 & 87 & {\bf 75} \\
{\tt tof\_10} & 255 & 247 & {\bf 175} & 215 & 207 & {\bf 175} \\
{\tt vbe\_adder\_3} & 150 & 146 & 89 & 101 & 115 & {\bf 85} \\
\hline
{\bf \begin{tabular}{@{}c@{}}Geo. Mean\\Reduction\end{tabular}}
& - & 3.9\% & 27.3\% & 18.7\% & 18.6\% & 28.7\% \\
\hline
\end{tabular}%
\begin{tablenotes}
\footnotesize
\item[$\dag$] Computed as the median of seven runs:
25, 26, 26, 26, 32, 32, 32. %
\item[$\dag\dag$] Nam generates an incorrect circuit for \texttt{qcla\_mod\_7}~\cite[Table~1]{kissinger2019reducing}.
\end{tablenotes}
\end{threeparttable}
}
}
\end{table}
\paragraph{Nam gate set.}
\Cref{tab:nam} compares \sys to \qiskit~\cite{qiskit}, Nam~\cite{nam2018automated}, and \voqc~\cite{VOQC} for the Nam gate set.
(The performance of \tket~\cite{tket} for this gate set is similar to \qiskit, see ~\cite{VOQC}.)
The table also shows the gate count following \sys's preprocessing steps 
(rotation merging and Toffoli decomposition, see \Cref{sec:impldetails}).
\sys outperforms \qiskit and \voqc on almost all circuits,
indicating that it discovers most transformations used in these optimizers and also explores new optimization opportunities arising from new transformations and from the use of a cost-guided backtracking search (rather than a greedy approach, e.g., see \Cref{fig:cnot_transformations}).

\sys achieves on-par performance with Nam~\cite{nam2018automated},
a circuit optimizer highly tuned for this gate set.
Nam applies a set of carefully chosen heuristics
such as floating $R_z$ gates and canceling one- and two-qubit gates (see \cite{nam2018automated} for more detail).
While \sys's preprocessor implements two of Nam's optimization passes,
the results of the preprocessor alone are not close to Nam.\footnote{We observe that for the \texttt{gf2\^{}n\_mult} circuits, \sys' preprocessor outperforms Nam. We attribute this difference to our greedy Toffoli decomposition, discussed in~\Cref{sec:impldetails}, which happens to work well for these circuits.}
By using the automatically generated transformations, \sys is 
able to perform optimizations similar to some of Nam's other hand-tuned optimizations, and even outperform Nam on roughly half of the circuits.

For \texttt{mod5\_4}, we observed significant variability between runs, caused by randomness in ordering circuits with the same cost in the priority queue ($\m{Q}$ in \Cref{alg:search}).
Therefore, \Cref{tab:nam} reports the median result from seven runs as well as individual results.
This variability also suggests that \sys's performance can be improved by running the optimizer multiple times and taking the best discovered circuit, or by applying more advanced stochastic search techniques~\cite{DBLP:conf/pldi/KoenigPA21}.

\begin{table}[t]
\centering
\caption{%
Gate count results for the IBM gate set.
The best result for each circuit is in bold.
``\sys Preprocess'' lists gate count after \sys's preprocessor (\Cref{sec:impldetails}).
}
\label{tab:ibmq}
\vspace{\captionvspace}
\small
\resizebox{\columnwidth}{!}
{%
\begin{tabular}{l|rrrr|rr}
\hline
{\bf Circuit} & {\bf Orig.} & {\bf Qiskit} & {\bf t|ket$\rangle$} & {\bf \voqc} & {\bf \rotatebox[origin=c]{90}{\begin{tabular}{@{}c@{}}\sys\\Preprocess\end{tabular}}} & {\bf \rotatebox[origin=c]{90}{\begin{tabular}{@{}c@{}}\sys\\End-to-end\end{tabular}}} \\
\hline
{\tt adder\_8} & 900 & 805 & 775 & 643 & 736 & {\bf 583} \\
{\tt barenco\_tof\_3} & 58 & 51 & 51 & 46 & 46 & {\bf 36}\\
{\tt barenco\_tof\_4} & 114 & 100 & 100 & 89 & 86 & {\bf 67}\\
{\tt barenco\_tof\_5} & 170 & 149 & 149 & 135 & 126 & {\bf 98}\\
{\tt barenco\_tof\_10} & 450 & 394 & 394 & 347 & 326 & {\bf 253}\\
{\tt csla\_mux\_3} & 170 & 153 & 155 & 148 & 164 & {\bf 139}\\
{\tt csum\_mux\_9} & 420 & 382 & 361 & {\bf 308} & 364 & 340\\
{\tt gf2\^{}4\_mult} & 225 & 206 & 206 & 190 & 186 & {\bf 178}\\
{\tt gf2\^{}5\_mult} & 347 & 318 & 319 & 289 & 287 & {\bf 275}\\
{\tt gf2\^{}6\_mult} & 495 & 454 & 454 & 408 & 401 & {\bf 388}\\
{\tt gf2\^{}7\_mult} & 669 & 614 & 614 & 547 & 543 & {\bf 530}\\
{\tt gf2\^{}8\_mult} & 883 & 804 & 806 & 703 & 703 & {\bf 692}\\
{\tt gf2\^{}9\_mult} & 1095 & 1006 & 1009 & 882 & 879 & {\bf 866}\\
{\tt gf2\^{}10\_mult} & 1347 & 1238 & 1240 & 1080 & 1062 & {\bf 1050} \\
{\tt mod5\_4} & 63 & 58 & 58 & 53 & 55 & {\bf 51} \\
{\tt mod\_mult\_55} & 119 & 106 & 102 & {\bf 83} & 109 & 91 \\
{\tt mod\_red\_21} & 278 & 227 & 224 & {\bf 191} & 246 & 205 \\ 
{\tt qcla\_adder\_10} & 521 & 460 & 460 & 409 & 450 & {\bf 372} \\
{\tt qcla\_com\_7} & 443 & 392 & 392 & 292 & 349 & {\bf 267} \\
{\tt qcla\_mod\_7} & 884 & 778 & 780 & 666 & 726 & {\bf 594} \\
{\tt rc\_adder\_6} & 200 & 170 & 172 & {\bf 141} & 186 & 151 \\
{\tt tof\_3} & 45 & 40 & 40 & 36 & 39 & {\bf 31} \\
{\tt tof\_4} & 75 & 66 & 66 & 58 & 63 & {\bf 49} \\
{\tt tof\_5} & 105 & 92 & 92 & 80 & 87 & {\bf 67} \\
{\tt tof\_10} & 255 & 222 & 222 & 190 & 207 & {\bf 157} \\
{\tt vbe\_adder\_3} & 150 & 133 & 139 & 100 & 115 & {\bf 82} \\
\hline
{\bf \begin{tabular}{@{}c@{}}Geo. Mean\\Reduction\end{tabular}}
& - & 11.0\% & 11.2\% & 23.1\% & 17.4\% & 30.1\% \\
\hline
\end{tabular}%
}
\end{table}

\paragraph{IBM gate set.}
\Cref{tab:ibmq} compares \sys with \qiskit~\cite{qiskit}, \tket~\cite{tket}, and \voqc~\cite{VOQC} on the IBM gate set.
\qiskit and \tket include a number of optimizations specific to this gate set, such as merging any sequence of $U_1$, $U_2$, and $U_3$ gates into a single gate~\cite{qiskit_optimize1qgates} and replacing any block of consecutive 1-qubit gates by a single $U_3$ gate~\cite{qiskit_consolidateblocks}.
\sys is able to automatically discover some of these gate-specific optimizations by representing them each as a sequence of transformations.
Overall, \sys outperforms these existing compilers.

\begin{table}[t]
\centering
\caption{%
Gate count results for the Rigetti gate set.
The best result for each circuit is in bold.
``\sys Preprocess'' lists gate count after \sys's preprocessor (\Cref{sec:impldetails}).
}
\label{tab:rigetti}
\vspace{\captionvspace}
\footnotesize
\begin{threeparttable}
\begin{tabular}{l|rrr|rr}
\hline
{\bf Circuit} & {\bf Orig.} & {\bf \quilc} & {\bf t|ket$\rangle$} & {\bf \rotatebox[origin=c]{90}{\begin{tabular}{@{}c@{}}\sys\\Preprocess\end{tabular}}} & {\bf \rotatebox[origin=c]{90}{\begin{tabular}{@{}c@{}}\sys\\End-to-end\end{tabular}}} \\
\hline
{\tt adder\_8} & 5324 & 3345 & 3726 & 4244 & {\bf 2553} \\
{\tt barenco\_tof\_3} & 332 & 203 & 207 & 256 & {\bf 148} \\
{\tt barenco\_tof\_4} & 656 & 390 & 408 & 500 & {\bf 272} \\
{\tt barenco\_tof\_5} & 980 & 607 & 609 & 744 & {\bf 386} \\
{\tt barenco\_tof\_10} & 2600 & 1552 & 1614 & 1964 & {\bf 960} \\
{\tt csla\_mux\_3} & 1030 & {\bf 614} & 641 & 864 & 654 \\
{\tt csum\_mux\_9} & 2296 & 1540 & 1542 & 1736 & {\bf 1100} \\
{\tt gf2\^{}4\_mult} & 1315 & 809 & 827 & 1020 & {\bf 796} \\
{\tt gf2\^{}5\_mult} & 2033 & 1301 & 1277 & 1573 & {\bf 1231} \\
{\tt gf2\^{}6\_mult} & 2905 & 1797 & 1823 & 2235 & {\bf 1751} \\
{\tt gf2\^{}7\_mult} & 3931 & 2427 & 2465 & 3021 & {\bf 2371} \\
{\tt gf2\^{}8\_mult} & 5237 & 3208 & 3276 & 4033 & {\bf 3081} \\
{\tt gf2\^{}9\_mult} & 6445 & 4070 & 4037 & 4933 & {\bf 3986} \\
{\tt gf2\^{}10\_mult} & 7933 & 4977 & {\bf 4967} & 6048 & 5039 \\
{\tt mod5\_4} & 369 & 211 & 238 & 293 & {\bf 197} \\
{\tt mod\_mult\_55} & 657 & 420 & 452 & 531 & {\bf 361} \\
{\tt mod\_red\_21} & 1480 & 880 & 1020 & 1166 & {\bf 738} \\
{\tt qcla\_adder\_10} & 3079 & -$^\dagger$ & 1884 & 2464 & {\bf 1615} \\
{\tt qcla\_com\_7} & 2512 & 1540 & 1606 & 1954 & {\bf 1095} \\
{\tt qcla\_mod\_7} & 5130 & 3164 & 3202 & 4029 & {\bf 2525} \\
{\tt rc\_adder\_6} & 1186 & 706 & 747 & 984 & {\bf 606} \\
{\tt tof\_3} & 255 & 150 & 160 & 201 & {\bf 135} \\
{\tt tof\_4} & 425 & 271 & 270 & 333 & {\bf 199} \\
{\tt tof\_5} & 595 & 354 & 380 & 465 & {\bf 271} \\
{\tt tof\_10} & 1445 & 878 & 930 & 1125 & {\bf 631} \\
{\tt vbe\_adder\_3} & 900 & 534 & 557 & 705 & {\bf 366} \\
\hline
{\bf \begin{tabular}{@{}c@{}}Geo. Mean\\Reduction\end{tabular}}
& - & 38.6\% & 36.3\% & 21.9\% & 49.4\% \\
\hline
\end{tabular}%
\begin{tablenotes}
\footnotesize
\item[$\dagger$] \quilc supports up to 32 qubits while \texttt{qcla\_adder\_10} has 36.
\end{tablenotes}
\end{threeparttable}
\end{table}

\paragraph{Rigetti gate set.}
\Cref{tab:rigetti} compares \sys with \quilc~\cite{quilc} and \tket~\cite{tket} on the Rigetti gate set.
\sys significantly outperforms \tket and \quilc on most circuits, even though \quilc is highly optimized for this gate set.
We also note that while we employ some simplifications in the preprocessing phase for the Rigetti gate set (see \Cref{sec:impldetails}), most of the reduction in gate count comes from the optimization phase.

\subsection{Analyzing \sys's Generator and Verifier}
\label{sec:eval:generator}

\begin{table}[t]
\caption{%
Metrics for \sys's generator, when generating \nq-complete ECC sets
for $q{=}3$ and varying values of $n$ for the three gate sets.
$|\m{T}|$ is the resulting number of transformations,
$|\m{R}_n|$ is the size of the resulting representative set,
and $\mathrm{ch}$ is the characteristic (see \Cref{alg1} and \Cref{thm:complexity}).
}
\label{tab:complexity}
\vspace{\captionvspace}
\footnotesize
\centering
\begin{tabular}{ll|rrrr}
\hline
                                                                                        & $n$ & \multicolumn{1}{c}{\bf $|\m{T}|$} & \multicolumn{1}{c}{\bf $|\m{R}_n|$} & \multicolumn{1}{c}{\begin{tabular}[c]{@{}c@{}}\bf Verification\\ \bf Time (s)\end{tabular}} & \multicolumn{1}{c}{\begin{tabular}[c]{@{}c@{}}\bf Total\\ \bf Time (s)\end{tabular}} \\ \hline
\multirow{6}{*}{\begin{tabular}[c]{@{}l@{}}\bf Nam\\ $\mathrm{ch}=27$\end{tabular}}     & $2$ & 62                                & 397                                 & 1.2                                                                                         & 1.3                                                                                  \\
                                                                                        & $3$ & 196                               & 4,179                               & 2.6                                                                                         & 3.7                                                                                  \\
                                                                                        & $4$ & 1,304                             & 36,177                              & 8.5                                                                                         & 21.4                                                                                 \\
                                                                                        & $5$ & 8,002                             & 269,846                             & 49.5                                                                                        & 174.7                                                                                \\
                                                                                        & $6$ & 56,152                            & 1,777,219                           & 370.3                                                                                       & 1,400.4                                                                              \\
                                                                                        & $7$ & 379,864                           & 10,432,127                          & 2,673.6                                                                                     & 10,461.2                                                                             \\ \hline
\multirow{4}{*}{\begin{tabular}[c]{@{}l@{}}\bf IBM\\ $\mathrm{ch}=1,362$\end{tabular}}  & $2$ & 1,912                             & 22,918                              & 22.9                                                                                        & 38.6                                                                                 \\
                                                                                        & $3$ & 5,086                             & 224,281                             & 100.4                                                                                       & 225.9                                                                                \\
                                                                                        & $4$ & 16,748                            & 1,552,185                           & 356.9                                                                                       & 1,290.0                                                                              \\
                                                                                        & $5$ & 225,068                           & 7,847,203                           & 1,844.8                                                                                     & 8,363.1                                                                              \\ \hline
\multirow{5}{*}{\begin{tabular}[c]{@{}l@{}}\bf Rigetti\\ $\mathrm{ch}=30$\end{tabular}} & $2$ & 66                                & 361                                 & 1.3                                                                                         & 1.5                                                                                  \\
                                                                                        & $3$ & 66                                & 3,143                               & 2.6                                                                                         & 3.7                                                                                  \\
                                                                                        & $4$ & 224                               & 22,043                              & 5.8                                                                                         & 15.4                                                                                 \\
                                                                                        & $5$ & 2,396                             & 134,423                             & 22.7                                                                                        & 100.2                                                                                \\
                                                                                        & $6$ & 15,464                            & 729,842                             & 132.0                                                                                       & 675.3                                                                                \\ \hline
\end{tabular}
\end{table}

\begin{table}[t]
    \caption{%
    Number of circuits considered when using \regen{} with or without the pruning techniques of \Cref{sec:pruning} to generate \nq-complete ECC sets for $q{=}3$ and varying values of $n$ for the three gate sets.    
    Circuits are counted by their sequence representation, as \regen{} considers multiple sequences for each actual circuit (\Cref{sec:generator}).    
    Parenthesis shows reduction relative to the number of all
    possible circuits for $n$ and $q$. 
    ``\regen{}'' corresponds to \regen{} without additional pruning.
    ``+ ECC Simplification'' corresponds to \regen{} combined with ECC simplification.
    ``+ Common Subcircuit'' corresponds to \regen{} combined with all pruning techniques
    and ultimately represents \sys's generator.
    }
    \label{tab:eval_generator}
    \vspace{\captionvspace}
    \small 
    \centering
\resizebox{\linewidth}{!}{%
\begin{threeparttable}
\begin{tabular}{ll|r|rrr}
\hline
                                                                                                                      & $n$ & \multicolumn{1}{c|}{\begin{tabular}[c]{@{}c@{}}\bf Possible\\ \bf Circuits\end{tabular}} & \multicolumn{1}{c}{\bf \regen{}} & \multicolumn{1}{c}{\begin{tabular}[c]{@{}c@{}}\bf + ECC Sim-\\ \bf plification\end{tabular}} & \multicolumn{1}{c}{\begin{tabular}[c]{@{}c@{}}\bf + Common\\ \bf Subcircuit\end{tabular}} \\ \hline
\multirow{6}{*}{\bf \hspace{-2mm}\rotatebox[origin=c]{90}{\begin{tabular}{@{}c@{}}Nam\end{tabular}}\hspace{-3mm}}     & $2$ & 604                                                                                      & 400 (2$\times$)                  & 50 (12$\times$)                                                                              & 50 (12$\times$)                                                                           \\
                                                                                                                      & $3$ & 11,404                                                                                   & 1,180 (10$\times$)               & 231 (49$\times$)                                                                             & 164 (70$\times$)                                                                          \\
                                                                                                                      & $4$ & 198,028                                                                                  & 5,178 (38$\times$)               & 2,170 (91$\times$)                                                                           & 1,199 (165$\times$)                                                                       \\
                                                                                                                      & $5$ & 3,246,220                                                                                & 31,517 (103$\times$)             & 18,244 (178$\times$)                                                                         & 7,661 (424$\times$)                                                                       \\
                                                                                                                      & $6$ & 51,021,964                                                                               & 195,466 (261$\times$)            & 131,554 (388$\times$)                                                                        & 54,538 (936$\times$)                                                                      \\
                                                                                                                      & $7$ & 776,616,076                                                                              & 1,196,163 (649$\times$)          & 875,080 (887$\times$)                                                                        & 369,973 (2,099$\times$)                                                                   \\ \hline
\multirow{4}{*}{\bf \hspace{-2mm}\rotatebox[origin=c]{90}{\begin{tabular}{@{}c@{}}IBM\end{tabular}}\hspace{-3mm}}     & $2$ & 35,005                                                                                   & 23,413 (1$\times$)               & 1,708 (20$\times$)                                                                           & 1,708 (20$\times$)                                                                        \\
                                                                                                                      & $3$ & 533,857                                                                                  & 62,594 (9$\times$)               & 10,287 (52$\times$)                                                                          & 4,563 (117$\times$)                                                                       \\
                                                                                                                      & $4$ & 6,446,209                                                                                & 185,315 (35$\times$)             & 65,343 (99$\times$)                                                                          & 15,746 (409$\times$)                                                                      \\
                                                                                                                      & $5$ & 68,078,785                                                                               & 921,611 (74$\times$)             & 512,975 (133$\times$)                                                                        & 219,551 (310$\times$)                                                                     \\ \hline
\multirow{5}{*}{\bf \hspace{-2mm}\rotatebox[origin=c]{90}{\begin{tabular}{@{}c@{}}Rigetti\end{tabular}}\hspace{-3mm}} & $2$ & 778                                                                                      & 469 (2$\times$)                  & 51 (15$\times$)                                                                              & 51$^\dagger$ (15$\times$)                                                                           \\
                                                                                                                      & $3$ & 17,518                                                                                   & 965 (18$\times$)                 & 117 (150$\times$)                                                                            & 51$^\dagger$ (343$\times$)                                                                          \\
                                                                                                                      & $4$ & 367,843                                                                                  & 2,293 (160$\times$)              & 548 (671$\times$)                                                                            & 203 (1,812$\times$)                                                                       \\
                                                                                                                      & $5$ & 7,354,093                                                                                & 10,568 (696$\times$)             & 4,949 (1,486$\times$)                                                                        & 2,337 (3,147$\times$)                                                                     \\
                                                                                                                      & $6$ & 141,763,468                                                                              & 58,193 (2,436$\times$)           & 35,690 (3,972$\times$)                                                                       & 15,240 (9,302$\times$)                                                                    \\ \hline
\end{tabular}
\begin{tablenotes}
\footnotesize
\item[$\dagger$] For Rigetti, $n=2$ and $n=3$ result in identical transformations---each 3-gate transformation is subsumed by 2-gate transformations in a way identified by \sys.
\end{tablenotes}
\end{threeparttable}
}
\end{table}

We now examine \sys's circuit generator and circuit equivalence verifier.
\Cref{tab:complexity} shows the run times of the entire generation procedure,
and also the time out of that spent in verification, for each of the three gate sets and
for varying values of $n$, while fixing $q=3$.
The table also lists the number of resulting circuit transformations $|\m{T}|$,
the size of the resulting representative set $|\m{R}_n|$,
and the characteristic (see \Cref{alg1} and \Cref{thm:complexity}).
For all gate sets, $|\m{T}|$ and $|\m{R}_n|$ grow exponentially with $n$.
In spite of this exponential growth, 
the generator and verifier can generate, in a reasonable run time of a few hours, an \nq-complete \ccs for values of $n$ and $q$ that are sufficiently large to be useful for circuit optimization.
The growth in the number of transformations significantly affects the optimizer.
For Nam and IBM, our selected values of $n=6$ and $n=4$ result in a similar order of magnitude for $|\m{T}|$. For Rigetti, we use $n=2$, resulting in much smaller $\m{T}$. %
This choice is related to the fact that circuits in the Rigetti gate set are larger by roughly an order of magnitude compared to Nam and IBM (compare ``Orig.'' in \Cref{tab:rigetti} with \Cref{tab:nam,tab:ibmq}; see discussion in  \Cref{subsec:scalability}).

We now evaluate the effectiveness of \regen{} and the pruning techniques described in~\Cref{sec:pruning} for reducing the number of circuits \sys must consider (which is closely correlated with the number of resulting transformations).
To evaluate the relative contribution of each technique,
\Cref{tab:eval_generator} reports the number of circuits considered when applying:
(i)~\regen{} without additional pruning,
(ii)~\regen{} combined with \ecc simplification, and
(iii)~\regen{} combined with both \ecc simplification and common subcircuit pruning;
and compares each of these to a brute force approach of generating all possible circuits with up to $q$ qubits and $n$ gates.
Both \regen{} and the pruning techniques play an important role in eliminating redundant circuits while preserving \nq-completeness.
Ultimately, \regen{} and the pruning techniques reduce the number of transformations the optimizer must consider by one to three orders of magnitude.%

\subsection{Analyzing \sys's Circuit Optimizer}
\label{subsec:scalability}

We now examine \sys's circuit optimizer when using an \nq-complete \ecc set for varying values of $n$ and $q$.
For this study we focus on the Nam gate set, and
compare different values for $n$ and $q$ by the \emph{optimization effectiveness} they yield, defined as the reduction in geometric mean gate count over all circuits (as in the bottom line of \Cref{tab:nam}).
For \texttt{mod5\_4}, when $q=3$ and $3 \leq n \leq 7$,
we use the median of 7 runs %
due to the variability discussed in \Cref{sec:eval:results}.

As we increase $n$ and $q$ we expect \sys's optimizer to:
(i)~be able to reach more optimized circuits, and
(ii)~require more time per search iteration.
Both of these follow from the fact that increasing $n$ and $q$ yields more transformations.
Under a fixed search time budget, we expect the increased cost of search iterations to reduce the positive impact of larger $n$'s and $q$'s.
Because each iteration (\Cref{alg:search})
considers a candidate circuit $C$ and computes $\Call{Apply}{C,T}$
for each transformation $T\in\m{T}$, the cost per iteration scales linearly with
the number of transformations $|\m{T}|$.
Since $|\m{T}|$ varies dramatically as $n$ and $q$ change,\footnote{
\ifarxiv
For example: with $q=3$, $|\m{T}|=196$ for $n=3$ and $|\m{T}|=56,152$ for $n=6$; with $q=4$, $|\m{T}|=208$ for $n=3$ and $|\m{T}|=273,532$ for $n=6$ (see \Cref{tab:complexity:full}). We were unable to generate a $(7,4)$-complete ECC set using \SI{512}{GB} of RAM.
\else
For example: with $q=3$, $|\m{T}|=196$ for $n=3$ and $|\m{T}|=56,152$ for $n=6$  (\Cref{tab:complexity}); with $q=4$, $|\m{T}|=208$ for $n=3$ and $|\m{T}|=273,532$ for $n=6$~\cite{extended}. We were unable to generate a $(7,4)$-complete ECC set using \SI{512}{GB} of RAM.
\fi
}
we expect the second effect (slowing down the search) to be significant,
especially for large circuits which typically require more search iterations (and additionally increase \textproc{Apply}'s running time).

\begin{figure}
    \centering
    \includegraphics[width=\evalfigfrac\linewidth]{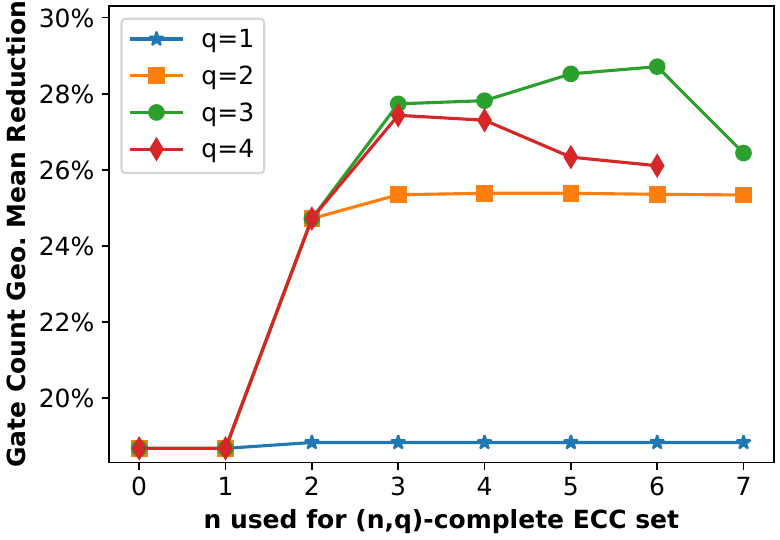}
    \vspace{\captionvspace}
    \caption{
    Optimization effectiveness with \nq-complete \ccss for varying $n$ and $q$ after a 24-hour search timeout.
    For $n=0$ there are no transformations and the results match the ``\sys Preprocess'' column of \Cref{tab:nam}.
    }
    \label{fig:scalability}
\end{figure}

\Cref{fig:scalability} shows optimization effectiveness (reduction in geometric mean gate count)
for varying values of $n$ and $q$,
under a search timeout of 24 hours.
The figure supports the tradeoff discussed above. %
Using too small values for $n$ and $q$ results in low effectiveness,
and as we increase $n$ or $q$ effectiveness increases but then starts decreasing,
as the negative impact of the large number of transformations starts outweighing their benefit.
\ifarxiv
(See \Cref{tab:complexity:full} for $|\m{T}|$ in each configuration.)
\else
(See~\cite{extended} for details about $|\m{T}|$ for each configuration.)
\fi
As expected, the optimal setting for $n$ and $q$ generally varies
across circuits---smaller circuits tend to be better optimized with larger values of 
\ifarxiv
$n$ (\Cref{tab:nam:full}).
\else
$n$~\cite{extended}.
\fi
Still, \Cref{fig:scalability} shows that there are several settings that yield good overall results:
$3 \le n \le 6$ for $q = 3$, and 
$3 \le n \le 4$ for $q = 4$.%
\footnote{
Interestingly, $q=3;3 \le n \le 6$ cover the best optimization results for all circuits
obtained among all configurations considered in
\ifarxiv
\Cref{fig:scalability} (\Cref{tab:nam:full}).
\else
\Cref{fig:scalability}~\cite{extended}.
\fi
}

\begin{figure}
    \centering
    \includegraphics[width=\evalfigfrac\linewidth]{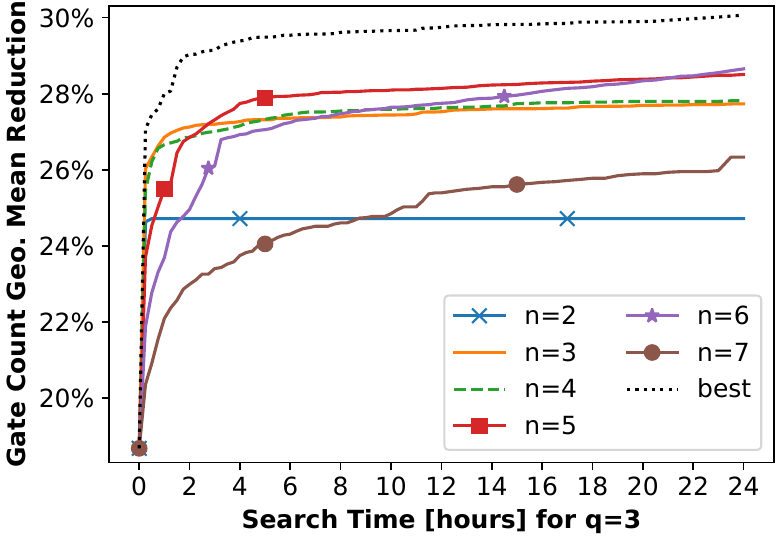}
    \vspace{\captionvspace}
    \caption{
    Optimization effectiveness over time ($q{=}3$; 
    $2{\le} n {\le} 7$).
    For each time point, ``best'' computes the reduction in geometric mean gate count obtained by selecting the most effective value for $n$ at that time point for each circuit (i.e., different circuits may use different $n$'s, and the same circuit may use different $n$'s at different time points).
    }
    \label{fig:time}
\end{figure}

\Cref{fig:time} shows how the search time impacts
optimization for different choices for $n$ (focusing on $q=3$).
For each value of $n$, we observe a quick initial burst, followed by a gentle increase.
At the end of the initial burst, effectiveness monotonically decreases as $n$ increases, for all $3 \le n \le 6$.
As time progresses the gaps diminish and eventually the order is reversed:
at around 21 hours $n=6$ surpasses $n=3$.
The settings $n=2$ and $n=7$ yield poor effectiveness:
$n=2$ does not contain an adequate number of transformations and quickly saturates the search time, while $n=7$ contains too many transformations and progresses too slowly.

\Cref{fig:time} also shows the effectiveness of a hypothetical run constructed by taking the best setting \emph{for each circuit at each time}.
This ``best'' curve considerably outperforms the others, because the best setting for $n$ varies across circuits with different sizes.

\ifarxiv
See \Cref{sec:extended} for more details, including plots akin to \Cref{fig:scalability} and \Cref{fig:time} for each circuit.
\else
See~\cite{extended}
for more details, including plots akin to \Cref{fig:scalability} and \Cref{fig:time} for each circuit and a detailed results table.
\fi

\section{Related Work}
\label{sec:related}

\paragraph{Quantum circuit compilation.}
Several optimizing compilers for quantum circuits have been recently introduced
and are being actively developed:
\qiskit~\cite{qiskit} and \tket~\cite{tket} support generic gate sets;
\quilc~\cite{quilc} is tailored to Rigetti Agave quantum processors;
\voqc~\cite{VOQC} is formally verified in Coq.
CertiQ~\cite{shi2020certiq} is a framework for writing and verifying \qiskit compiler passes.
\nam~\cite{nam2018automated} develop heuristics tailored to the $\{H, X, R_z, CNOT\}$ gate set.
Unlike \sys, these systems rely on quantum-computing experts to design, implement, and verify transformations.

Quanto~\cite{quanto} automatically discovers transformations by computing concrete
matrix representations of circuits. It supports parameters only by considering concrete values, and unlike \sys, it does not discover or verify symbolic transformations, which are the source of many of the challenges \sys deals with.
 Quanto uses  floating-point matrix equality to identify equivalence between circuits, while Quartz uses a combination of fingerprinting, SMT-based verification, the \regen{} algorithm, and other pruning techniques, which are needed since symbolic parameters greatly increase the number of possible circuits in the generation procedure.

Different from the aforementioned quantum optimizers that consider circuit transformations, PyZX~\cite{pyzx} employs ZX-diagrams as an intermediate representation for quantum circuits and uses a small set of complete rewrite rules in ZX-calculus~\cite{hadzihasanovic18, jeandel18} to simplify ZX-diagrams, which are finally converted back into quantum circuits.

While our approach builds on some of the techniques developed in prior work,
\sys is the first quantum circuit optimizer that can automatically generate and verify symbolic circuit transformations for arbitrary gate sets.

\paragraph{Superoptimization.}
Superoptimization is a compiler optimization technique originally designed to search for an optimal sequence of instructions for an input program~\cite{massalin1987}.
Our approach to generating quantum circuit transformations by tracking equivalent classes of circuits is inspired by prior work in automatically generating peephole optimizations for the X86 instruction set~\cite{heule2016, bansal2006} and generating graph substitutions for tensor algebra~\cite{jia2019taso, tensat, wang2021pet}.

TASO~\cite{jia2019taso} is a tensor algebra superoptimizer that optimizes computation graphs of deep neural networks using automatically generated graph substitutions.
TENSAT~\cite{tensat} reuses the graph substitutions discovered by TASO and employs equality saturation for tensor graph superoptimization.
While \sys draws inspiration from TASO and uses a similar search procedure,
it is significantly different from prior superoptimization works because it targets quantum computing, which leads to a different semantics (i.e., using complex matrices) as well as a different notion of program equivalence (i.e., up to a global phase). Verifying quantum circuit transformations therefore uses different techniques compared to other superoptimization contexts.
Applying equality saturation as in TENSAT~\cite{tensat} for optimizing quantum circuits is an interesting avenue for future work.

\section{Conclusion and Future Work}

We have presented \sys, a quantum circuit superoptimizer that automatically generates and verifies circuit transformations for arbitrary gate sets with symbolic parameters.
While \sys shows that a superoptimization-based approach to optimizing quantum circuits is practical, we believe there are many opportunities for further improvement.
As discussed in \Cref{subsec:scalability}, \sys's current search algorithm limits the number of transformations that can be effectively utilized. Improving the search algorithm may therefore lead to better optimization using \nq-complete ECC sets for larger values of $n$ and $q$, which may  also require improving the generator.
Another limitation of \sys that suggests an opportunity for future work is that it only targets the logical circuit optimization stage and does not consider qubit mapping.
Applying superoptimization to jointly optimize circuit logic and qubit mapping is both challenging and promising.

\section*{Acknowledgments}
We thank the anonymous PLDI reviewers and our shepherd, Xiaodi Wu, 
for their feedback.
This work was partially supported by the
\grantsponsor{NSF}{National Science Foundation}{nsf.gov}
under grant numbers~%
\grantnum{NSF}{CCF-2115104},
\grantnum{NSF}{CCF-2119352},
and
\grantnum{NSF}{CCF-2107241}.

\ifarxiv
\clearpage %
\fi

\bibliography{bib}


\begin{thebibliography}{34}


\ifx \showCODEN    \undefined \def \showCODEN     #1{\unskip}     \fi
\ifx \showDOI      \undefined \def \showDOI       #1{#1}\fi
\ifx \showISBNx    \undefined \def \showISBNx     #1{\unskip}     \fi
\ifx \showISBNxiii \undefined \def \showISBNxiii  #1{\unskip}     \fi
\ifx \showISSN     \undefined \def \showISSN      #1{\unskip}     \fi
\ifx \showLCCN     \undefined \def \showLCCN      #1{\unskip}     \fi
\ifx \shownote     \undefined \def \shownote      #1{#1}          \fi
\ifx \showarticletitle \undefined \def \showarticletitle #1{#1}   \fi
\ifx \showURL      \undefined \def \showURL       {\relax}        \fi
\providecommand\bibfield[2]{#2}
\providecommand\bibinfo[2]{#2}
\providecommand\natexlab[1]{#1}
\providecommand\showeprint[2][]{arXiv:#2}

\bibitem[\protect\citeauthoryear{??}{qis}{2021a}]%
        {qiskit_consolidateblocks}
 \bibinfo{year}{2021}\natexlab{a}.
\newblock \bibinfo{title}{Qiskit ConsolidateBlocks}.
\newblock
  \bibinfo{howpublished}{\url{https://qiskit.org/documentation/stubs/qiskit.transpiler.passes.ConsolidateBlocks.html}}.
\newblock


\bibitem[\protect\citeauthoryear{??}{qis}{2021b}]%
        {qiskit_optimize1qgates}
 \bibinfo{year}{2021}\natexlab{b}.
\newblock \bibinfo{title}{Qiskit Optimize1qGates}.
\newblock
  \bibinfo{howpublished}{\url{https://qiskit.org/documentation/stubs/qiskit.transpiler.passes.Optimize1qGates.html}}.
\newblock


\bibitem[\protect\citeauthoryear{??}{rig}{2021}]%
        {rigetti_gates}
 \bibinfo{year}{2021}\natexlab{}.
\newblock \bibinfo{title}{Rigetti Gates}.
\newblock
  \bibinfo{howpublished}{\url{https://pyquil-docs.rigetti.com/en/v2.7.0/apidocs/gates.html}}.
\newblock


\bibitem[\protect\citeauthoryear{??}{qua}{2022}]%
        {quartz_github}
 \bibinfo{year}{2022}\natexlab{}.
\newblock \bibinfo{title}{The Quartz Quantum Circuit SuperOptimizer}.
\newblock
  \bibinfo{howpublished}{\url{https://github.com/quantum-compiler/quartz}}.
\newblock


\bibitem[\protect\citeauthoryear{Aleksandrowicz, Alexander, Barkoutsos, Bello,
  Ben-Haim, Bucher, Cabrera-Hernández, Carballo-Franquis, Chen, Chen, Chow,
  Córcoles-Gonzales, Cross, Cross, Cruz-Benito, Culver, González, Torre,
  Ding, Dumitrescu, Duran, Eendebak, Everitt, Sertage, Frisch, Fuhrer,
  Gambetta, Gago, Gomez-Mosquera, Greenberg, Hamamura, Havlicek, Hellmers,
  Łukasz Herok, Horii, Hu, Imamichi, Itoko, Javadi-Abhari, Kanazawa, Karazeev,
  Krsulich, Liu, Luh, Maeng, Marques, Martín-Fernández, McClure, McKay,
  Meesala, Mezzacapo, Moll, Rodríguez, Nannicini, Nation, Ollitrault,
  O'Riordan, Paik, Pérez, Phan, Pistoia, Prutyanov, Reuter, Rice, Davila,
  Rudy, Ryu, Sathaye, Schnabel, Schoute, Setia, Shi, Silva, Siraichi,
  Sivarajah, Smolin, Soeken, Takahashi, Tavernelli, Taylor, Taylour, Trabing,
  Treinish, Turner, Vogt-Lee, Vuillot, Wildstrom, Wilson, Winston, Wood, Wood,
  Wörner, Akhalwaya, and Zoufal}{Aleksandrowicz et~al\mbox{.}}{2019}]%
        {qiskit}
\bibfield{author}{\bibinfo{person}{Gadi Aleksandrowicz},
  \bibinfo{person}{Thomas Alexander}, \bibinfo{person}{Panagiotis Barkoutsos},
  \bibinfo{person}{Luciano Bello}, \bibinfo{person}{Yael Ben-Haim},
  \bibinfo{person}{David Bucher}, \bibinfo{person}{Francisco~Jose
  Cabrera-Hernández}, \bibinfo{person}{Jorge Carballo-Franquis},
  \bibinfo{person}{Adrian Chen}, \bibinfo{person}{Chun-Fu Chen},
  \bibinfo{person}{Jerry~M. Chow}, \bibinfo{person}{Antonio~D.
  Córcoles-Gonzales}, \bibinfo{person}{Abigail~J. Cross},
  \bibinfo{person}{Andrew Cross}, \bibinfo{person}{Juan Cruz-Benito},
  \bibinfo{person}{Chris Culver}, \bibinfo{person}{Salvador De La~Puente
  González}, \bibinfo{person}{Enrique De~La Torre}, \bibinfo{person}{Delton
  Ding}, \bibinfo{person}{Eugene Dumitrescu}, \bibinfo{person}{Ivan Duran},
  \bibinfo{person}{Pieter Eendebak}, \bibinfo{person}{Mark Everitt},
  \bibinfo{person}{Ismael~Faro Sertage}, \bibinfo{person}{Albert Frisch},
  \bibinfo{person}{Andreas Fuhrer}, \bibinfo{person}{Jay Gambetta},
  \bibinfo{person}{Borja~Godoy Gago}, \bibinfo{person}{Juan Gomez-Mosquera},
  \bibinfo{person}{Donny Greenberg}, \bibinfo{person}{Ikko Hamamura},
  \bibinfo{person}{Vojtech Havlicek}, \bibinfo{person}{Joe Hellmers},
  \bibinfo{person}{Łukasz Herok}, \bibinfo{person}{Hiroshi Horii},
  \bibinfo{person}{Shaohan Hu}, \bibinfo{person}{Takashi Imamichi},
  \bibinfo{person}{Toshinari Itoko}, \bibinfo{person}{Ali Javadi-Abhari},
  \bibinfo{person}{Naoki Kanazawa}, \bibinfo{person}{Anton Karazeev},
  \bibinfo{person}{Kevin Krsulich}, \bibinfo{person}{Peng Liu},
  \bibinfo{person}{Yang Luh}, \bibinfo{person}{Yunho Maeng},
  \bibinfo{person}{Manoel Marques}, \bibinfo{person}{Francisco~Jose
  Martín-Fernández}, \bibinfo{person}{Douglas~T. McClure},
  \bibinfo{person}{David McKay}, \bibinfo{person}{Srujan Meesala},
  \bibinfo{person}{Antonio Mezzacapo}, \bibinfo{person}{Nikolaj Moll},
  \bibinfo{person}{Diego~Moreda Rodríguez}, \bibinfo{person}{Giacomo
  Nannicini}, \bibinfo{person}{Paul Nation}, \bibinfo{person}{Pauline
  Ollitrault}, \bibinfo{person}{Lee~James O'Riordan}, \bibinfo{person}{Hanhee
  Paik}, \bibinfo{person}{Jesús Pérez}, \bibinfo{person}{Anna Phan},
  \bibinfo{person}{Marco Pistoia}, \bibinfo{person}{Viktor Prutyanov},
  \bibinfo{person}{Max Reuter}, \bibinfo{person}{Julia Rice},
  \bibinfo{person}{Abdón~Rodríguez Davila}, \bibinfo{person}{Raymond
  Harry~Putra Rudy}, \bibinfo{person}{Mingi Ryu}, \bibinfo{person}{Ninad
  Sathaye}, \bibinfo{person}{Chris Schnabel}, \bibinfo{person}{Eddie Schoute},
  \bibinfo{person}{Kanav Setia}, \bibinfo{person}{Yunong Shi},
  \bibinfo{person}{Adenilton Silva}, \bibinfo{person}{Yukio Siraichi},
  \bibinfo{person}{Seyon Sivarajah}, \bibinfo{person}{John~A. Smolin},
  \bibinfo{person}{Mathias Soeken}, \bibinfo{person}{Hitomi Takahashi},
  \bibinfo{person}{Ivano Tavernelli}, \bibinfo{person}{Charles Taylor},
  \bibinfo{person}{Pete Taylour}, \bibinfo{person}{Kenso Trabing},
  \bibinfo{person}{Matthew Treinish}, \bibinfo{person}{Wes Turner},
  \bibinfo{person}{Desiree Vogt-Lee}, \bibinfo{person}{Christophe Vuillot},
  \bibinfo{person}{Jonathan~A. Wildstrom}, \bibinfo{person}{Jessica Wilson},
  \bibinfo{person}{Erick Winston}, \bibinfo{person}{Christopher Wood},
  \bibinfo{person}{Stephen Wood}, \bibinfo{person}{Stefan Wörner},
  \bibinfo{person}{Ismail~Yunus Akhalwaya}, {and} \bibinfo{person}{Christa
  Zoufal}.} \bibinfo{year}{2019}\natexlab{}.
\newblock \bibinfo{booktitle}{\emph{{Qiskit: An Open-source Framework for
  Quantum Computing}}}.
\newblock
\urldef\tempurl%
\url{https://doi.org/10.5281/zenodo.2562111}
\showDOI{\tempurl}


\bibitem[\protect\citeauthoryear{Amy, Maslov, and Mosca}{Amy
  et~al\mbox{.}}{2014}]%
        {amy2014}
\bibfield{author}{\bibinfo{person}{Matthew Amy}, \bibinfo{person}{Dmitri
  Maslov}, {and} \bibinfo{person}{Michele Mosca}.}
  \bibinfo{year}{2014}\natexlab{}.
\newblock \showarticletitle{Polynomial-Time T-Depth Optimization of Clifford+T
  Circuits Via Matroid Partitioning}.
\newblock \bibinfo{journal}{\emph{IEEE Transactions on Computer-Aided Design of
  Integrated Circuits and Systems}} \bibinfo{volume}{33}, \bibinfo{number}{10}
  (\bibinfo{year}{2014}), \bibinfo{pages}{1476--1489}.
\newblock
\urldef\tempurl%
\url{https://doi.org/10.1109/TCAD.2014.2341953}
\showDOI{\tempurl}


\bibitem[\protect\citeauthoryear{Bansal and Aiken}{Bansal and Aiken}{2006}]%
        {bansal2006}
\bibfield{author}{\bibinfo{person}{Sorav Bansal} {and} \bibinfo{person}{Alex
  Aiken}.} \bibinfo{year}{2006}\natexlab{}.
\newblock \showarticletitle{Automatic Generation of Peephole Superoptimizers}.
\newblock \bibinfo{journal}{\emph{SIGOPS Oper. Syst. Rev.}}
  \bibinfo{volume}{40}, \bibinfo{number}{5} (\bibinfo{date}{Oct.}
  \bibinfo{year}{2006}), \bibinfo{pages}{394–403}.
\newblock
\showISSN{0163-5980}
\urldef\tempurl%
\url{https://doi.org/10.1145/1168917.1168906}
\showDOI{\tempurl}


\bibitem[\protect\citeauthoryear{Barrett, Conway, Deters, Hadarean, Jovanovic,
  King, Reynolds, and Tinelli}{Barrett et~al\mbox{.}}{2011}]%
        {cvc4}
\bibfield{author}{\bibinfo{person}{Clark~W. Barrett},
  \bibinfo{person}{Christopher~L. Conway}, \bibinfo{person}{Morgan Deters},
  \bibinfo{person}{Liana Hadarean}, \bibinfo{person}{Dejan Jovanovic},
  \bibinfo{person}{Tim King}, \bibinfo{person}{Andrew Reynolds}, {and}
  \bibinfo{person}{Cesare Tinelli}.} \bibinfo{year}{2011}\natexlab{}.
\newblock \showarticletitle{{CVC4}}. In \bibinfo{booktitle}{\emph{Computer
  Aided Verification - 23rd International Conference, {CAV} 2011, Snowbird, UT,
  USA, July 14-20, 2011. Proceedings}} \emph{(\bibinfo{series}{Lecture Notes in
  Computer Science}, Vol.~\bibinfo{volume}{6806})},
  \bibfield{editor}{\bibinfo{person}{Ganesh Gopalakrishnan} {and}
  \bibinfo{person}{Shaz Qadeer}} (Eds.). \bibinfo{publisher}{Springer},
  \bibinfo{pages}{171--177}.
\newblock
\urldef\tempurl%
\url{https://doi.org/10.1007/978-3-642-22110-1\_14}
\showDOI{\tempurl}


\bibitem[\protect\citeauthoryear{Cimatti, Griggio, Irfan, Roveri, and
  Sebastiani}{Cimatti et~al\mbox{.}}{2017}]%
        {DBLP:conf/cade/CimattiGIRS17}
\bibfield{author}{\bibinfo{person}{Alessandro Cimatti},
  \bibinfo{person}{Alberto Griggio}, \bibinfo{person}{Ahmed Irfan},
  \bibinfo{person}{Marco Roveri}, {and} \bibinfo{person}{Roberto Sebastiani}.}
  \bibinfo{year}{2017}\natexlab{}.
\newblock \showarticletitle{Satisfiability Modulo Transcendental Functions via
  Incremental Linearization}. In \bibinfo{booktitle}{\emph{Automated Deduction
  - {CADE} 26 - 26th International Conference on Automated Deduction,
  Gothenburg, Sweden, August 6-11, 2017, Proceedings}}
  \emph{(\bibinfo{series}{Lecture Notes in Computer Science},
  Vol.~\bibinfo{volume}{10395})}, \bibfield{editor}{\bibinfo{person}{Leonardo
  de~Moura}} (Ed.). \bibinfo{publisher}{Springer}, \bibinfo{pages}{95--113}.
\newblock
\urldef\tempurl%
\url{https://doi.org/10.1007/978-3-319-63046-5\_7}
\showDOI{\tempurl}


\bibitem[\protect\citeauthoryear{de~Moura and Bj{\o}rner}{de~Moura and
  Bj{\o}rner}{2008}]%
        {z3}
\bibfield{author}{\bibinfo{person}{Leonardo de Moura} {and}
  \bibinfo{person}{Nikolaj Bj{\o}rner}.} \bibinfo{year}{2008}\natexlab{}.
\newblock \showarticletitle{{Z3:} An Efficient {SMT} Solver}. In
  \bibinfo{booktitle}{\emph{Tools and Algorithms for the Construction and
  Analysis of Systems, 14th International Conference, {TACAS} 2008, Held as
  Part of the Joint European Conferences on Theory and Practice of Software,
  {ETAPS} 2008, Budapest, Hungary, March 29-April 6, 2008. Proceedings}}
  \emph{(\bibinfo{series}{Lecture Notes in Computer Science},
  Vol.~\bibinfo{volume}{4963})}, \bibfield{editor}{\bibinfo{person}{C.~R.
  Ramakrishnan} {and} \bibinfo{person}{Jakob Rehof}} (Eds.).
  \bibinfo{publisher}{Springer}, \bibinfo{pages}{337--340}.
\newblock
\urldef\tempurl%
\url{https://doi.org/10.1007/978-3-540-78800-3\_24}
\showDOI{\tempurl}


\bibitem[\protect\citeauthoryear{Ding, Holmes, Javadi{-}Abhari, Franklin,
  Martonosi, and Chong}{Ding et~al\mbox{.}}{2018}]%
        {dhj+magic-2018}
\bibfield{author}{\bibinfo{person}{Yongshan Ding}, \bibinfo{person}{Adam
  Holmes}, \bibinfo{person}{Ali Javadi{-}Abhari}, \bibinfo{person}{Diana
  Franklin}, \bibinfo{person}{Margaret Martonosi}, {and}
  \bibinfo{person}{Frederic~T. Chong}.} \bibinfo{year}{2018}\natexlab{}.
\newblock \showarticletitle{Magic-State Functional Units: Mapping and
  Scheduling Multi-Level Distillation Circuits for Fault-Tolerant Quantum
  Architectures}. In \bibinfo{booktitle}{\emph{51st Annual {IEEE/ACM}
  International Symposium on Microarchitecture, {MICRO} 2018, Fukuoka, Japan,
  October 20-24, 2018}}. \bibinfo{publisher}{{IEEE} Computer Society},
  \bibinfo{pages}{828--840}.
\newblock
\urldef\tempurl%
\url{https://doi.org/10.1109/MICRO.2018.00072}
\showDOI{\tempurl}


\bibitem[\protect\citeauthoryear{Dumitrescu, McCaskey, Hagen, Jansen, Morris,
  Papenbrock, Pooser, Dean, and Lougovski}{Dumitrescu et~al\mbox{.}}{2018}]%
        {dumitrescu2018cloud}
\bibfield{author}{\bibinfo{person}{Eugene~F Dumitrescu},
  \bibinfo{person}{Alex~J McCaskey}, \bibinfo{person}{Gaute Hagen},
  \bibinfo{person}{Gustav~R Jansen}, \bibinfo{person}{Titus~D Morris},
  \bibinfo{person}{T Papenbrock}, \bibinfo{person}{Raphael~C Pooser},
  \bibinfo{person}{David~Jarvis Dean}, {and} \bibinfo{person}{Pavel
  Lougovski}.} \bibinfo{year}{2018}\natexlab{}.
\newblock \showarticletitle{Cloud quantum computing of an atomic nucleus}.
\newblock \bibinfo{journal}{\emph{Physical review letters}}
  \bibinfo{volume}{120}, \bibinfo{number}{21} (\bibinfo{year}{2018}),
  \bibinfo{pages}{210501}.
\newblock


\bibitem[\protect\citeauthoryear{Hadzihasanovic, Ng, and Wang}{Hadzihasanovic
  et~al\mbox{.}}{2018}]%
        {hadzihasanovic18}
\bibfield{author}{\bibinfo{person}{Amar Hadzihasanovic},
  \bibinfo{person}{Kang~Feng Ng}, {and} \bibinfo{person}{Quanlong Wang}.}
  \bibinfo{year}{2018}\natexlab{}.
\newblock \showarticletitle{Two Complete Axiomatisations of Pure-State Qubit
  Quantum Computing}. In \bibinfo{booktitle}{\emph{Proceedings of the 33rd
  Annual ACM/IEEE Symposium on Logic in Computer Science}} (Oxford, United
  Kingdom) \emph{(\bibinfo{series}{LICS '18})}. \bibinfo{publisher}{Association
  for Computing Machinery}, \bibinfo{address}{New York, NY, USA},
  \bibinfo{pages}{502–511}.
\newblock
\showISBNx{9781450355834}
\urldef\tempurl%
\url{https://doi.org/10.1145/3209108.3209128}
\showDOI{\tempurl}


\bibitem[\protect\citeauthoryear{Heule, Schkufza, Sharma, and Aiken}{Heule
  et~al\mbox{.}}{2016}]%
        {heule2016}
\bibfield{author}{\bibinfo{person}{Stefan Heule}, \bibinfo{person}{Eric
  Schkufza}, \bibinfo{person}{Rahul Sharma}, {and} \bibinfo{person}{Alex
  Aiken}.} \bibinfo{year}{2016}\natexlab{}.
\newblock \showarticletitle{Stratified Synthesis: Automatically Learning the
  X86-64 Instruction Set}. In \bibinfo{booktitle}{\emph{Proceedings of the 37th
  ACM SIGPLAN Conference on Programming Language Design and Implementation}}
  (Santa Barbara, CA, USA) \emph{(\bibinfo{series}{PLDI '16})}.
  \bibinfo{publisher}{Association for Computing Machinery},
  \bibinfo{address}{New York, NY, USA}, \bibinfo{pages}{237–250}.
\newblock
\showISBNx{9781450342612}
\urldef\tempurl%
\url{https://doi.org/10.1145/2908080.2908121}
\showDOI{\tempurl}


\bibitem[\protect\citeauthoryear{Hietala, Rand, Hung, Wu, and Hicks}{Hietala
  et~al\mbox{.}}{2021}]%
        {VOQC}
\bibfield{author}{\bibinfo{person}{Kesha Hietala}, \bibinfo{person}{Robert
  Rand}, \bibinfo{person}{Shih-Han Hung}, \bibinfo{person}{Xiaodi Wu}, {and}
  \bibinfo{person}{Michael Hicks}.} \bibinfo{year}{2021}\natexlab{}.
\newblock \showarticletitle{A Verified Optimizer for Quantum Circuits}.
\newblock \bibinfo{journal}{\emph{Proc. ACM Program. Lang.}}
  \bibinfo{volume}{5}, \bibinfo{number}{POPL}, Article \bibinfo{articleno}{37}
  (\bibinfo{date}{Jan.} \bibinfo{year}{2021}), \bibinfo{numpages}{29}~pages.
\newblock
\urldef\tempurl%
\url{https://doi.org/10.1145/3434318}
\showDOI{\tempurl}


\bibitem[\protect\citeauthoryear{Jeandel, Perdrix, and Vilmart}{Jeandel
  et~al\mbox{.}}{2018}]%
        {jeandel18}
\bibfield{author}{\bibinfo{person}{Emmanuel Jeandel}, \bibinfo{person}{Simon
  Perdrix}, {and} \bibinfo{person}{Renaud Vilmart}.}
  \bibinfo{year}{2018}\natexlab{}.
\newblock \showarticletitle{A Complete Axiomatisation of the ZX-Calculus for
  Clifford+T Quantum Mechanics}. In \bibinfo{booktitle}{\emph{Proceedings of
  the 33rd Annual ACM/IEEE Symposium on Logic in Computer Science}} (Oxford,
  United Kingdom) \emph{(\bibinfo{series}{LICS '18})}.
  \bibinfo{publisher}{Association for Computing Machinery},
  \bibinfo{address}{New York, NY, USA}, \bibinfo{pages}{559–568}.
\newblock
\showISBNx{9781450355834}
\urldef\tempurl%
\url{https://doi.org/10.1145/3209108.3209131}
\showDOI{\tempurl}


\bibitem[\protect\citeauthoryear{Jia, Padon, Thomas, Warszawski, Zaharia, and
  Aiken}{Jia et~al\mbox{.}}{2019a}]%
        {jia2019taso}
\bibfield{author}{\bibinfo{person}{Zhihao Jia}, \bibinfo{person}{Oded Padon},
  \bibinfo{person}{James Thomas}, \bibinfo{person}{Todd Warszawski},
  \bibinfo{person}{Matei Zaharia}, {and} \bibinfo{person}{Alex Aiken}.}
  \bibinfo{year}{2019}\natexlab{a}.
\newblock \showarticletitle{TASO: Optimizing Deep Learning Computation with
  Automatic Generation of Graph Substitutions}. In
  \bibinfo{booktitle}{\emph{Proceedings of the 27th ACM Symposium on Operating
  Systems Principles}} (Huntsville, Ontario, Canada)
  \emph{(\bibinfo{series}{SOSP '19})}. \bibinfo{publisher}{Association for
  Computing Machinery}, \bibinfo{address}{New York, NY, USA},
  \bibinfo{pages}{47–62}.
\newblock
\showISBNx{9781450368735}
\urldef\tempurl%
\url{https://doi.org/10.1145/3341301.3359630}
\showDOI{\tempurl}


\bibitem[\protect\citeauthoryear{Jia, Thomas, Warzawski, Gao, Zaharia, and
  Aiken}{Jia et~al\mbox{.}}{2019b}]%
        {jia2019metaflow}
\bibfield{author}{\bibinfo{person}{Zhihao Jia}, \bibinfo{person}{James Thomas},
  \bibinfo{person}{Todd Warzawski}, \bibinfo{person}{Mingyu Gao},
  \bibinfo{person}{Matei Zaharia}, {and} \bibinfo{person}{Alex Aiken}.}
  \bibinfo{year}{2019}\natexlab{b}.
\newblock \showarticletitle{Optimizing DNN Computation with Relaxed Graph
  Substitutions}. In \bibinfo{booktitle}{\emph{Proceedings of the 2nd
  Conference on Systems and Machine Learning}}
  \emph{(\bibinfo{series}{SysML'19})}.
\newblock


\bibitem[\protect\citeauthoryear{Kissinger and van~de Wetering}{Kissinger and
  van~de Wetering}{2019}]%
        {kissinger2019reducing}
\bibfield{author}{\bibinfo{person}{Aleks Kissinger} {and} \bibinfo{person}{John
  van~de Wetering}.} \bibinfo{year}{2019}\natexlab{}.
\newblock \showarticletitle{Reducing T-count with the ZX-calculus}.
\newblock \bibinfo{journal}{\emph{arXiv preprint arXiv:1903.10477}}
  (\bibinfo{year}{2019}).
\newblock


\bibitem[\protect\citeauthoryear{Kissinger and van~de Wetering}{Kissinger and
  van~de Wetering}{2020}]%
        {pyzx}
\bibfield{author}{\bibinfo{person}{Aleks Kissinger} {and} \bibinfo{person}{John
  van~de Wetering}.} \bibinfo{year}{2020}\natexlab{}.
\newblock \showarticletitle{{PyZX}: Large Scale Automated Diagrammatic
  Reasoning}.
\newblock \bibinfo{journal}{\emph{Electronic Proceedings in Theoretical
  Computer Science}}  \bibinfo{volume}{318} (\bibinfo{date}{may}
  \bibinfo{year}{2020}), \bibinfo{pages}{229--241}.
\newblock
\urldef\tempurl%
\url{https://doi.org/10.4204/eptcs.318.14}
\showDOI{\tempurl}


\bibitem[\protect\citeauthoryear{Koenig, Padon, and Aiken}{Koenig
  et~al\mbox{.}}{2021}]%
        {DBLP:conf/pldi/KoenigPA21}
\bibfield{author}{\bibinfo{person}{Jason~R. Koenig}, \bibinfo{person}{Oded
  Padon}, {and} \bibinfo{person}{Alex Aiken}.} \bibinfo{year}{2021}\natexlab{}.
\newblock \showarticletitle{Adaptive restarts for stochastic synthesis}. In
  \bibinfo{booktitle}{\emph{{PLDI} '21: 42nd {ACM} {SIGPLAN} International
  Conference on Programming Language Design and Implementation, Virtual Event,
  Canada, June 20-25, 2021}}, \bibfield{editor}{\bibinfo{person}{Stephen~N.
  Freund} {and} \bibinfo{person}{Eran Yahav}} (Eds.).
  \bibinfo{publisher}{{ACM}}, \bibinfo{pages}{696--709}.
\newblock
\urldef\tempurl%
\url{https://doi.org/10.1145/3453483.3454071}
\showDOI{\tempurl}


\bibitem[\protect\citeauthoryear{Massalin}{Massalin}{1987}]%
        {massalin1987}
\bibfield{author}{\bibinfo{person}{Henry Massalin}.}
  \bibinfo{year}{1987}\natexlab{}.
\newblock \showarticletitle{Superoptimizer: A Look at the Smallest Program}
  \emph{(\bibinfo{series}{ASPLOS II})}. \bibinfo{publisher}{IEEE Computer
  Society Press}, \bibinfo{address}{Washington, DC, USA},
  \bibinfo{pages}{122–126}.
\newblock
\showISBNx{0818608056}
\urldef\tempurl%
\url{https://doi.org/10.1145/36206.36194}
\showDOI{\tempurl}


\bibitem[\protect\citeauthoryear{Nam, Ross, Su, Childs, and Maslov}{Nam
  et~al\mbox{.}}{2018}]%
        {nam2018automated}
\bibfield{author}{\bibinfo{person}{Yunseong Nam}, \bibinfo{person}{Neil~J
  Ross}, \bibinfo{person}{Yuan Su}, \bibinfo{person}{Andrew~M Childs}, {and}
  \bibinfo{person}{Dmitri Maslov}.} \bibinfo{year}{2018}\natexlab{}.
\newblock \showarticletitle{Automated optimization of large quantum circuits
  with continuous parameters}.
\newblock \bibinfo{journal}{\emph{npj Quantum Information}}
  \bibinfo{volume}{4}, \bibinfo{number}{1} (\bibinfo{year}{2018}),
  \bibinfo{pages}{1--12}.
\newblock


\bibitem[\protect\citeauthoryear{Nielsen and Chuang}{Nielsen and
  Chuang}{2001}]%
        {Nielsen00}
\bibfield{author}{\bibinfo{person}{M. Nielsen} {and} \bibinfo{person}{I.
  Chuang}.} \bibinfo{year}{2001}\natexlab{}.
\newblock \bibinfo{booktitle}{\emph{Quantum Computation and Quantum
  Information}}.
\newblock \bibinfo{publisher}{Cambridge University Press}.
\newblock


\bibitem[\protect\citeauthoryear{Pointing, Padon, Jia, Ma, Hirth, Palsberg, and
  Aiken}{Pointing et~al\mbox{.}}{2021}]%
        {quanto}
\bibfield{author}{\bibinfo{person}{Jessica Pointing}, \bibinfo{person}{Oded
  Padon}, \bibinfo{person}{Zhihao Jia}, \bibinfo{person}{Henry Ma},
  \bibinfo{person}{Auguste Hirth}, \bibinfo{person}{Jens Palsberg}, {and}
  \bibinfo{person}{Alex Aiken}.} \bibinfo{year}{2021}\natexlab{}.
\newblock \showarticletitle{Quanto: Optimizing Quantum Circuits with Automatic
  Generation of Circuit Identities}.
\newblock \bibinfo{journal}{\emph{arXiv:2111.11387}} (\bibinfo{year}{2021}).
\newblock
\urldef\tempurl%
\url{https://doi.org/10.48550/arXiv.2111.11387}
\showURL{%
\tempurl}


\bibitem[\protect\citeauthoryear{Reagor, Osborn, Tezak, Staley, Prawiroatmodjo,
  Scheer, Alidoust, Sete, Didier, da~Silva, and et~al.}{Reagor
  et~al\mbox{.}}{2018}]%
        {rigetti_agave}
\bibfield{author}{\bibinfo{person}{Matthew Reagor},
  \bibinfo{person}{Christopher~B. Osborn}, \bibinfo{person}{Nikolas Tezak},
  \bibinfo{person}{Alexa Staley}, \bibinfo{person}{Guenevere Prawiroatmodjo},
  \bibinfo{person}{Michael Scheer}, \bibinfo{person}{Nasser Alidoust},
  \bibinfo{person}{Eyob~A. Sete}, \bibinfo{person}{Nicolas Didier},
  \bibinfo{person}{Marcus~P. da Silva}, {and} \bibinfo{person}{et al.}}
  \bibinfo{year}{2018}\natexlab{}.
\newblock \showarticletitle{Demonstration of universal parametric entangling
  gates on a multi-qubit lattice}.
\newblock \bibinfo{journal}{\emph{Science Advances}} \bibinfo{volume}{4},
  \bibinfo{number}{2} (\bibinfo{date}{Feb} \bibinfo{year}{2018}).
\newblock
\showISSN{2375-2548}
\urldef\tempurl%
\url{https://doi.org/10.1126/sciadv.aao3603}
\showDOI{\tempurl}


\bibitem[\protect\citeauthoryear{Shi, Tao, Li, Javadi-Abhari, Cross, Chong, and
  Gu}{Shi et~al\mbox{.}}{2020}]%
        {shi2020certiq}
\bibfield{author}{\bibinfo{person}{Yunong Shi}, \bibinfo{person}{Runzhou Tao},
  \bibinfo{person}{Xupeng Li}, \bibinfo{person}{Ali Javadi-Abhari},
  \bibinfo{person}{Andrew~W. Cross}, \bibinfo{person}{Frederic~T. Chong}, {and}
  \bibinfo{person}{Ronghui Gu}.} \bibinfo{year}{2020}\natexlab{}.
\newblock \bibinfo{title}{CertiQ: A Mostly-automated Verification of a
  Realistic Quantum Compiler}.
\newblock
\newblock
\showeprint[arxiv]{1908.08963}~[quant-ph]


\bibitem[\protect\citeauthoryear{Sivarajah, Dilkes, Cowtan, Simmons, Edgington,
  and Duncan}{Sivarajah et~al\mbox{.}}{2020}]%
        {tket}
\bibfield{author}{\bibinfo{person}{Seyon Sivarajah}, \bibinfo{person}{Silas
  Dilkes}, \bibinfo{person}{Alexander Cowtan}, \bibinfo{person}{Will Simmons},
  \bibinfo{person}{Alec Edgington}, {and} \bibinfo{person}{Ross Duncan}.}
  \bibinfo{year}{2020}\natexlab{}.
\newblock \showarticletitle{t|ket⟩: a retargetable compiler for NISQ
  devices}.
\newblock \bibinfo{journal}{\emph{Quantum Science and Technology}}
  \bibinfo{volume}{6}, \bibinfo{number}{1} (\bibinfo{date}{Nov}
  \bibinfo{year}{2020}), \bibinfo{pages}{014003}.
\newblock
\showISSN{2058-9565}
\urldef\tempurl%
\url{https://doi.org/10.1088/2058-9565/ab8e92}
\showDOI{\tempurl}


\bibitem[\protect\citeauthoryear{Skilbeck, Peterson, appleby, Davis, Karalekas,
  Bello-Rivas, Kochmanski, Beane, Smith, Shi, Scott, Paszke, Hulburd, Young,
  Jackson, BHAVISHYA, Alam, Velázquez-Rodríguez, c.~b. osborn, fengdlm, and
  jmackeyrigetti}{Skilbeck et~al\mbox{.}}{2020}]%
        {quilc}
\bibfield{author}{\bibinfo{person}{Mark Skilbeck}, \bibinfo{person}{Eric
  Peterson}, \bibinfo{person}{appleby}, \bibinfo{person}{Erik Davis},
  \bibinfo{person}{Peter Karalekas}, \bibinfo{person}{Juan~M. Bello-Rivas},
  \bibinfo{person}{Daniel Kochmanski}, \bibinfo{person}{Zach Beane},
  \bibinfo{person}{Robert Smith}, \bibinfo{person}{Andrew Shi},
  \bibinfo{person}{Cole Scott}, \bibinfo{person}{Adam Paszke},
  \bibinfo{person}{Eric Hulburd}, \bibinfo{person}{Matthew Young},
  \bibinfo{person}{Aaron~S. Jackson}, \bibinfo{person}{BHAVISHYA},
  \bibinfo{person}{M.~Sohaib Alam}, \bibinfo{person}{Wilfredo
  Velázquez-Rodríguez}, \bibinfo{person}{c.~b. osborn},
  \bibinfo{person}{fengdlm}, {and} \bibinfo{person}{jmackeyrigetti}.}
  \bibinfo{year}{2020}\natexlab{}.
\newblock \bibinfo{booktitle}{\emph{rigetti/quilc: v1.21.0}}.
\newblock
\urldef\tempurl%
\url{https://doi.org/10.5281/zenodo.3967926}
\showDOI{\tempurl}


\bibitem[\protect\citeauthoryear{Wang, Zhai, Gao, Ma, Tang, Zheng, Li, Rong,
  Chen, and Jia}{Wang et~al\mbox{.}}{2021}]%
        {wang2021pet}
\bibfield{author}{\bibinfo{person}{Haojie Wang}, \bibinfo{person}{Jidong Zhai},
  \bibinfo{person}{Mingyu Gao}, \bibinfo{person}{Zixuan Ma},
  \bibinfo{person}{Shizhi Tang}, \bibinfo{person}{Liyan Zheng},
  \bibinfo{person}{Yuanzhi Li}, \bibinfo{person}{Kaiyuan Rong},
  \bibinfo{person}{Yuanyong Chen}, {and} \bibinfo{person}{Zhihao Jia}.}
  \bibinfo{year}{2021}\natexlab{}.
\newblock \showarticletitle{PET: Optimizing Tensor Programs with Partially
  Equivalent Transformations and Automated Corrections}. In
  \bibinfo{booktitle}{\emph{15th USENIX Symposium on Operating Systems Design
  and Implementation (OSDI 21)}}. \bibinfo{pages}{37--54}.
\newblock


\bibitem[\protect\citeauthoryear{Wu, Debroy, Ding, Baker, Alexeev, Brown, and
  Chong}{Wu et~al\mbox{.}}{2021}]%
        {wdd+tilt-2021}
\bibfield{author}{\bibinfo{person}{Xin{-}Chuan Wu}, \bibinfo{person}{Dripto~M.
  Debroy}, \bibinfo{person}{Yongshan Ding}, \bibinfo{person}{Jonathan~M.
  Baker}, \bibinfo{person}{Yuri Alexeev}, \bibinfo{person}{Kenneth~R. Brown},
  {and} \bibinfo{person}{Frederic~T. Chong}.} \bibinfo{year}{2021}\natexlab{}.
\newblock \showarticletitle{{TILT:} Achieving Higher Fidelity on a Trapped-Ion
  Linear-Tape Quantum Computing Architecture}. In
  \bibinfo{booktitle}{\emph{{IEEE} International Symposium on High-Performance
  Computer Architecture, {HPCA} 2021, Seoul, South Korea, February 27 - March
  3, 2021}}. \bibinfo{publisher}{{IEEE}}, \bibinfo{pages}{153--166}.
\newblock
\urldef\tempurl%
\url{https://doi.org/10.1109/HPCA51647.2021.00023}
\showDOI{\tempurl}


\bibitem[\protect\citeauthoryear{Xu, Li, Padon, Lin, Pointing, Hirth, Ma,
  Palsberg, Aiken, Acar, and Jia}{Xu et~al\mbox{.}}{2022a}]%
        {quartz_zenodo}
\bibfield{author}{\bibinfo{person}{Mingkuan Xu}, \bibinfo{person}{Zikun Li},
  \bibinfo{person}{Oded Padon}, \bibinfo{person}{Sina Lin},
  \bibinfo{person}{Jessica Pointing}, \bibinfo{person}{Auguste Hirth},
  \bibinfo{person}{Henry Ma}, \bibinfo{person}{Jens Palsberg},
  \bibinfo{person}{Alex Aiken}, \bibinfo{person}{Umut~A. Acar}, {and}
  \bibinfo{person}{Zhihao Jia}.} \bibinfo{year}{2022}\natexlab{a}.
\newblock \bibinfo{booktitle}{\emph{Artifact for {PLDI} 2022 paper: {Q}uartz:
  {S}uperoptimization of quantum circuits}}.
\newblock
\urldef\tempurl%
\url{https://doi.org/10.5281/zenodo.6508992}
\showDOI{\tempurl}


\bibitem[\protect\citeauthoryear{Xu, Li, Padon, Lin, Pointing, Hirth, Ma,
  Palsberg, Aiken, Acar, and Jia}{Xu et~al\mbox{.}}{2022b}]%
        {pldi2022-quartz}
\bibfield{author}{\bibinfo{person}{Mingkuan Xu}, \bibinfo{person}{Zikun Li},
  \bibinfo{person}{Oded Padon}, \bibinfo{person}{Sina Lin},
  \bibinfo{person}{Jessica Pointing}, \bibinfo{person}{Auguste Hirth},
  \bibinfo{person}{Henry Ma}, \bibinfo{person}{Jens Palsberg},
  \bibinfo{person}{Alex Aiken}, \bibinfo{person}{Umut~A. Acar}, {and}
  \bibinfo{person}{Zhihao Jia}.} \bibinfo{year}{2022}\natexlab{b}.
\newblock \showarticletitle{Quartz: {S}uperoptimization of quantum circuits}.
  In \bibinfo{booktitle}{\emph{Proceedings of the 43rd ACM SIGPLAN
  International Conference on Programming Language Design and Implementation
  (PLDI '22), June 13--17, 2022, San Diego, CA, USA}}.
  \bibinfo{publisher}{{ACM}}.
\newblock
\urldef\tempurl%
\url{https://doi.org/10.1145/3519939.3523433}
\showDOI{\tempurl}


\bibitem[\protect\citeauthoryear{Yang, Phothilimthana, Wang, Willsey, Roy, and
  Pienaar}{Yang et~al\mbox{.}}{2021}]%
        {tensat}
\bibfield{author}{\bibinfo{person}{Yichen Yang}, \bibinfo{person}{Phitchaya
  Phothilimthana}, \bibinfo{person}{Yisu Wang}, \bibinfo{person}{Max Willsey},
  \bibinfo{person}{Sudip Roy}, {and} \bibinfo{person}{Jacques Pienaar}.}
  \bibinfo{year}{2021}\natexlab{}.
\newblock \showarticletitle{Equality Saturation for Tensor Graph
  Superoptimization}. In \bibinfo{booktitle}{\emph{Proceedings of Machine
  Learning and Systems}}, \bibfield{editor}{\bibinfo{person}{A.~Smola},
  \bibinfo{person}{A.~Dimakis}, {and} \bibinfo{person}{I.~Stoica}} (Eds.),
  Vol.~\bibinfo{volume}{3}. \bibinfo{pages}{255--268}.
\newblock


\end{thebibliography}

\ifarxiv
\clearpage
\appendix
\section{Detailed Results}
\label{sec:extended}

\begin{table*}[hbt!]
\setlength{\tabcolsep}{2pt}
\small
\caption{Gate count results for the Nam gate set when using $(n,q)$-complete \ccss with varying values of $n$ and $q$, and a search timeout of 24 hours. ``Pr.'' shows the result when $n=0$ or $n=1$: the \ccs is empty, so the results match the ``Quartz Preprocess'' column of \Cref{tab:nam}. When $q=1$, the results are the same when $2 \le n \le 7$.}
\begin{tabular}{l|r|r|r|rrrrrr|rrrrrr|rrrrr}
\hline
\multirow{2}{*}{\textbf{Circuit}}      & \multicolumn{1}{l|}{\multirow{2}{*}{\textbf{Orig.}}} & \multicolumn{1}{c|}{\multirow{2}{*}{\textbf{Pr.}}} & \multicolumn{1}{c|}{\multirow{2}{*}{$q=1$}} & \multicolumn{6}{c|}{$q=2,n=\uscore$}                                                                                                           & \multicolumn{6}{c|}{$q=3,n=\uscore$}                                                                                                           & \multicolumn{5}{c}{$q=4,n=\uscore$}                                                                                   \\ \cline{5-21} 
                                       & \multicolumn{1}{l|}{}                                & \multicolumn{1}{c|}{}                              & \multicolumn{1}{c|}{}                       & \multicolumn{1}{c}{2} & \multicolumn{1}{c}{3} & \multicolumn{1}{c}{4} & \multicolumn{1}{c}{5} & \multicolumn{1}{c}{6} & \multicolumn{1}{c|}{7} & \multicolumn{1}{c}{2} & \multicolumn{1}{c}{3} & \multicolumn{1}{c}{4} & \multicolumn{1}{c}{5} & \multicolumn{1}{c}{6} & \multicolumn{1}{c|}{7} & \multicolumn{1}{c}{2} & \multicolumn{1}{c}{3} & \multicolumn{1}{c}{4} & \multicolumn{1}{c}{5} & \multicolumn{1}{c}{6} \\ \hline
\texttt{tof\_3}                        & 45                                                   & 39                                                 & 39                                          & \tablehl{35}           & \tablehl{35}           & \tablehl{35}           & \tablehl{35}           & \tablehl{35}           & \tablehl{35}            & \tablehl{35}           & \tablehl{35}           & \tablehl{35}           & \tablehl{35}           & \tablehl{35}           & \tablehl{35}            & \tablehl{35}           & \tablehl{35}           & \tablehl{35}           & \tablehl{35}           & \tablehl{35}           \\
\texttt{barenco\_tof\_3}               & 58                                                   & 46                                                 & 46                                          & 42                    & 42                    & 42                    & 42                    & 42                    & 42                     & 42                    & 40                    & 40                    & 40                    & \tablehl{38}           & \tablehl{38}            & 42                    & 40                    & 40                    & 40                    & \tablehl{38}           \\
\texttt{mod5\_4}                       & 63                                                   & 51                                                 & 51                                          & 51                    & 51                    & 51                    & 51                    & 51                    & 51                     & 51                    & 45                    & 41                    & 27                    & \tablehl{26}           & 31                     & 51                    & 45                    & 37                    & 32                    & \tablehl{26}           \\
\texttt{tof\_4}                        & 75                                                   & 63                                                 & 63                                          & \tablehl{55}           & \tablehl{55}           & \tablehl{55}           & \tablehl{55}           & \tablehl{55}           & \tablehl{55}            & \tablehl{55}           & \tablehl{55}           & \tablehl{55}           & \tablehl{55}           & \tablehl{55}           & \tablehl{55}            & \tablehl{55}           & \tablehl{55}           & \tablehl{55}           & \tablehl{55}           & \tablehl{55}           \\
\texttt{tof\_5}                        & 105                                                  & 87                                                 & 87                                          & \tablehl{75}           & \tablehl{75}           & \tablehl{75}           & \tablehl{75}           & \tablehl{75}           & \tablehl{75}            & \tablehl{75}           & \tablehl{75}           & \tablehl{75}           & \tablehl{75}           & \tablehl{75}           & \tablehl{75}            & \tablehl{75}           & \tablehl{75}           & \tablehl{75}           & \tablehl{75}           & \tablehl{75}           \\
\texttt{barenco\_tof\_4}               & 114                                                  & 86                                                 & 86                                          & 78                    & 78                    & 78                    & 78                    & 78                    & 78                     & 78                    & 72                    & 72                    & 72                    & \tablehl{68}           & \tablehl{68}            & 78                    & 72                    & 72                    & 72                    & \tablehl{68}           \\
\texttt{mod\_mult\_55}                 & 119                                                  & 105                                                & 105                                         & 97                    & 94                    & 94                    & 94                    & 94                    & 94                     & 97                    & \tablehl{92}           & 93                    & 93                    & 93                    & 94                     & 97                    & \tablehl{92}           & 93                    & 93                    & 94                    \\
\texttt{vbe\_adder\_3}                 & 150                                                  & 115                                                & 115                                         & 95                    & 95                    & 95                    & 95                    & 95                    & 95                     & 95                    & 91                    & 91                    & 89                    & \tablehl{85}           & \tablehl{85}            & 95                    & 91                    & 90                    & 91                    & 91                    \\
\texttt{barenco\_tof\_5}               & 170                                                  & 126                                                & 126                                         & 114                   & 114                   & 114                   & 114                   & 114                   & 114                    & 114                   & 104                   & 104                   & 104                   & \tablehl{98}           & 100                    & 114                   & 104                   & 104                   & 104                   & 102                   \\
\texttt{csla\_mux\_3}                  & 170                                                  & 164                                                & 164                                         & 160                   & 152                   & 152                   & 152                   & 152                   & 152                    & 160                   & \tablehl{146}          & 148                   & 149                   & 154                   & 156                    & 160                   & 150                   & 148                   & 153                   & 153                   \\
\texttt{rc\_adder\_6}                  & 200                                                  & 174                                                & 174                                         & 154                   & \tablehl{152}          & \tablehl{152}          & \tablehl{152}          & \tablehl{152}          & \tablehl{152}           & 154                   & \tablehl{152}          & \tablehl{152}          & \tablehl{152}          & 154                   & 154                    & 154                   & \tablehl{152}          & \tablehl{152}          & 154                   & 156                   \\
\texttt{gf2\textasciicircum{}4\_mult}  & 225                                                  & 186                                                & 186                                         & 186                   & 180                   & 180                   & 180                   & 180                   & 180                    & 186                   & 176                   & \tablehl{175}          & 177                   & 177                   & 178                    & 186                   & \tablehl{175}          & 177                   & 178                   & 180                   \\
\texttt{tof\_10}                       & 255                                                  & 207                                                & 207                                         & \tablehl{175}          & \tablehl{175}          & \tablehl{175}          & \tablehl{175}          & \tablehl{175}          & \tablehl{175}           & \tablehl{175}          & \tablehl{175}          & \tablehl{175}          & \tablehl{175}          & \tablehl{175}          & \tablehl{175}           & \tablehl{175}          & \tablehl{175}          & \tablehl{175}          & \tablehl{175}          & 191                   \\
\texttt{mod\_red\_21}                  & 278                                                  & 236                                                & 226                                         & \tablehl{202}          & \tablehl{202}          & \tablehl{202}          & \tablehl{202}          & \tablehl{202}          & \tablehl{202}           & \tablehl{202}          & \tablehl{202}          & \tablehl{202}          & \tablehl{202}          & \tablehl{202}          & 210                    & \tablehl{202}          & \tablehl{202}          & \tablehl{202}          & \tablehl{202}          & 228                   \\
\texttt{gf2\textasciicircum{}5\_mult}  & 347                                                  & 287                                                & 287                                         & 287                   & 279                   & 279                   & 279                   & 279                   & 279                    & 287                   & \tablehl{273}          & \tablehl{273}          & \tablehl{273}          & 277                   & 279                    & 287                   & 277                   & 277                   & 279                   & 283                   \\
\texttt{csum\_mux\_9}                  & 420                                                  & 308                                                & 308                                         & 308                   & 308                   & 308                   & 308                   & 308                   & 308                    & 308                   & 280                   & 280                   & 280                   & \tablehl{272}          & 302                    & 308                   & 280                   & 280                   & 305                   & 307                   \\
\texttt{qcla\_com\_7}                  & 443                                                  & 349                                                & 347                                         & 293                   & 293                   & 293                   & 293                   & 293                   & 293                    & 293                   & 289                   & \tablehl{288}          & \tablehl{288}          & 292                   & 339                    & 293                   & 289                   & \tablehl{288}          & 321                   & 347                   \\
\texttt{barenco\_tof\_10}              & 450                                                  & 326                                                & 326                                         & 294                   & 294                   & 294                   & 294                   & 294                   & 294                    & 294                   & 268                   & 271                   & 271                   & \tablehl{262}          & 316                    & 294                   & 268                   & 275                   & 298                   & 324                   \\
\texttt{gf2\textasciicircum{}6\_mult}  & 495                                                  & 401                                                & 401                                         & 401                   & 391                   & 391                   & 391                   & 391                   & 391                    & 401                   & \tablehl{381}          & 383                   & 386                   & 391                   & 393                    & 401                   & 386                   & 391                   & 391                   & 401                   \\
\texttt{qcla\_adder\_10}               & 521                                                  & 450                                                & 450                                         & 416                   & 414                   & 414                   & 414                   & 414                   & 414                    & 416                   & \tablehl{407}          & 408                   & 408                   & 422                   & 444                    & 416                   & 408                   & 414                   & 436                   & 450                   \\
\texttt{gf2\textasciicircum{}7\_mult}  & 669                                                  & 543                                                & 543                                         & 543                   & 531                   & 531                   & 531                   & 531                   & 531                    & 543                   & \tablehl{517}          & 519                   & 529                   & 531                   & 539                    & 543                   & 524                   & 530                   & 537                   & 543                   \\
\texttt{gf2\textasciicircum{}8\_mult}  & 883                                                  & 703                                                & 703                                         & 703                   & 703                   & 703                   & 703                   & 703                   & 703                    & 703                   & \tablehl{690}          & 703                   & 703                   & 703                   & 703                    & 703                   & 703                   & 703                   & 703                   & 703                   \\
\texttt{qcla\_mod\_7}                  & 884                                                  & 727                                                & 727                                         & 657                   & 657                   & 657                   & 657                   & 657                   & 657                    & 657                   & 654                   & \tablehl{651}          & 677                   & 719                   & 727                    & 657                   & 655                   & 697                   & 725                   & 727                   \\
\texttt{adder\_8}                      & 900                                                  & 732                                                & 732                                         & 644                   & 640                   & 640                   & 640                   & 640                   & 644                    & 644                   & 638                   & \tablehl{634}          & 688                   & 724                   & 732                    & 644                   & \tablehl{634}          & 706                   & 730                   & 732                   \\
\texttt{gf2\textasciicircum{}9\_mult}  & 1095                                                 & 879                                                & 879                                         & 879                   & 877                   & 869                   & 869                   & 877                   & 877                    & 879                   & 857                   & \tablehl{856}          & 870                   & 873                   & 879                    & 879                   & 857                   & 871                   & 879                   & 879                   \\
\texttt{gf2\textasciicircum{}10\_mult} & 1347                                                 & 1062                                               & 1062                                        & 1062                  & 1062                  & 1058                  & 1058                  & 1058                  & 1058                   & 1062                  & \tablehl{1030}         & 1049                  & 1052                  & 1060                  & 1062                   & 1062                  & 1061                  & 1058                  & 1062                  & 1062                  \\ \hline
\end{tabular}

\label{tab:nam:full}
\vspace{4em}
\end{table*}

\begin{table*}[hbt!]
\caption{Metrics for \sys's generator, when generating \nq-complete \ccss for $2 \le n \le 7$ and $1 \le q \le 4$ for the Nam gate set. $|\m{T}|$ is the resulting number of transformations. The characteristics (see definition in \Cref{subsec:complexity}) for $q=1,2,3,4$ are 7, 16, 27, 40, respectively.}
\begin{threeparttable}
\begin{tabular}{c|rrrr|rrrr|rrrr}
\hline
    & \multicolumn{4}{c|}{\bf $|\m{T}|$}                                                                                           & \multicolumn{4}{c|}{\bf Verification Time (s)}                                                                               & \multicolumn{4}{c}{\bf Total Time (s)}                                                                                      \\ \hline
$n$ & \multicolumn{1}{c}{$q=1$} & \multicolumn{1}{c}{$q=2$} & \multicolumn{1}{c}{$q=3$} & \multicolumn{1}{c|}{$q=4$}               & \multicolumn{1}{c}{$q=1$} & \multicolumn{1}{c}{$q=2$} & \multicolumn{1}{c}{$q=3$} & \multicolumn{1}{c|}{$q=4$}               & \multicolumn{1}{c}{$q=1$} & \multicolumn{1}{c}{$q=2$} & \multicolumn{1}{c}{$q=3$} & \multicolumn{1}{c}{$q=4$}               \\ \hline
2   & 14                        & 38                        & 62                        & 78                                       & 0.5                       & 0.7                       & 1.2                       & 2.5                                      & 0.5                       & 0.7                       & 1.3                       & 2.8                                     \\
3   & 14                        & 90                        & 196                       & 208                                      & 0.8                       & 1.3                       & 2.6                       & 7.8                                      & 0.8                       & 1.5                       & 3.7                       & 12.0                                    \\
4   & 44                        & 332                       & 1,304                     & 2,988                                    & 1.1                       & 2.3                       & 8.5                       & 48.0                                     & 1.2                       & 3.8                       & 21.4                      & 118.3                                   \\
5   & 78                        & 1,334                     & 8,002                     & 27,942                                   & 1.5                       & 8.8                       & 49.5                      & 917.0                                    & 1.6                       & 19.2                      & 174.7                     & 2,452.0                                 \\
6   & 120                       & 5,794                     & 56,152                    & 273,532                                  & 1.9                       & 52.5                      & 370.3                     & 5,802.6                                  & 2.2                       & 138.9                     & 1,400.4                   & 19,448.0                                \\
7   & 164                       & 21,824                    & 379,864                   & \multicolumn{1}{c|}{--\tnote{$\dagger$}} & 2.3                       & 71.8                      & 2,673.6                   & \multicolumn{1}{c|}{--\tnote{$\dagger$}} & 2.8                       & 222.9                     & 10,461.2                  & \multicolumn{1}{c}{--\tnote{$\dagger$}} \\ \hline
\end{tabular}
\begin{tablenotes}
\footnotesize
\item[$\dagger$] 
We were unable to generate a $(7, 4)$-complete \ccs using \SI{512}{GB} of RAM.
\end{tablenotes}
\end{threeparttable}
\label{tab:complexity:full}
\end{table*}

\Cref{tab:nam:full} shows the final gate count for each circuit for varying $n$ and $q$ after a 24-hour search timeout, with the best results of each circuit highlighted. Interestingly, $q=3$ with $3\le n \le 6$ covers the best optimization results for all circuits obtained among all configurations considered in the table. As circuits are sorted in the order of original size in the table, we can see that when $q=3$, small circuits tend to be better optimized with larger values of $n$, and larger circuits tend to be better optimized with smaller values of $n$.

\Cref{tab:complexity:full} shows the run times of the entire generation procedure,
and also the time out of that spent in verification, for each of the \ccs used in \Cref{tab:nam:full}.
The table also lists the number of resulting circuit transformations $|\m{T}|$ for each $n$ and $q$,
and the characteristic for each $q$ (see definition in \Cref{subsec:complexity}).

\Cref{fig:adder_8,fig:barenco_tof_3,fig:barenco_tof_4,fig:barenco_tof_5,fig:barenco_tof_10,fig:csla_mux_3,fig:csum_mux_9,fig:gf2^4_mult,fig:gf2^5_mult,fig:gf2^6_mult,fig:gf2^7_mult,fig:gf2^8_mult,fig:gf2^9_mult,fig:gf2^10_mult,fig:mod5_4,fig:mod_mult_55,fig:mod_red_21,fig:qcla_adder_10,fig:qcla_com_7,fig:qcla_mod_7,fig:rc_adder_6,fig:tof_3,fig:tof_4,fig:tof_5,fig:tof_10,fig:vbe_adder_3} 
show plots akin to \Cref{fig:scalability} and \Cref{fig:time} for each circuit. In each figure, the left plot shows the optimization result with \nq-complete \ccss for varying $n$ and $q$ after a 24-hour search timeout. The right plot shows the optimization result over time for $q=3$ and $2\le n \le 7$. Each 24-hour result on the right plot corresponds to a green round marker ($q=3$) on the left plot.

Among these figures, \Cref{fig:mod5_4} shows the median run for \texttt{mod5\_4}. The median run is defined to be the run with the final gate count being the median of the 7 runs.
We present the results of all 7 runs for \texttt{mod5\_4} for $q=3$ and $3 \le n \le 7$ in \Cref{fig:mod54:full}.

\clearpage

\begin{figure*}
\centering
\includegraphics[width=0.48\linewidth]{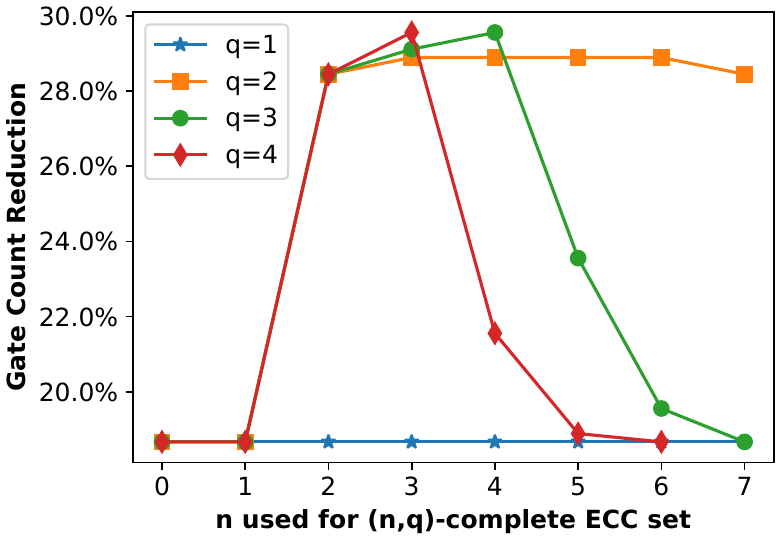}
\hfill
\includegraphics[width=0.48\linewidth]{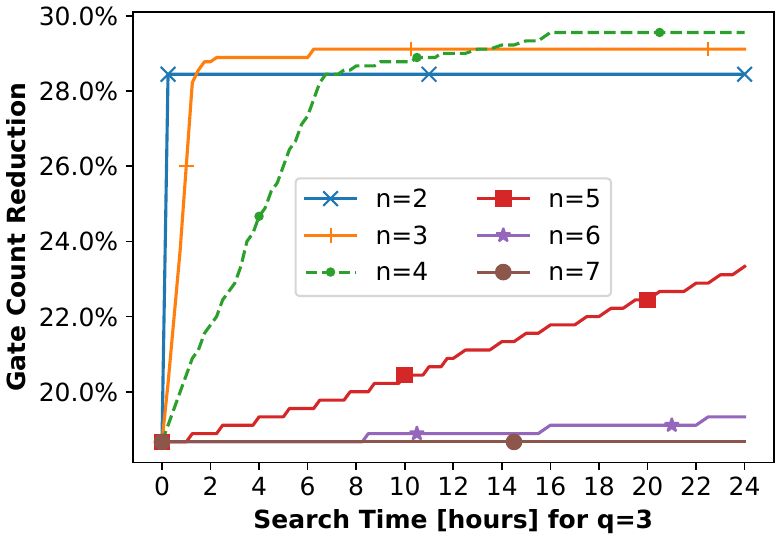}
\caption{\texttt{adder\_8} (900 gates).\vspace{1.5em}}
\label{fig:adder_8}
\end{figure*}

\begin{figure*}
\centering
\includegraphics[width=0.48\linewidth]{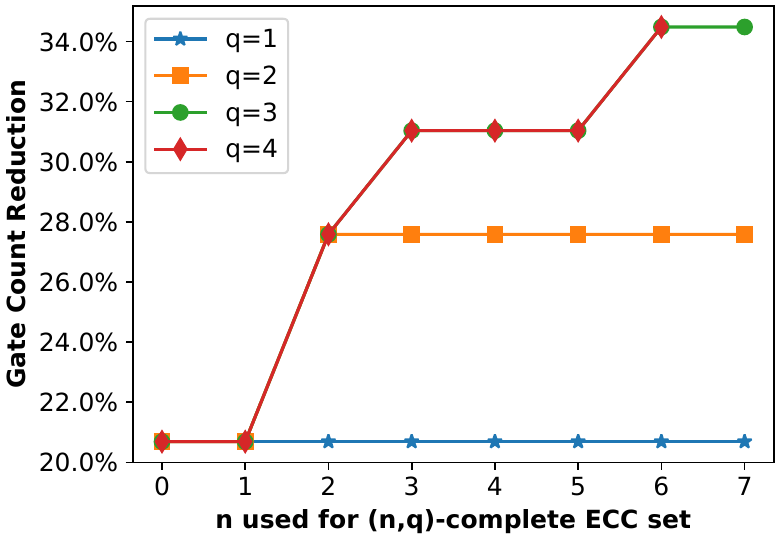}
\hfill
\includegraphics[width=0.48\linewidth]{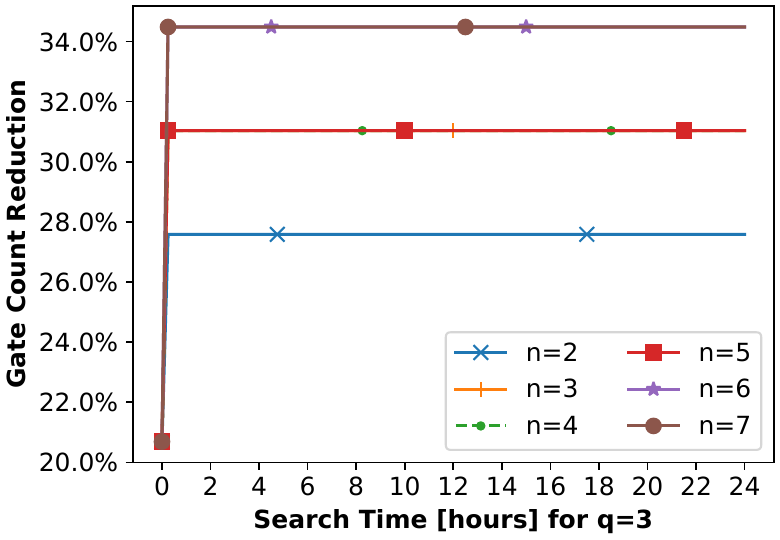}
\caption{\texttt{barenco\_tof\_3} (58 gates).\vspace{1.5em}}
\label{fig:barenco_tof_3}
\end{figure*}

\begin{figure*}
\centering
\includegraphics[width=0.48\linewidth]{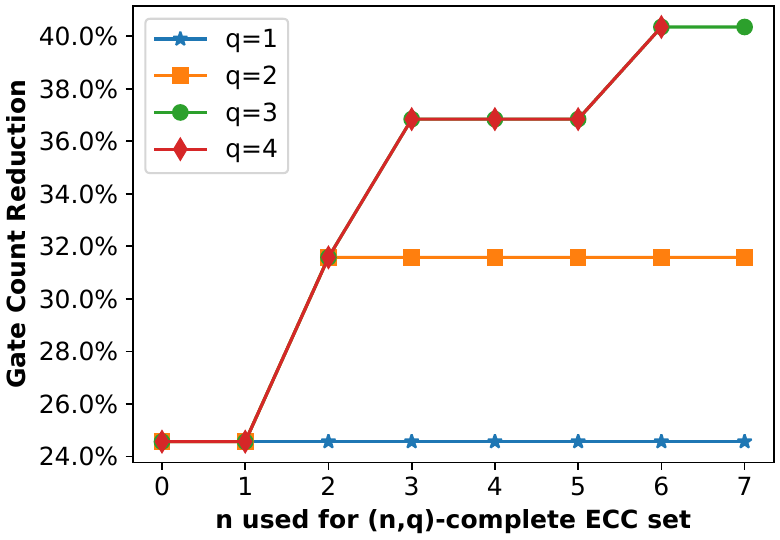}
\hfill
\includegraphics[width=0.48\linewidth]{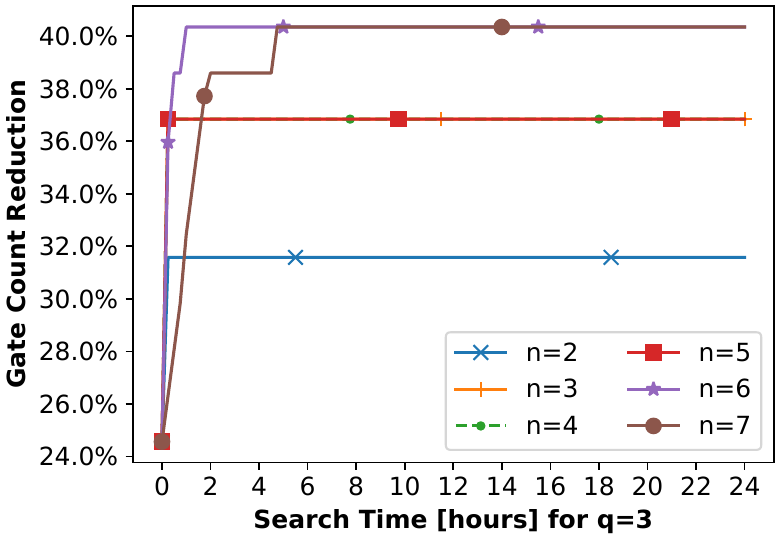}
\caption{\texttt{barenco\_tof\_4} (114 gates).\vspace{1.5em}}
\label{fig:barenco_tof_4}
\end{figure*}

\begin{figure*}
\centering
\includegraphics[width=0.48\linewidth]{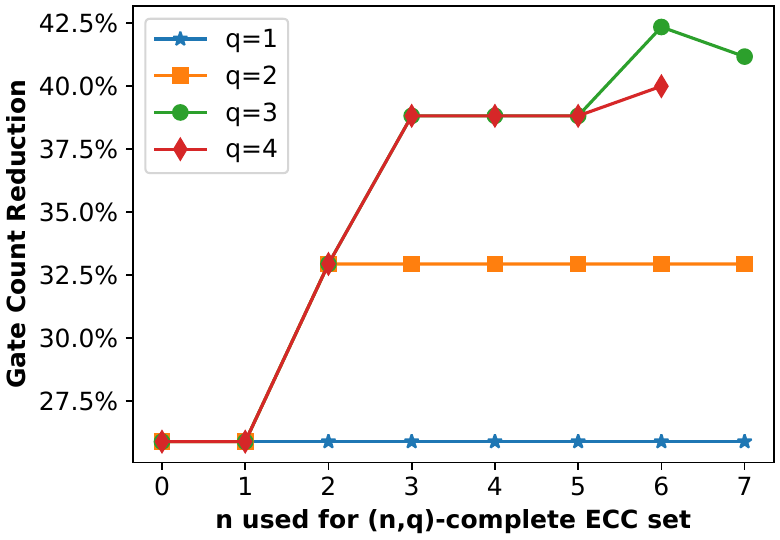}
\hfill
\includegraphics[width=0.48\linewidth]{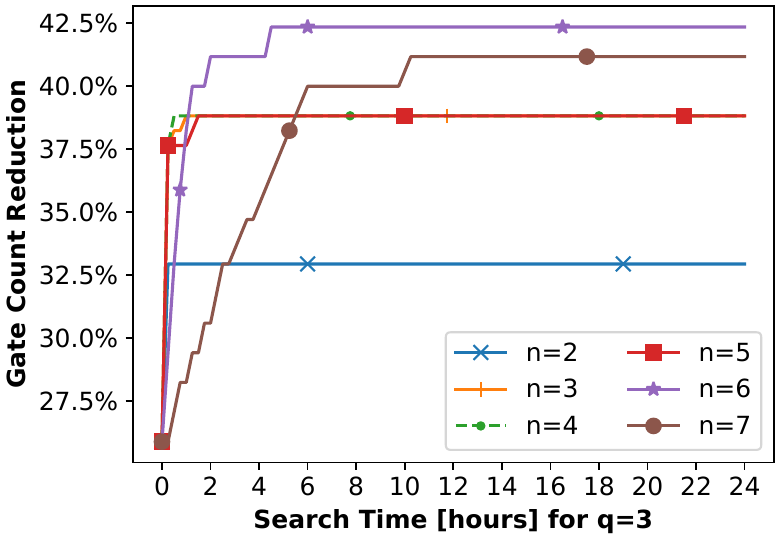}
\caption{\texttt{barenco\_tof\_5} (170 gates).\vspace{1.5em}}
\label{fig:barenco_tof_5}
\end{figure*}

\begin{figure*}
\centering
\includegraphics[width=0.48\linewidth]{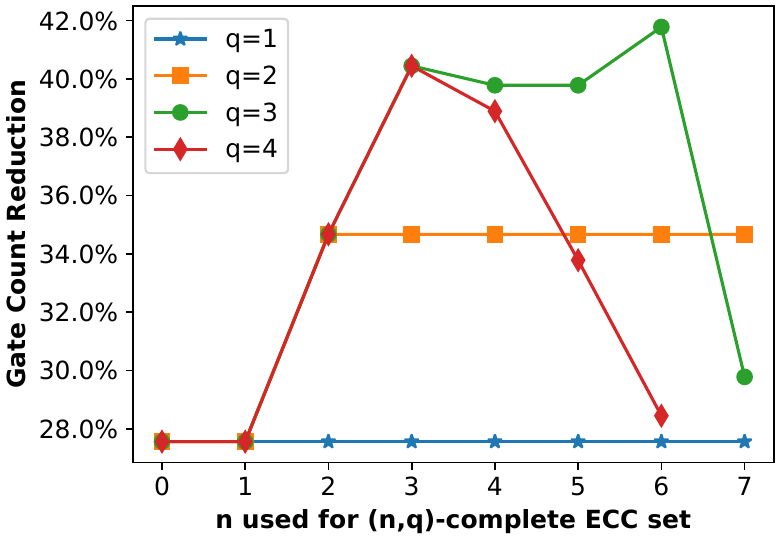}
\hfill
\includegraphics[width=0.48\linewidth]{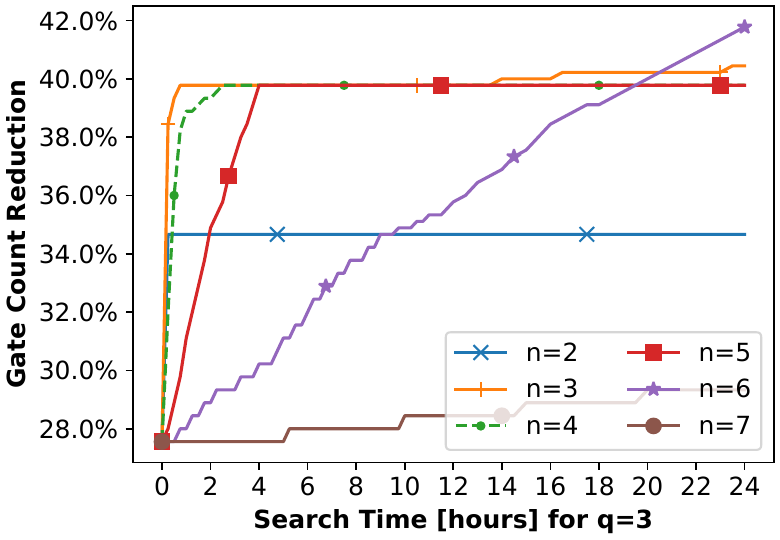}
\caption{\texttt{barenco\_tof\_10} (450 gates).\vspace{1.5em}}
\label{fig:barenco_tof_10}
\end{figure*}

\begin{figure*}
\centering
\includegraphics[width=0.48\linewidth]{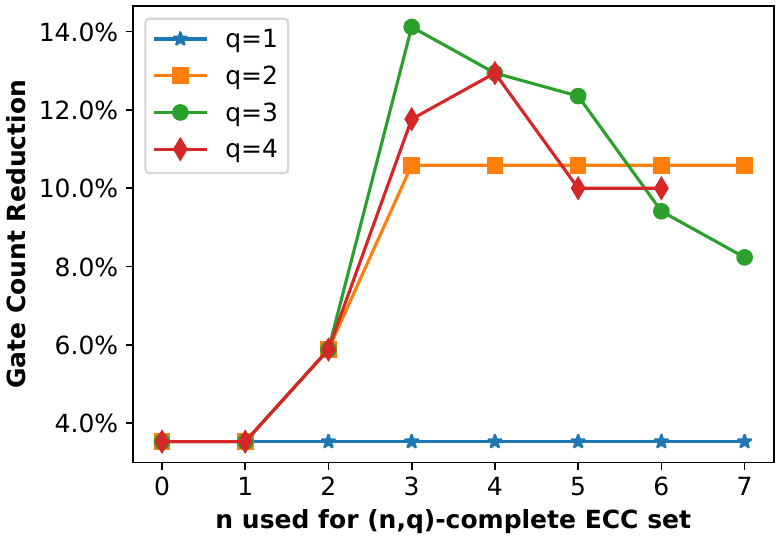}
\hfill
\includegraphics[width=0.48\linewidth]{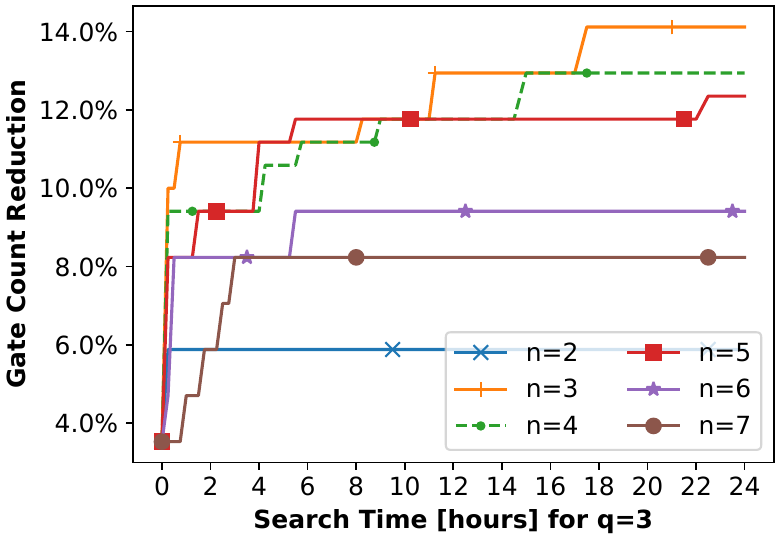}
\caption{\texttt{csla\_mux\_3} (170 gates).\vspace{1.5em}}
\label{fig:csla_mux_3}
\end{figure*}

\begin{figure*}
\centering
\includegraphics[width=0.48\linewidth]{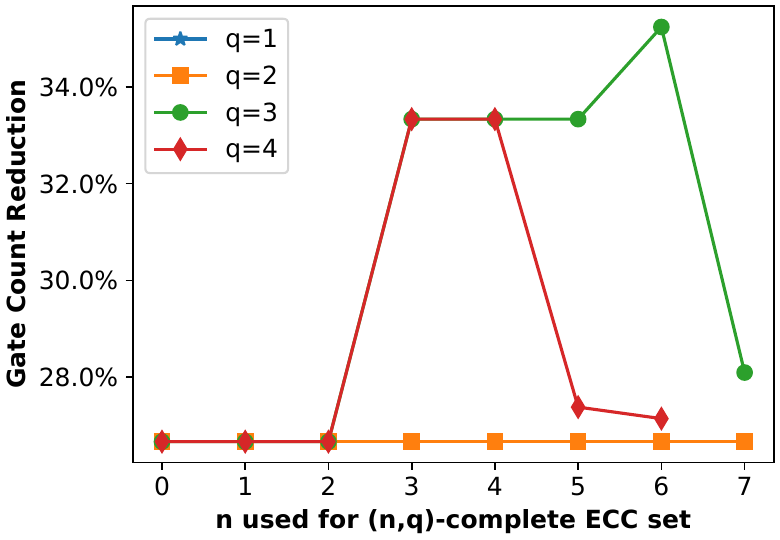}
\hfill
\includegraphics[width=0.48\linewidth]{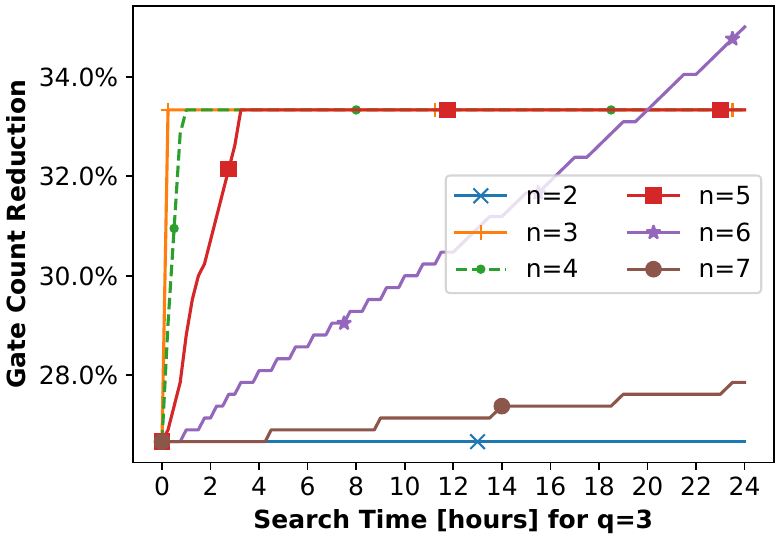}
\caption{\texttt{csum\_mux\_9} (420 gates).\vspace{1.5em}}
\label{fig:csum_mux_9}
\end{figure*}

\begin{figure*}
\centering
\includegraphics[width=0.48\linewidth]{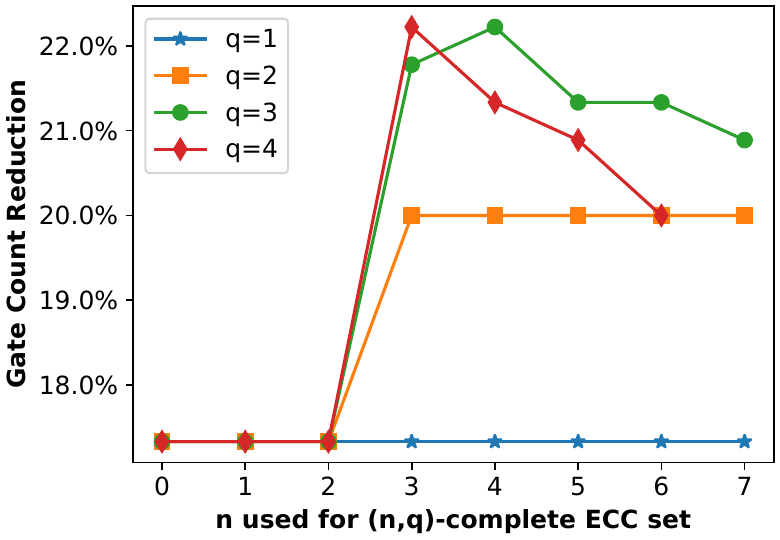}
\hfill
\includegraphics[width=0.48\linewidth]{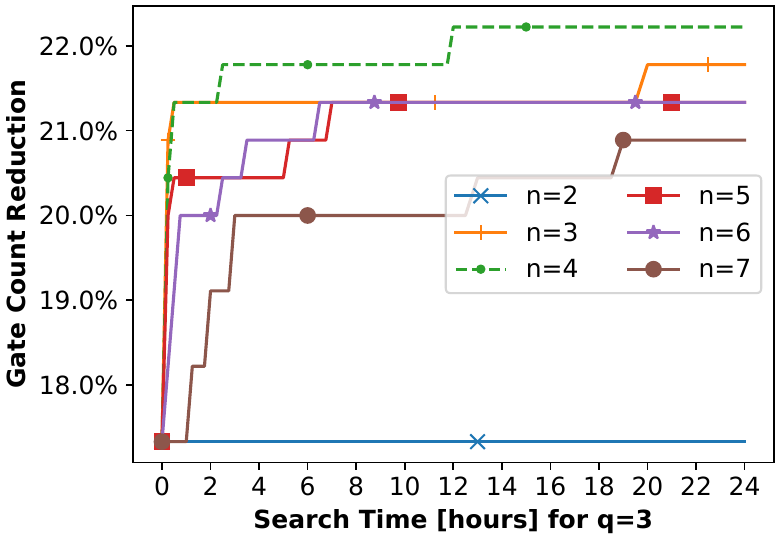}
\caption{\texttt{gf2\^{}4\_mult} (225 gates).\vspace{1.5em}}
\label{fig:gf2^4_mult}
\end{figure*}

\begin{figure*}
\centering
\includegraphics[width=0.48\linewidth]{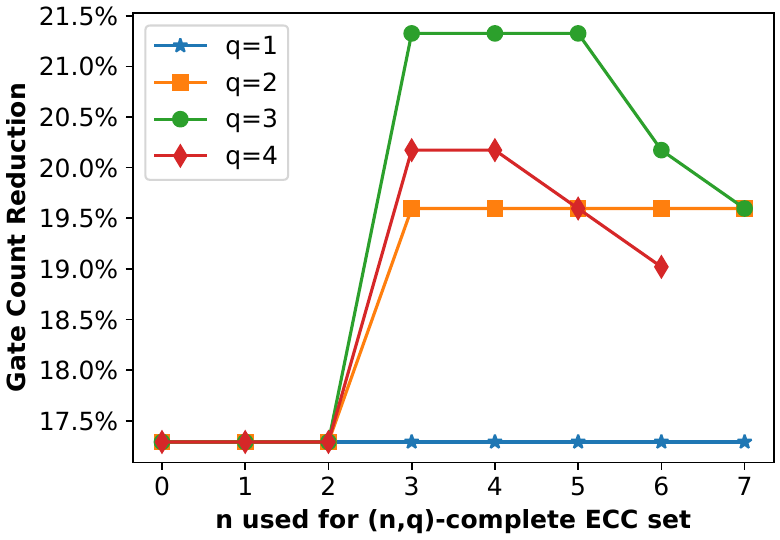}
\hfill
\includegraphics[width=0.48\linewidth]{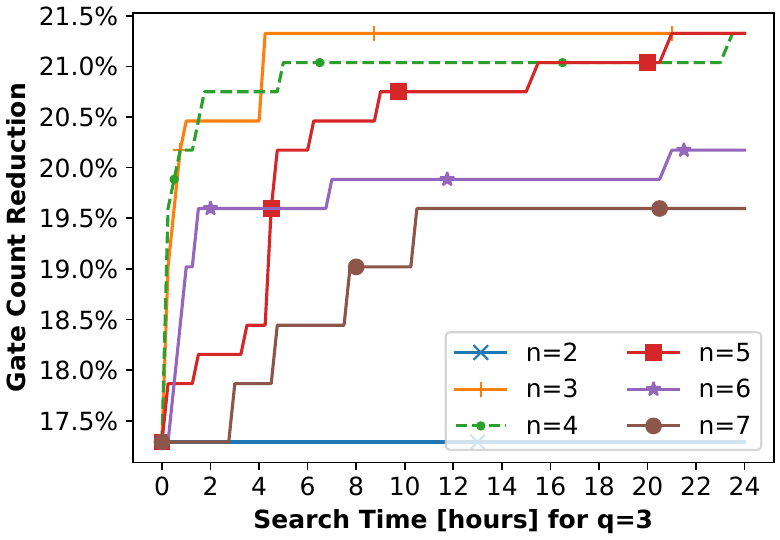}
\caption{\texttt{gf2\^{}5\_mult} (347 gates).\vspace{1.5em}}
\label{fig:gf2^5_mult}
\end{figure*}

\begin{figure*}
\centering
\includegraphics[width=0.48\linewidth]{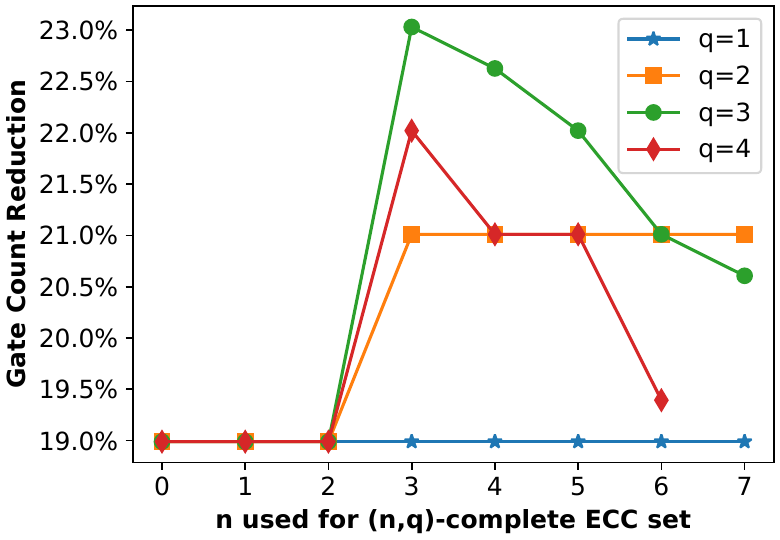}
\hfill
\includegraphics[width=0.48\linewidth]{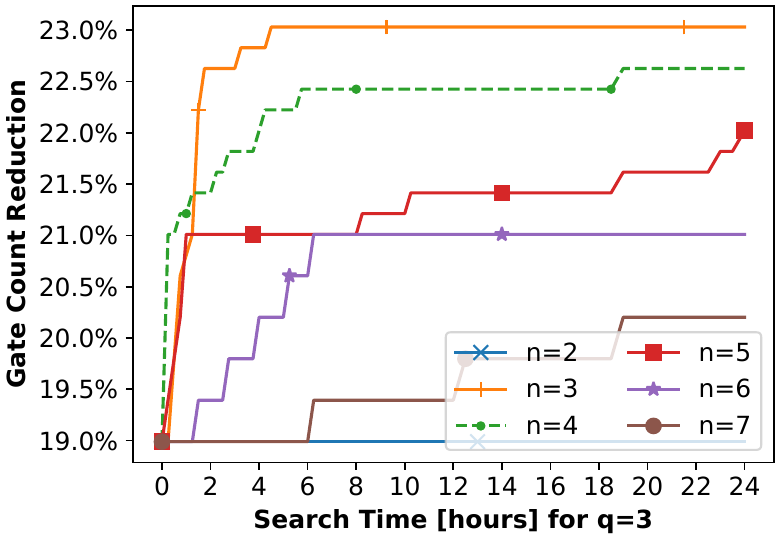}
\caption{\texttt{gf2\^{}6\_mult} (495 gates).\vspace{1.5em}}
\label{fig:gf2^6_mult}
\end{figure*}

\begin{figure*}
\centering
\includegraphics[width=0.48\linewidth]{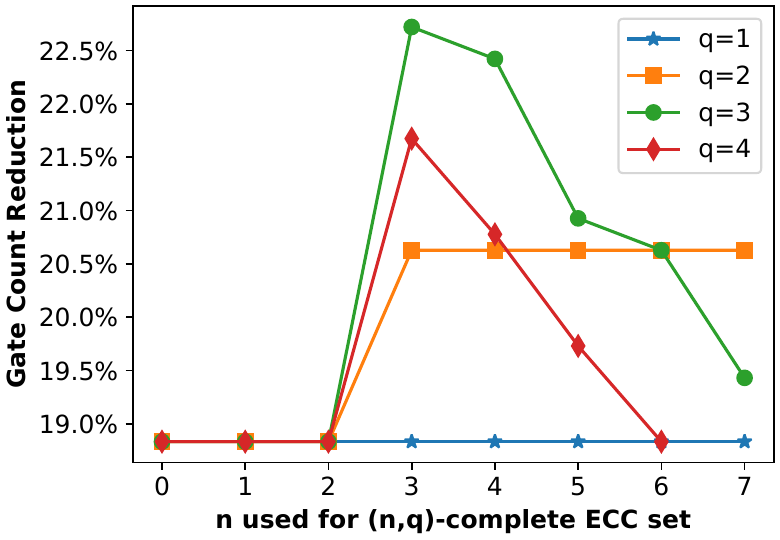}
\hfill
\includegraphics[width=0.48\linewidth]{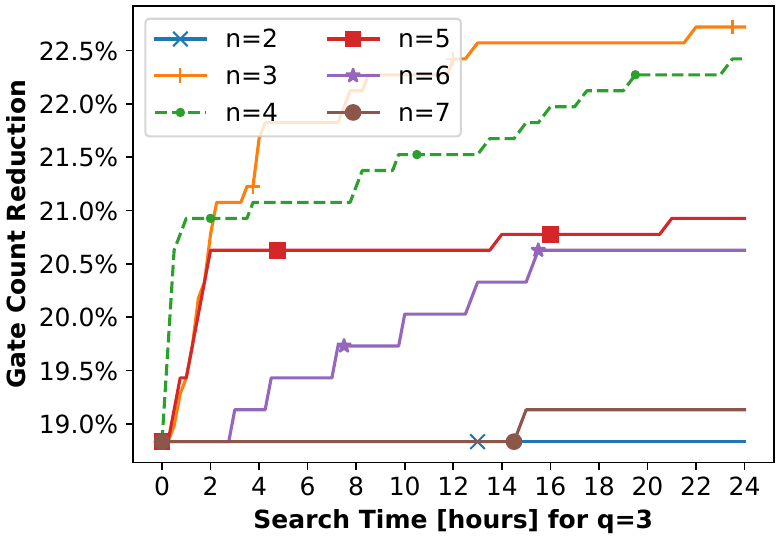}
\caption{\texttt{gf2\^{}7\_mult} (669 gates).\vspace{1.5em}}
\label{fig:gf2^7_mult}
\end{figure*}

\begin{figure*}
\centering
\includegraphics[width=0.48\linewidth]{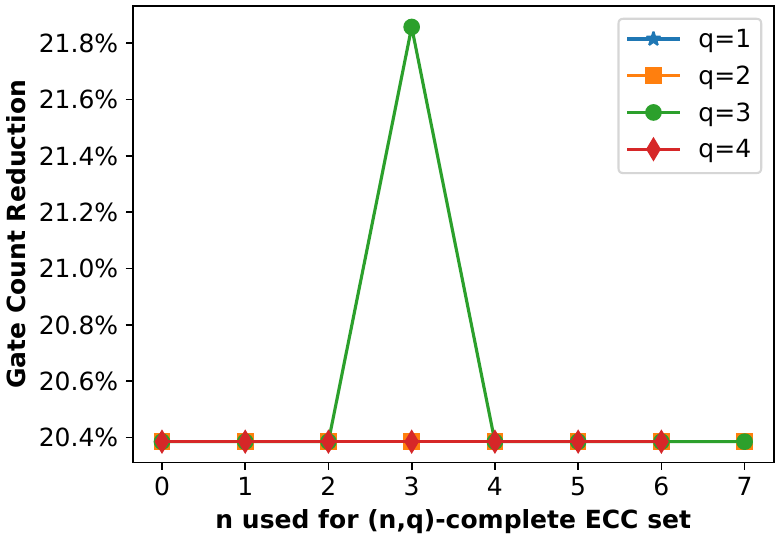}
\hfill
\includegraphics[width=0.48\linewidth]{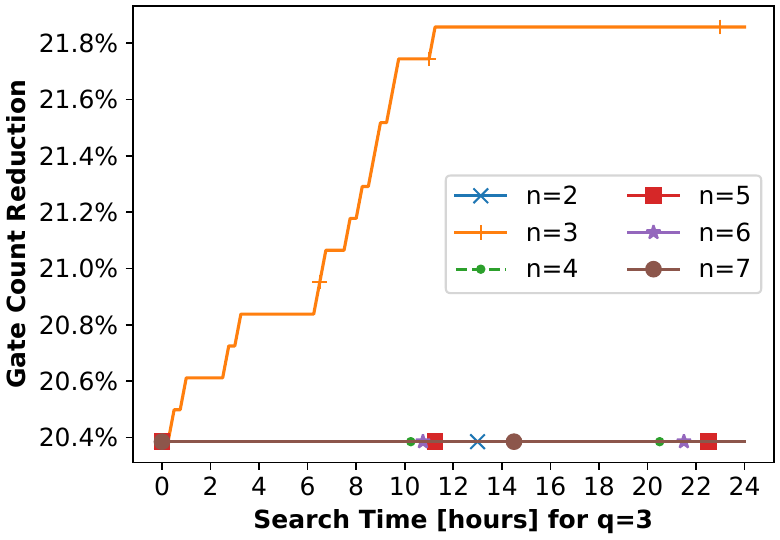}
\caption{\texttt{gf2\^{}8\_mult} (883 gates).\vspace{1.5em}}
\label{fig:gf2^8_mult}
\end{figure*}

\begin{figure*}
\centering
\includegraphics[width=0.48\linewidth]{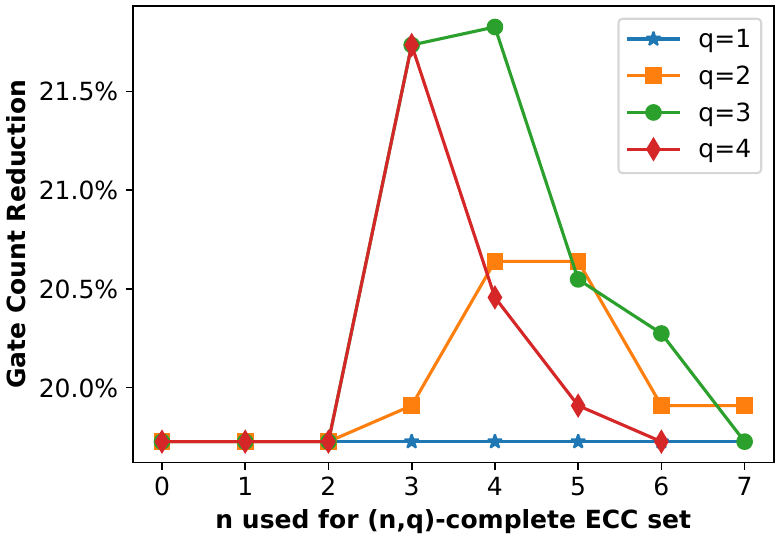}
\hfill
\includegraphics[width=0.48\linewidth]{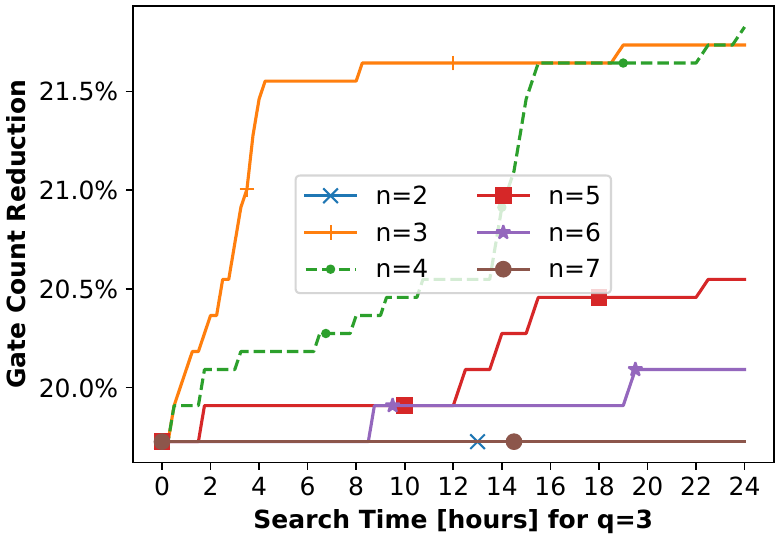}
\caption{\texttt{gf2\^{}9\_mult} (1095 gates).\vspace{1.5em}}
\label{fig:gf2^9_mult}
\end{figure*}

\begin{figure*}
\centering
\includegraphics[width=0.48\linewidth]{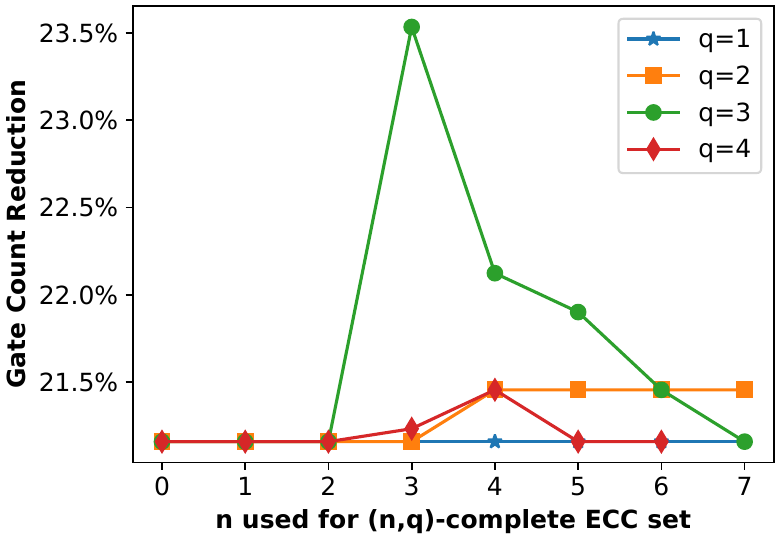}
\hfill
\includegraphics[width=0.48\linewidth]{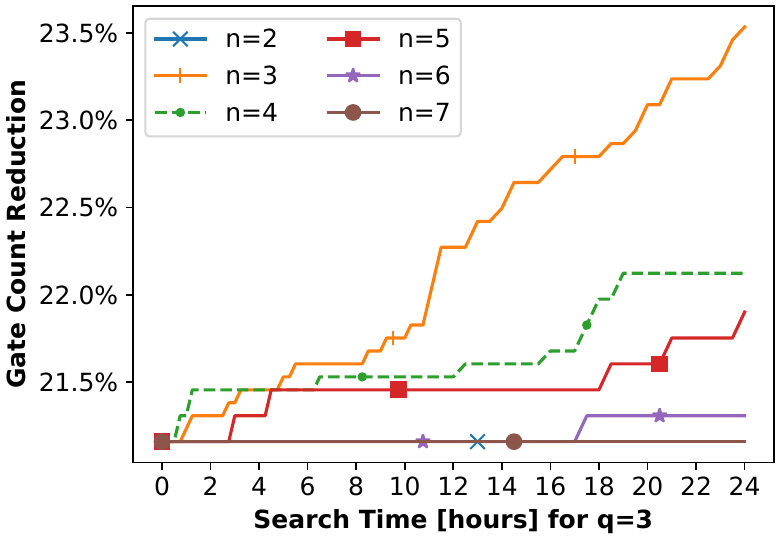}
\caption{\texttt{gf2\^{}10\_mult} (1347 gates).\vspace{1.5em}}
\label{fig:gf2^10_mult}
\end{figure*}

\begin{figure*}
\centering
\includegraphics[width=0.48\linewidth]{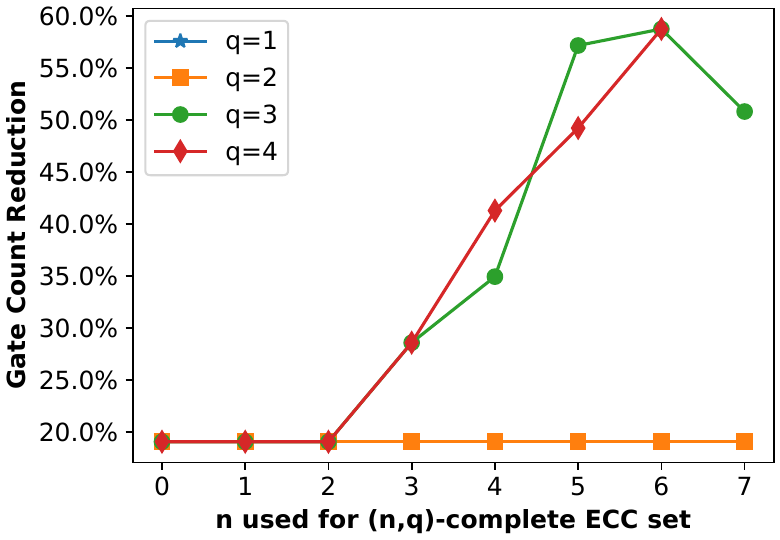}
\hfill
\includegraphics[width=0.48\linewidth]{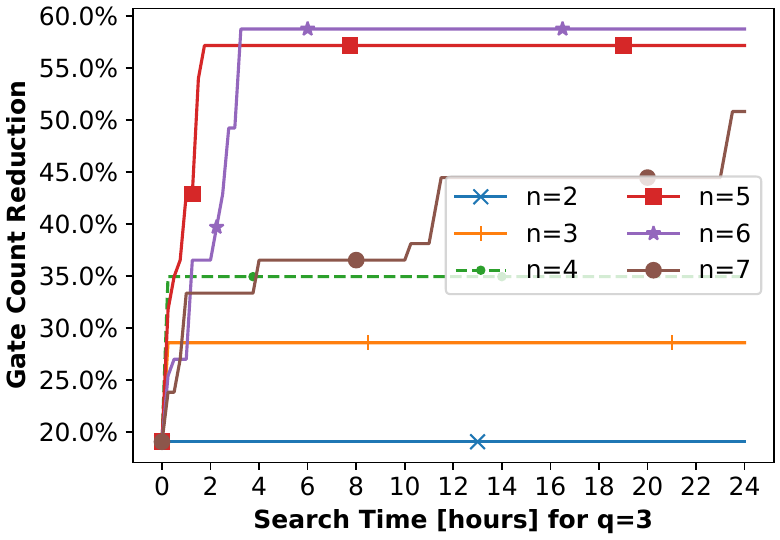}
\caption{\texttt{mod5\_4} (63 gates).\vspace{1.5em}}
\label{fig:mod5_4}
\end{figure*}

\begin{figure*}
\centering
\includegraphics[width=0.48\linewidth]{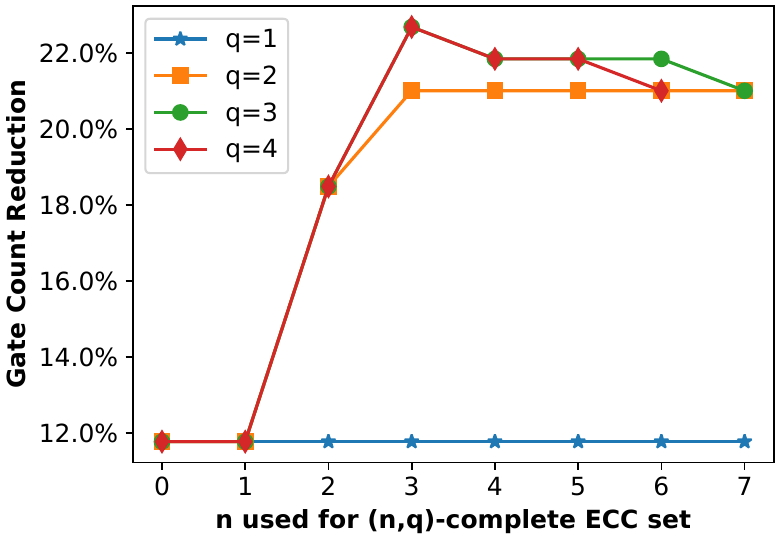}
\hfill
\includegraphics[width=0.48\linewidth]{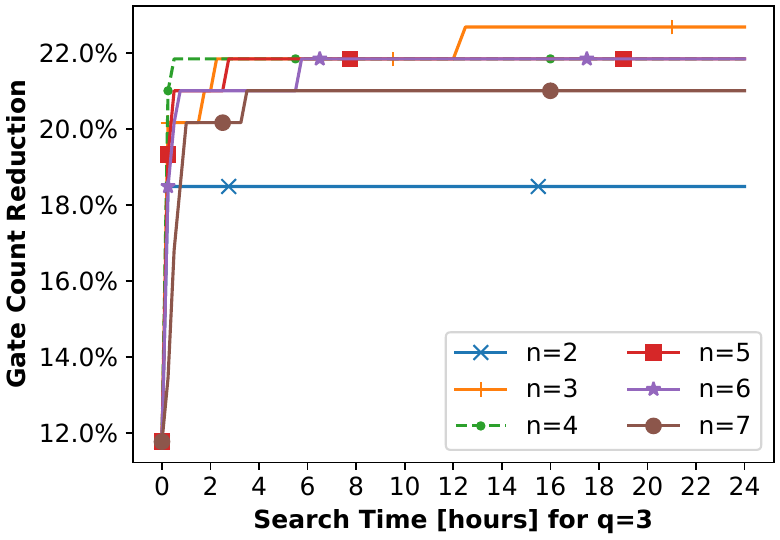}
\caption{\texttt{mod\_mult\_55} (119 gates).\vspace{1.5em}}
\label{fig:mod_mult_55}
\end{figure*}

\begin{figure*}
\centering
\includegraphics[width=0.48\linewidth]{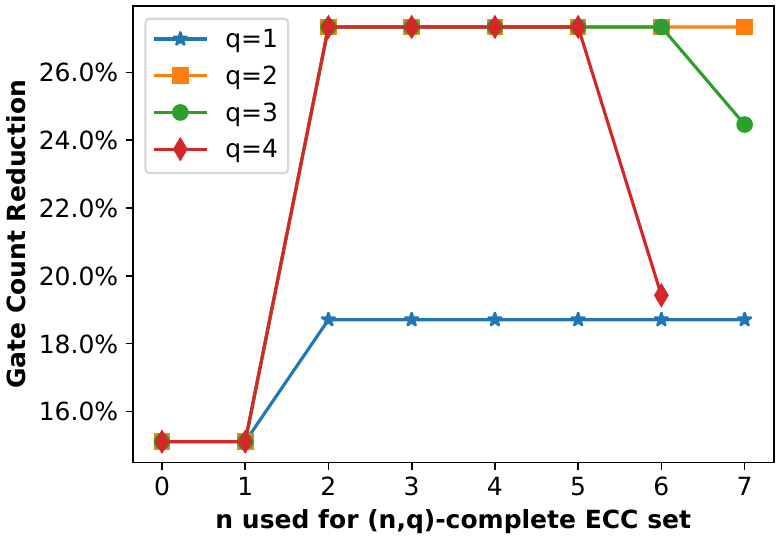}
\hfill
\includegraphics[width=0.48\linewidth]{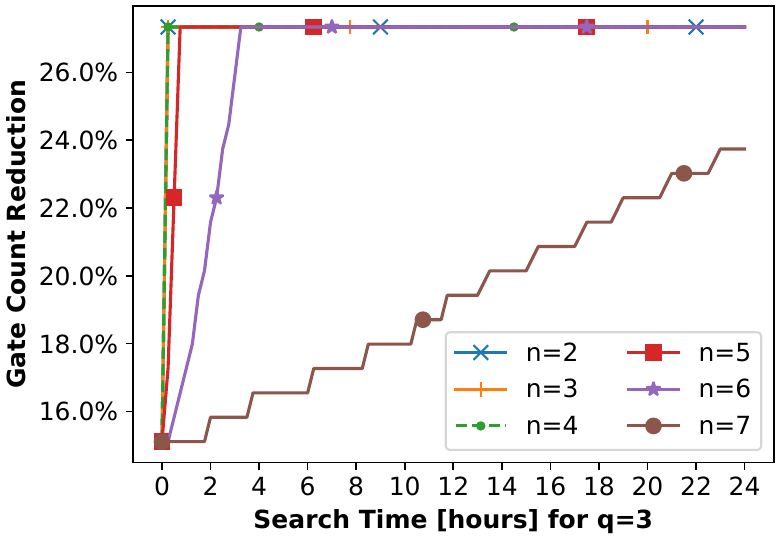}
\caption{\texttt{mod\_red\_21} (278 gates).\vspace{1.5em}}
\label{fig:mod_red_21}
\end{figure*}

\begin{figure*}
\centering
\includegraphics[width=0.48\linewidth]{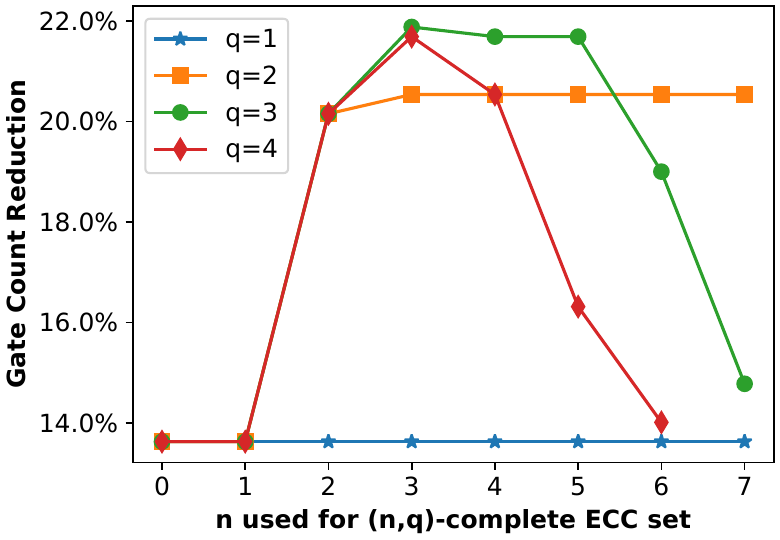}
\hfill
\includegraphics[width=0.48\linewidth]{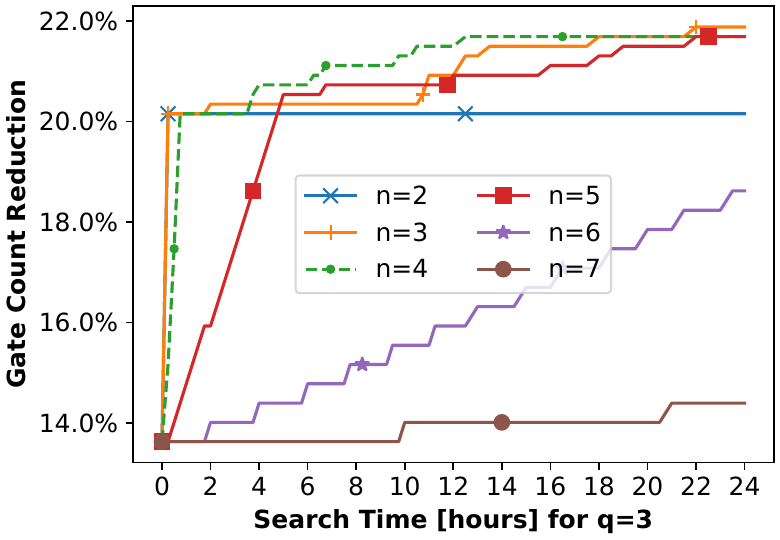}
\caption{\texttt{qcla\_adder\_10} (521 gates).\vspace{1.5em}}
\label{fig:qcla_adder_10}
\end{figure*}

\begin{figure*}
\centering
\includegraphics[width=0.48\linewidth]{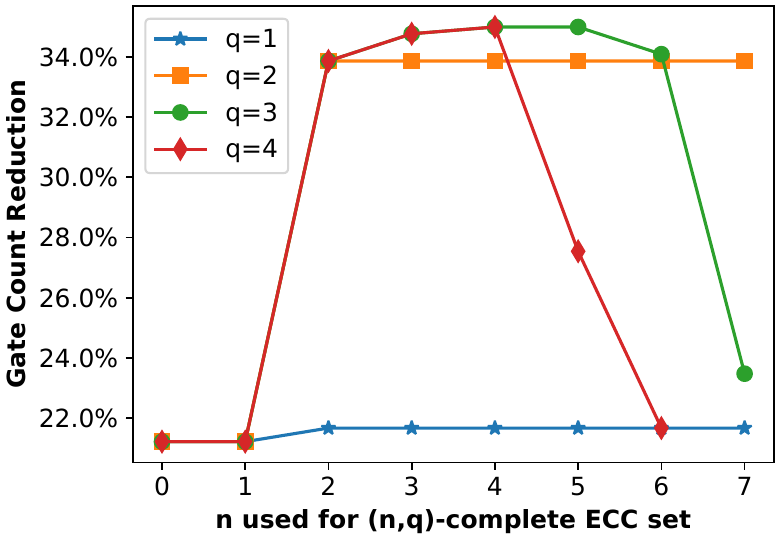}
\hfill
\includegraphics[width=0.48\linewidth]{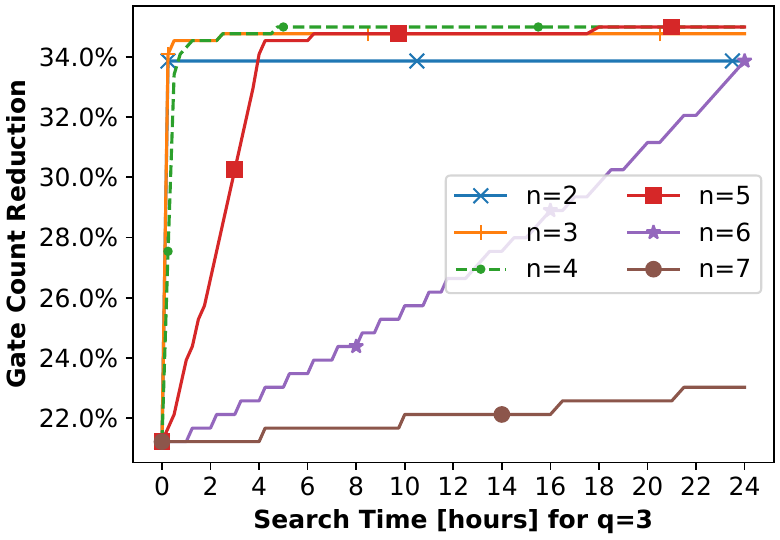}
\caption{\texttt{qcla\_com\_7} (443 gates).\vspace{1.5em}}
\label{fig:qcla_com_7}
\end{figure*}

\begin{figure*}
\centering
\includegraphics[width=0.48\linewidth]{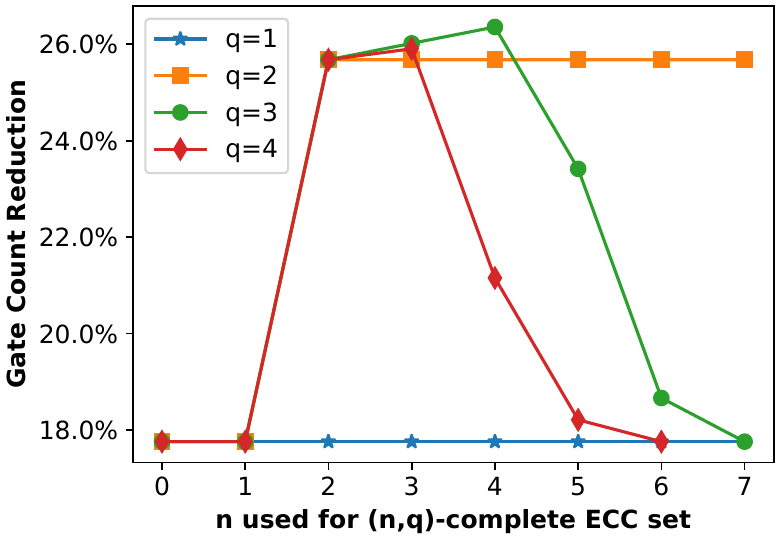}
\hfill
\includegraphics[width=0.48\linewidth]{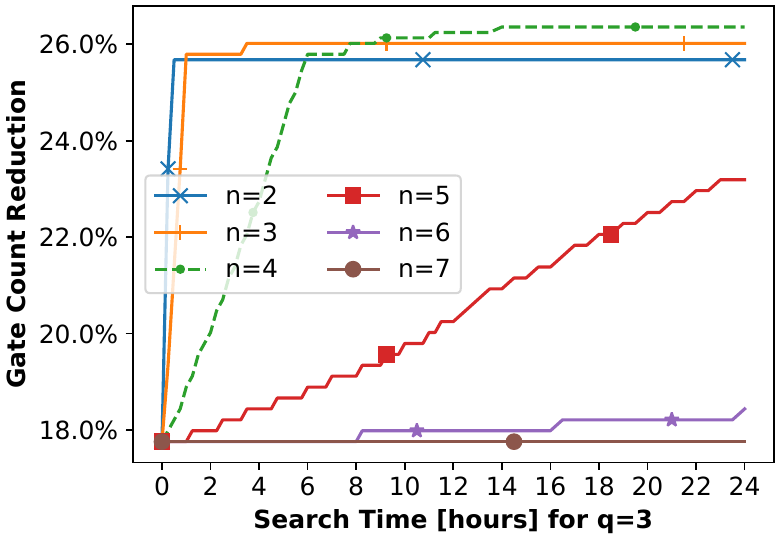}
\caption{\texttt{qcla\_mod\_7} (884 gates).\vspace{1.5em}}
\label{fig:qcla_mod_7}
\end{figure*}

\begin{figure*}
\centering
\includegraphics[width=0.48\linewidth]{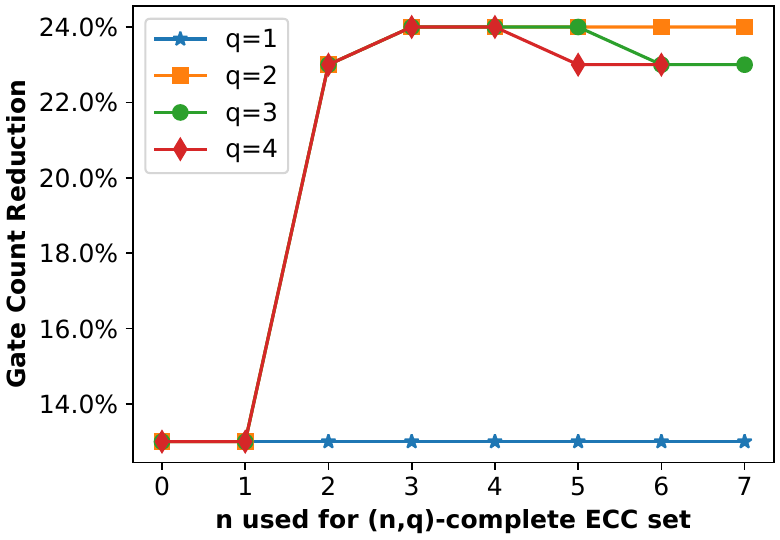}
\hfill
\includegraphics[width=0.48\linewidth]{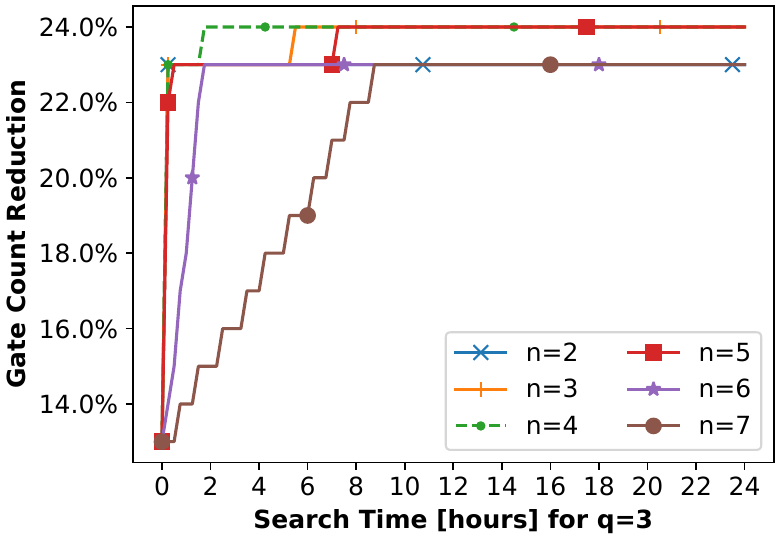}
\caption{\texttt{rc\_adder\_6} (200 gates).\vspace{1.5em}}
\label{fig:rc_adder_6}
\end{figure*}

\begin{figure*}
\centering
\includegraphics[width=0.48\linewidth]{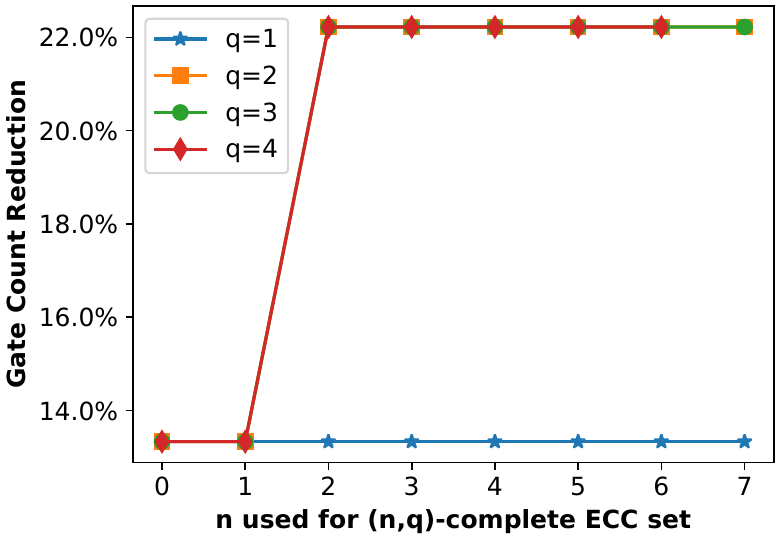}
\hfill
\includegraphics[width=0.48\linewidth]{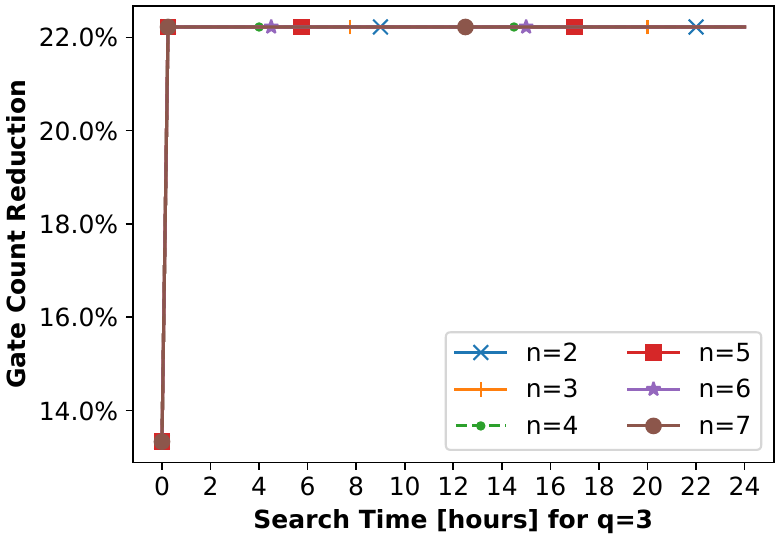}
\caption{\texttt{tof\_3} (45 gates).\vspace{1.5em}}
\label{fig:tof_3}
\end{figure*}

\begin{figure*}
\centering
\includegraphics[width=0.48\linewidth]{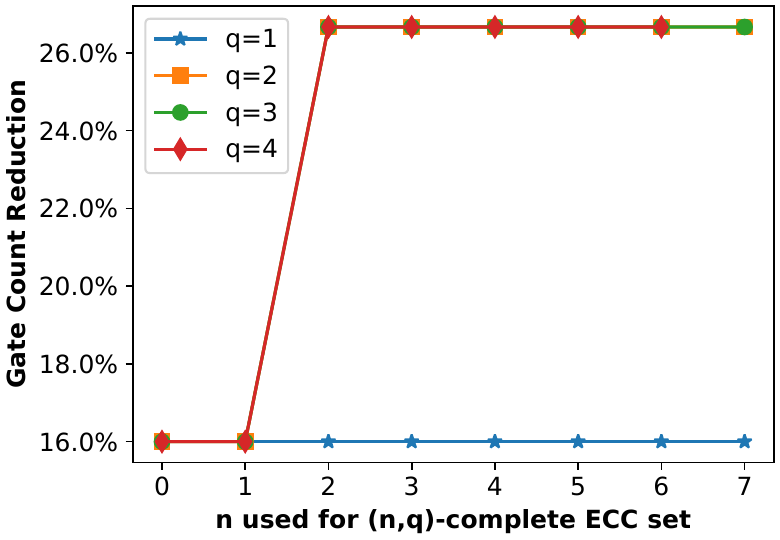}
\hfill
\includegraphics[width=0.48\linewidth]{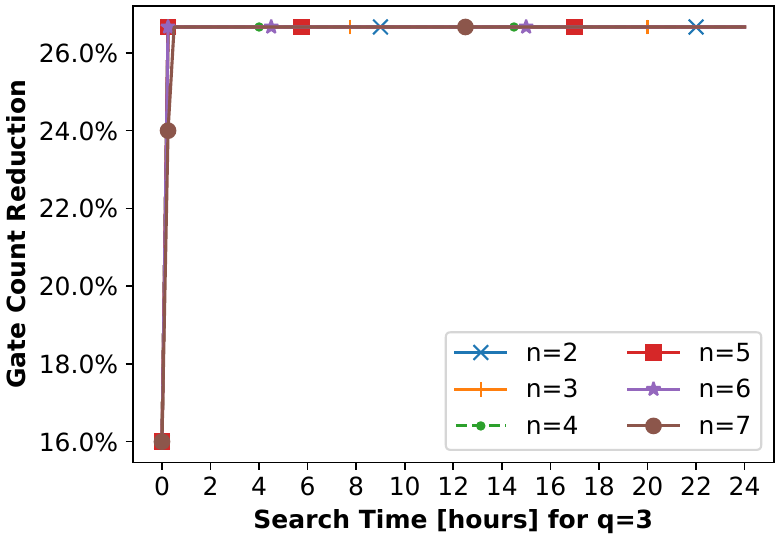}
\caption{\texttt{tof\_4} (75 gates).\vspace{1.5em}}
\label{fig:tof_4}
\end{figure*}

\begin{figure*}
\centering
\includegraphics[width=0.48\linewidth]{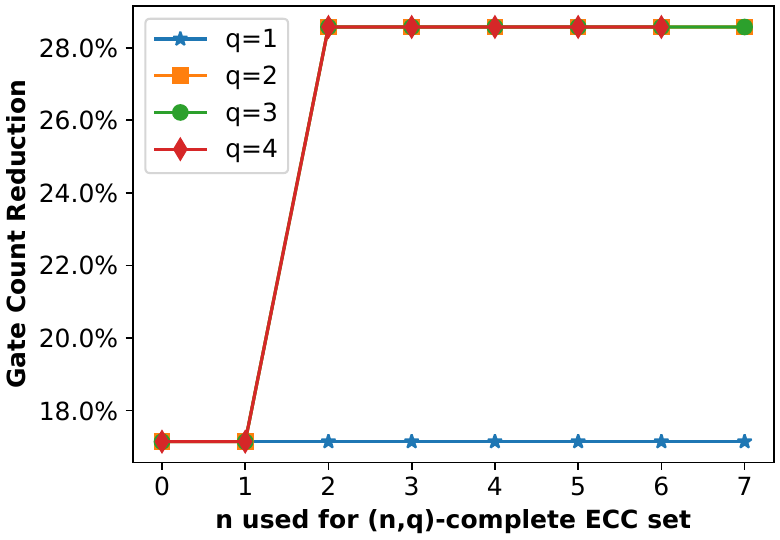}
\hfill
\includegraphics[width=0.48\linewidth]{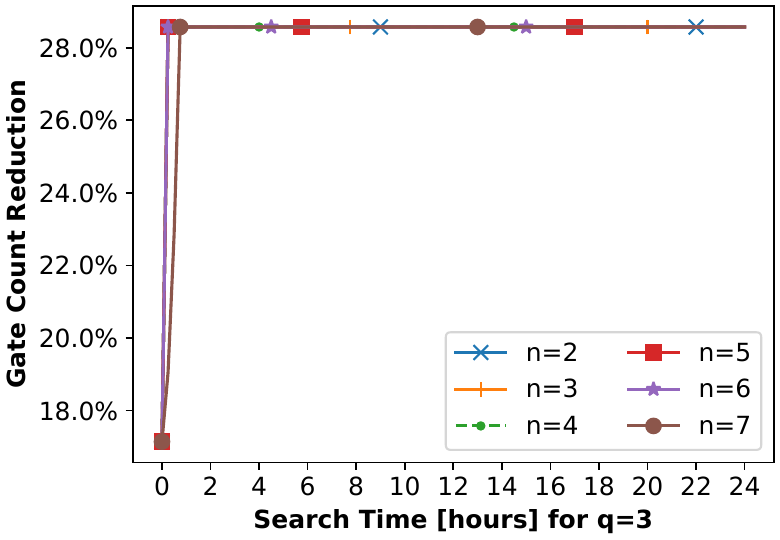}
\caption{\texttt{tof\_5} (105 gates).\vspace{1.5em}}
\label{fig:tof_5}
\end{figure*}

\begin{figure*}
\centering
\includegraphics[width=0.48\linewidth]{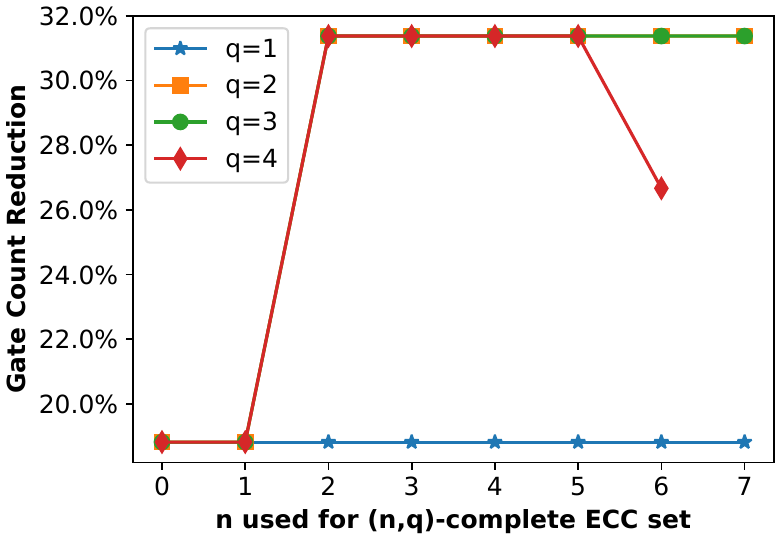}
\hfill
\includegraphics[width=0.48\linewidth]{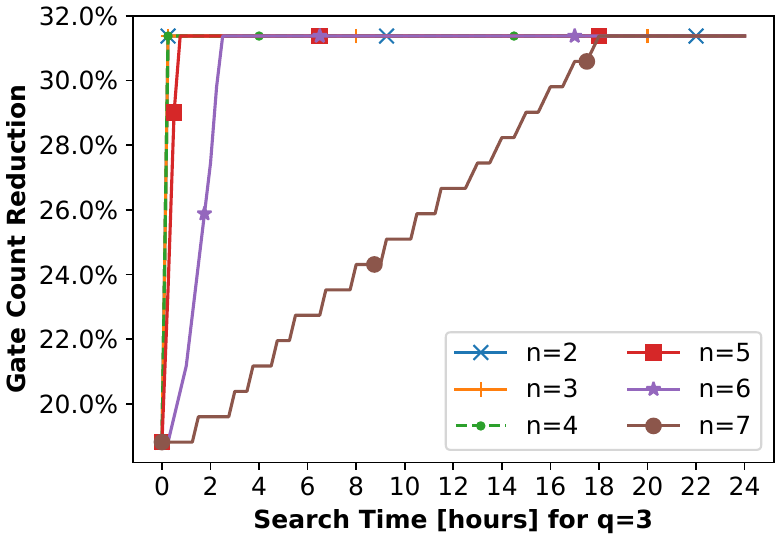}
\caption{\texttt{tof\_10} (255 gates).\vspace{1.5em}}
\label{fig:tof_10}
\end{figure*}

\begin{figure*}
\centering
\includegraphics[width=0.48\linewidth]{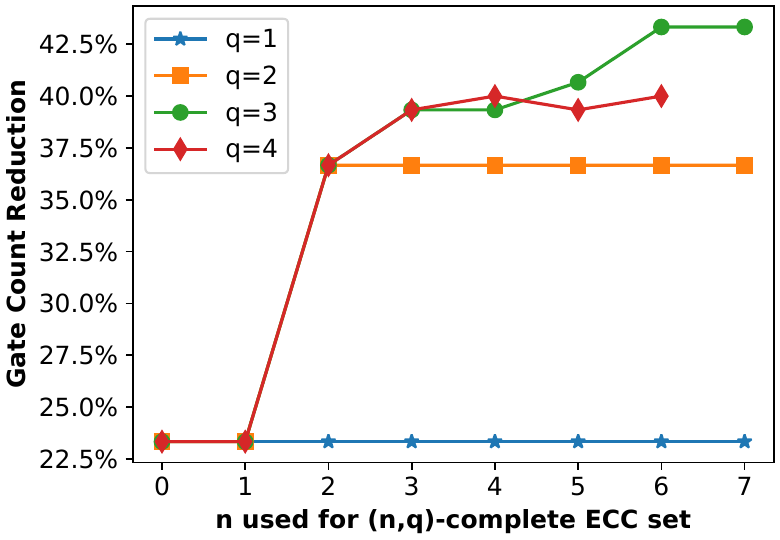}
\hfill
\includegraphics[width=0.48\linewidth]{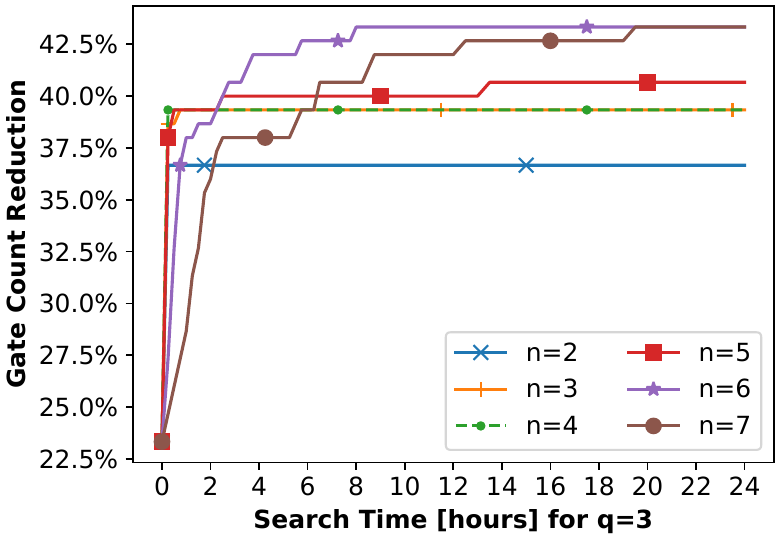}
\caption{\texttt{vbe\_adder\_3} (150 gates).\vspace{1.5em}}
\label{fig:vbe_adder_3}
\end{figure*}

\begin{figure*}
\centering
\includegraphics[width=\linewidth]{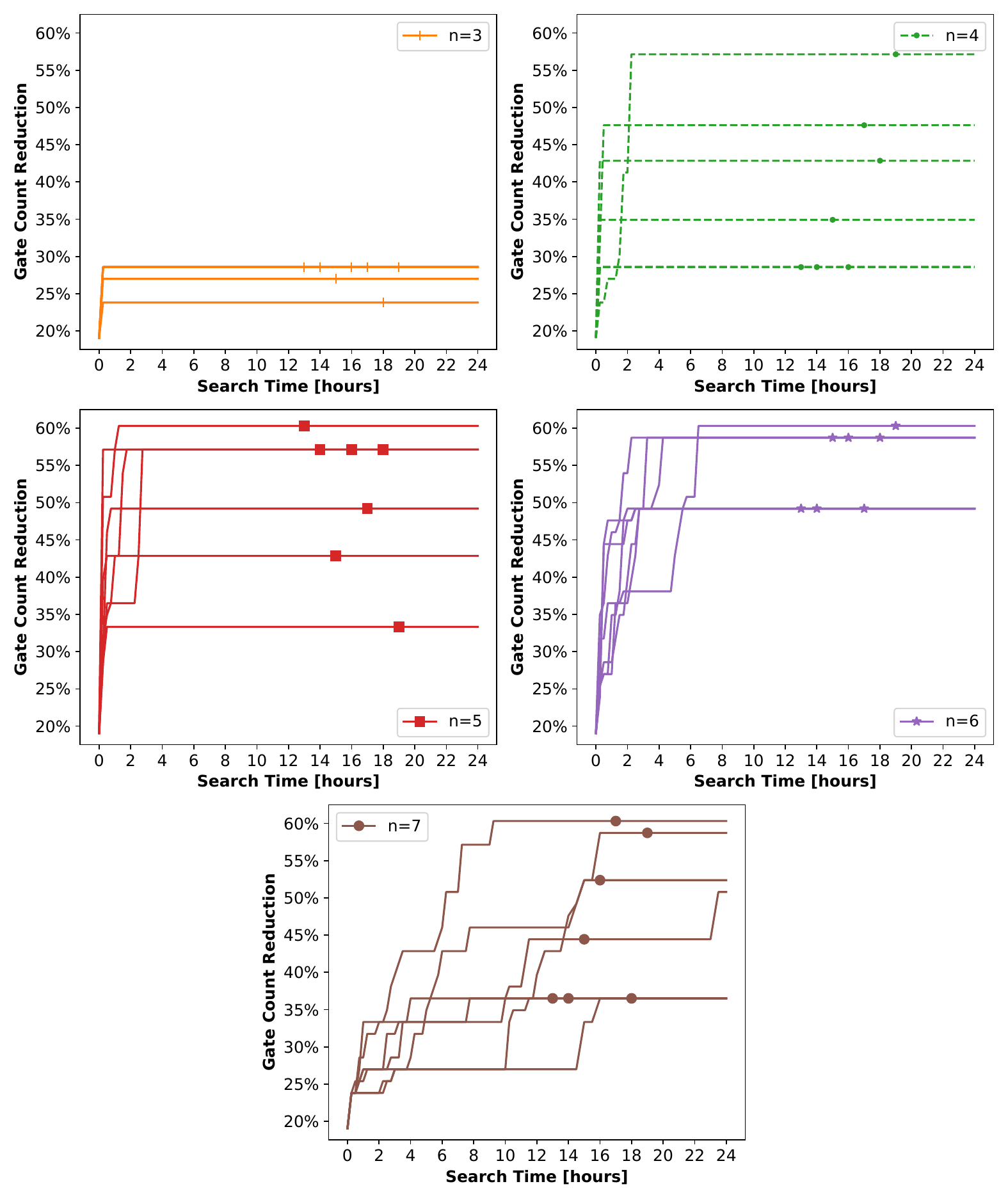}
\caption{7 runs of \texttt{mod5\_4} with a $(n, 3)$-complete ECC set for each $3 \le n \le 7$. Each marker denotes one run. For example, the three markers on the 57.1\%-reduction line of $n=5$ show three runs resulting in 27 gates after 24 hours.}
\label{fig:mod54:full}
\end{figure*}

\fi

\end{document}